\documentclass[submission,copyright,creativecommons]{eptcs}
\providecommand{\event}{NCL 2026} 

\usepackage{iftex}

\ifpdf
  \usepackage{underscore}         
  \usepackage[T1]{fontenc}        
\else
  \usepackage{breakurl}           
\fi

\newcommand{\D}{\mathfrak{D}}
\newcommand{\E}{\mathfrak{E}}
\usepackage{units}
\usepackage{amsthm,amsmath,amssymb,amsfonts}

  \newtheorem{definition}{\sc Definition}[section]
  \newtheorem{theorem}{\sc Theorem}[section]
  \newtheorem{lemma}{\sc Lemma}[section]
  \newtheorem{prop}{\sc Proposition}[section]
    \newtheorem{coro}{\sc Corollary}[section]
\newcommand{\bc}{\begin{center}}
	\newcommand{\ec}{\end{center}}
\usepackage{graphicx,hyperref}
\usepackage{bussproofs}
\usepackage{mathrsfs,multicol}
\usepackage{enumerate,longtable}
\usepackage{adjustbox}
\EnableBpAbbreviations
\newcommand{\A}{\mathscr{A}}
\newcommand{\ND}{\mathfrak{ND}}
\newcommand{\CNF}{{\bf CNF}}
\newcommand{\CNFi}{{\bf CNF}^\iota}
\newcommand{\INF}{{\bf INF}}
\newcommand{\INFi}{{\bf INF}^\iota}
\newcommand{\KNF}{{\bf NFK}_3}
\newcommand{\KNFi}{{\bf NFK}_3^\iota}
\newcommand{\KNFia}{{\bf NFK}_3^{\iota\ast}}

\newcommand{\Liota}{\mathcal{L}^\iota}
\newcommand{\Lbiota}{\mathcal{L}_\bot^\iota}

\newcommand{\io}{\iota xAx}
\newcommand{\Ex}[1]{\exists!#1}
\newcommand{\Ioa}{\io=a}
\newcommand{\NIoa}{\io\neq a}
\newcommand{\Bb}{Ab\leftrightarrow b=a}
\newcommand{\NBb}{\neg(Ab\leftrightarrow b=a)}
\newcommand{\ExNB}{\exists y\,\neg(Ay\leftrightarrow y=a)}

\title{Normalisation for a Neutral Free Logic with Definite Descriptions}
\author{Yaroslav Petrukhin
\institute{University of Lodz, Center for Philosophy of Nature,\\ Lindleya 3/5, 90-131, 
Lodz, Poland}
\email{yaroslav.petrukhin@gmail.com}
}
\def\titlerunning{Normalisation for a Neutral Free Logic with Definite Descriptions}
\def\authorrunning{Y. Petrukhin}
\begin{document}
\maketitle

\begin{abstract}
In this paper, we present a natural deduction system for a strong Kleene-based neutral free logic with definite descriptions. The system combines a three-valued propositional basis with free quantification, identity, existence, and rules for definite descriptions. The main result is a normalisation theorem for a proof-theoretically equivalent restricted formulation of the system; we also establish a corresponding modified subformula property for its fragment. \\
\textit{Keywords}: definite descriptions, neutral free logic, natural deduction, normalisation, many-valued logic, proof theory, subformula property
\end{abstract}

\section{Introduction}
Definite descriptions are expressions of the form `the $F$' usually formalised by means of $\iota$-term-forming operator as $\iota x F$. If `the $F$' stands for a unique existing object, a definite description is called proper, otherwise it is improper. For example, 
``the author of \emph{Of Mice and Men}'' is a proper definite description,  while ``the author of \emph{Tale of the Troika}'' (uniqueness fails) and  ``Rachmaninoff's violin concerto'' (no such work exists) are improper definite descriptions.

Free logic is designed for languages in which singular terms need not denote. This makes it a natural setting for the proof theory of definite descriptions and was one of the motives of its development  by Hintikka \cite{Hintikka59} and Lambert \cite{Lambert2,Lambert3,Lambert4}. In negative free logic, atomic formulas containing non-denoting terms are false (in contrast to positive free logic, where they still might be true).  K\"urbis \cite{KurbisNormNeg,KurbisNormPos} recently proved normalisation theorems for natural deduction systems of intuitionistic and classical negative and positive free logics, both without and with definite descriptions.\footnote{See also K\"urbis' papers regarding a binary quantifier for definite descriptions in natural deduction systems \cite{KurbisBinary19,KurbisBinary20,KurbisBinary21}.} In negative free logics with descriptions, he uses Tennant's \cite{Tennant78} rules for $\iota$ and shows how to modify the natural deduction system so that the normalisation proof can proceed by induction on the degree of maximal formulas. The main technical difficulty is that substituting a complex $\iota$-term for a parameter may create new maximal formulas of higher, or uncontrolled, degree. Following an observation of Indrzejczak \cite{IndrzejczakCutFree}, this is handled by using a general introduction rule for identity and by restricting the relevant instantiating terms to atomic terms.

The present paper extends this line of work to a neutral free logic based on Strong Kleene logic \cite{Kleene38} $\bf K_3$. In neutral free logic,  atomic formulas containing non-denoting terms are neither true nor false. Early formal treatments of truth-value gaps in free logic are due to van Fraassen \cite{vanFraassen66,vanFraassen69} and Woodruff \cite{Woodruff70,Woodruff84}. These works investigate both direct
truth-value-gap semantics and supervaluational ways of recovering
classical logical behaviour in the presence of non-denoting terms.
Neutral free logic was subsequently developed in detail by Lehmann
\cite{Lehmann94,Lehmann02}. The present system instead takes the
three-valued Strong Kleene logic ${\bf K}_3$ as its propositional
basis. The system studied here combines a three-valued
propositional basis with free quantification, identity, the existence predicate $\exists!$, and rules for definite descriptions. Therefore, it requires not only Tennant's \cite{Tennant78}  positive rules for descriptive identity $\iota xF=t$, but also rules for
negative descriptive identity $\iota xF\neq t$ and for failure of
descriptive definedness $\neg\exists!(\iota xF)$. The formula $\iota xF\neq t$ says that the description is not identical with a given term $t$. By contrast, $\neg\exists!(\iota xF)$ says that the description itself fails to denote. Thus, improper descriptions are
treated in two connected but distinct ways: relative to a given object and  by the failure of descriptive definedness. In free logic, definite descriptions are usually characterized by Lambert's law: $\forall z\bigl(\iota x A x = z \leftrightarrow
\forall y(Ay \leftrightarrow y=z)\bigr)$. As a derivable principle, one could obtain $\exists!(\iota xAx)\leftrightarrow \exists z(\iota xAx=z)$. The rules for $\iota xF\neq t$ and $\neg\exists!(\iota xF)$ are motivated by the negative Lambert-style principle $\forall z\bigl(\iota xAx\neq z \leftrightarrow\exists y\neg(Ay\leftrightarrow y=z)\bigr)$ and by the negative definedness principle $\neg\exists!(\iota xAx)\leftrightarrow\neg\exists z(\iota xAx=z)$. Proposition~\ref{Lambert} shows that these principles, together with their positive counterparts, are derivable in negative and neutral free logics.

Definite descriptions in strong and weak Kleene-based neutral free logics have previously been investigated in \cite{IndrzejczakPetrukhin24Bisequent} by means of cut-free bisequent calculi and three-valued semantics\footnote{A sequent calculus treatment of definite descriptions in positive and
negative free logics is given in
\cite{IndrzejczakFreeDefDesc20,IndrzejczakModalDescriptions18,IndrzejczakNegativeFreeLogics25,KurbisProofSemantics21,KurbisBinary22}.}. The present paper takes a different proof-theoretic route. It develops a natural deduction system and studies normalisation rather than cut admissibility. Moreover, in \cite{IndrzejczakPetrukhin24Bisequent}, neutral free logics are based on Pavlovi\'c and Gratzl's \cite{PavlovicGratzl23} neutral free logics, which treat the existence predicate $\exists!$ as two-valued; in particular, the law of excluded middle is valid for it. In our logic, the law of excluded middle for $\exists!$ is not postulated as a separate rule. We expect the corresponding semantics for $\exists!$ to be three-valued. 

Related recent work investigates second-order definite descriptions in second-order logic \cite{PetrukhinJelia} and positive free logic \cite{Petrukhin2026SO} and definite descriptions in a paradefinite Russellian setting \cite{Petrukhin2026PRL}. The present paper differs from both in treating descriptions as singular terms of a first-order neutral free logic and in its proof-theoretic aim: a natural deduction normalisation theorem requiring new rules and reductions for negative descriptive identity and failure of descriptive definedness.

The main result is a normalisation theorem for a proof-theoretically
equivalent restricted formulation of the system; we also establish a
corresponding modified subformula property for its fragment. The proof combines K\"urbis' \cite{KurbisNormNeg} treatment of definite descriptions in negative free logic with Zimmermann's \cite{Zimmermann} method for handling Peirce's rule in
classical natural deduction as well as K\"urbis and Petrukhin's \cite{KurbisPetrukhin21} normalisation proof for some many-valued logics. We formulate the rules for negative descriptive identity and negative descriptive definedness, proving the necessary restriction lemmas for atomic instantiating terms, and extending the reduction procedures to the new maximal formulas and segments.

\section{Natural deduction systems}
The language $\Lbiota$ of $\INFi$ and $\CNFi$ consists of free variables called parameters denoted by $a,b,c,a_1,\ldots$, bound variables $x,y,z,x_1,\ldots$, constants $k,k_1,\ldots$, predicate letters $P,Q,R,S,P_1,\ldots$, identity $=$, the existence
predicate $\exists !$, the connectives $\bot, \wedge, \vee,\to,$ and quantifiers $ \exists, \forall$. The formula $\neg A$ is defined as
$A\to\bot$ and the formula $A\leftrightarrow B$ as $(A\to B)\wedge (B\to A)$. The description operator $\iota$ combines a variable $x$ and a formula $F$ to form the singular term $\iota xF$, in which $x$ is bound by $\iota x$. The terms of the language are constants, parameters, and $\iota$-terms. Constants and parameters are atomic terms, whereas $\iota$-terms are complex. $A,B,C,D,E,F,G,A_1,\ldots$ range over formulas and $t,u,t_1,\ldots$ over terms. We write $A^x_t$ for the result of substituting the term $t$ for all free occurrences of the variable $x$ in $A$. The distinction between parameters and bound variables guarantees that such substitutions are always well-defined. Prime formulas are those formed by applying predicate letters, $\exists!$, or $=$ to parameters and constants. Atomic formulas are those formed by applying these expressions to arbitrary terms, including complex ones. The language $\Liota$ of $\KNFi$ is obtained from $\Lbiota$ by dropping $\bot$ and taking $\neg$ as primitive. We write $t\not =u$ for $\neg t=u$. A formula $\neg A$, where $A$ is prime, is called a negative
prime formula. A formula $\neg A$, where $A$ is atomic, is called a
negative atomic formula.

Deductions are defined in Gentzen-Prawitz-style as certain kinds of trees labelled with formulas. As K\"{u}rbis does \cite{KurbisNormNeg}, we follow Troelstra and Schwichtenberg's conventions for the discharge, or closing, of assumptions \cite[Sec.~2.1.1]{TroelstraSchwichtenberg}. Assumptions are grouped into assumption classes, and a rule closes all assumptions in a given class. Such assumptions are enclosed in square brackets and marked with a numerical label; the same label is written next to the inference at which the class is discharged. Empty assumption classes, allowing vacuous discharge, are permitted. $\D,\mathfrak{E},\D',\D_1,\ldots$ range over deductions. We write $\Gamma\vdash_{\ND} A$, if there is a deduction of $A$ in a natural deduction system $\ND$ from some
of the formulas in $\Gamma$ as open assumptions.

Tennant's \cite[Sec.~7.10]{Tennant78} natural deduction system for intuitionistic negative free logic, $\INF$, consists of all the rules displayed in Figure~\ref{fig:ND1}, with the exception of $(\bot E_C)$ and the rules for $\iota$. The system $\INFi$ \cite[Sec.~7.10]{Tennant78} is obtained from $\INF$ by adding the rules for $\iota$. The system $\CNF$ \cite{KurbisNormNeg} for classical negative free logic is obtained from $\INF$ by adding $(\bot E_C)$; here $(\bot E)$ is treated as the special vacuous case of $(\bot E_C)$. Finally, $\CNFi$ \cite{KurbisNormNeg} is obtained from $\CNF$ by adding the rules for $\iota$.

The natural deduction system $\KNF$ for neutral free logic consists of all the rules displayed in Figure~\ref{fig:ND1}, except $(\bot E)$, $(\bot E_C)$, and the rules for $\iota$, together with all the rules displayed in Figure~\ref{fig:ND2}, except the rules for $\iota$. The system $\KNFi$ is obtained from $\KNF$ by adding the rules for $\iota$. Equivalently, $\KNFi$ contains all the rules displayed in Figures~\ref{fig:ND1} and~\ref{fig:ND2}, with the exception of $(\bot E)$ and $(\bot E_C)$. 

As pointed out by K\"{u}rbis \cite[p. 243]{KurbisNormNeg}, ``(\textit{AD}), the rule of atomic denotation, is typical for negative free
logic: an atomic sentence can only be assertible or true if all terms that occur in it refer.'' In $\KNFi$, we have its negative counterpart, $(\mathit{AD}')$, used, e.g., in a four-valued free logic by Antunes and Rodrigues \cite{AntunesRodrigues25}. Following K\"urbis \cite[p.~244]{KurbisNormNeg}, we treat $(\iota E_3)$ separately from $(AD)$ as the denotation-extraction rule for positive descriptive identity. On the negative side, denotation is extracted from ordinary negative atomic formulas by $(AD')$, from inequalities between atomic terms by $(\neq D)$, and from negative descriptive identities by $(\neg\iota E_1)$. Following K\"urbis \cite{KurbisNormNeg}, we denote the introduction rule for identity by $(=I^n)$, where the superscript $n$ stands for negative free logic.

As follows from \cite{Priest02,KurbisPetrukhin21}, the $\{\neg,\wedge,\vee\}$-fragment of $\KNFi$ is the natural deduction system for the three-valued Strong Kleene logic $\mathbf{K}_3$ \cite{Kleene38}; the $\{\neg,\wedge,\vee,\to\}$-fragment of $\KNFi$ corresponds to an implicative extension of $\mathbf{K}_3$ studied in \cite{SlupeckiPComp,avronnaturallogics} and known as \textbf{PComp} due to \cite{Popov09}; the corresponding $\{\neg,\wedge,\vee\}$- and $\{\neg,\wedge,\vee,\to\}$-fragments without the rule of ex falso quodlibet (\textit{EFQ}) yield Anderson--Belnap--Dunn's logic $\mathbf{FDE}$ \cite{AndersonBelnap63,Belnap77Useful,Belnap77Computer,Dunn76}, First Degree Entailment, and its implicative extension known as $\mathbf{Par}$ \cite{Popov89} and $\mathbf{HBe}$ \cite{avronnaturallogics}. As follows from \cite{Prawitz,KamideWansing15}, the $\{\neg,\wedge,\vee,\to\}$-fragment of $\KNFi$ without (\textit{EFQ}) and Peirce’s rule ($P$) yields Almukdad and Nelson's \cite{AlmukdadNelson84} \textbf{N4}. The fragment of $\KNF$ without (\textit{EFQ}) and rules for implication is a fragment of Antunes and Rodrigues' \textbf{FFDE} \cite{AntunesRodrigues25} (their logic also has additional rules for $\exists !$ to make its behavior classical). This fragment with the rules for implication, except ($P$), is a fragment of \textbf{FN4} \cite{AntunesRodrigues25}. The fragment of $\KNF$ without (\textit{EFQ}), ($P$), and ($\not= I$) is $\bf N4^{NF}$ \cite{PetrukhinNelsonianDescriptions}.


\begin{lemma}\cite[Lemma 6]{KurbisNormNeg}\label{sym}
The following rule (\textit{sym}) $\frac{t_1=t_2}{t_2=t_1}$ is derivable.
\end{lemma}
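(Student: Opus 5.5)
The plan is the standard derivation of symmetry from reflexivity and the substitutivity of identicals, adapted to the negative/neutral setting, where reflexivity is only available for terms that denote. I use the identity rules of Figure~\ref{fig:ND1} as they are in K\"urbis' system~\cite{KurbisNormNeg}:
\begin{itemize}
\item $(=I^n)$, which yields $t=t$ from $\exists!t$;
\item $(=E)$, which from $t_1=t_2$ and $A^x_{t_1}$ yields $A^x_{t_2}$;
\item the atomic denotation rule $(\mathit{AD})$, which from an atomic formula yields $\exists!t$ for any term $t$ occurring in it.
\end{itemize}
In $\KNFi$ these rules are unchanged, so one derivation covers all the systems considered.

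The derivation has three steps.
\begin{enumerate}
\item Start from the assumption $t_1=t_2$ and apply $(\mathit{AD})$ to obtain $\exists!t_1$. If $t_1$ is an $\iota$-term and $(\mathit{AD})$ is restricted away from descriptive identities, use $(\iota E_3)$ instead; K\"urbis treats that rule as the denotation-extraction rule for $\iota xF=t$.
\item Apply $(=I^n)$ to $\exists!t_1$ to get $t_1=t_1$.
\item Apply $(=E)$ with major premise $t_1=t_2$ and the formula $A\equiv x=t_1$. Here $A^x_{t_1}$ is $t_1=t_1$ and $A^x_{t_2}$ is $t_2=t_1$, so the rule gives $t_2=t_1$.
\end{enumerate}
The only open assumption is $t_1=t_2$, so (\textit{sym}) is derivable.

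I expect the main obstacle to be bookkeeping rather than anything conceptual. First, if $(=I^n)$ is the general (Indrzejczak-style) form, it discharges an assumption $t=t$ in a minor deduction of an arbitrary $C$ under the premise $\exists!t$. In that case step~3 goes inside the minor deduction, applied to the assumption $[t_1=t_1]$, with $C\equiv t_2=t_1$, and the rule then concludes $t_2=t_1$.

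Second, if the rules restrict the instantiating terms to atomic ones, the case where $t_1$ or $t_2$ is an $\iota$-term needs separate treatment. In that case I would make sure that $(=E)$ permits substitution of arbitrary terms, or else route step~1 through $(\iota E_3)$ as noted above. I would check this case against Figure~\ref{fig:ND1} before committing to the derivation.
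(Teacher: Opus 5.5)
Your main derivation is correct: apply $(\mathit{AD})$ to $t_1=t_2$ to get $\exists!t_1$, then $(=I^n)$ to get $t_1=t_1$, then $(=E)$ with $\A\equiv x=t_1$. This is the standard argument behind K\"urbis' Lemma~6, which the paper cites without giving a proof of its own.

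Your fallbacks are not needed, because $(\mathit{AD})$ in Figure~\ref{fig:ND1} applies to identities between arbitrary terms (the paper itself uses it to extract $\exists!(\iota xF)$), and $(=E)$ is never restricted to atomic terms. Two of them are also misstated:
\begin{itemize}
\item $(\iota E_3)$ applied to $\iota xF=t_2$ yields only $\exists!t_2$, not $\exists!(\iota xF)$, so it could not replace $(\mathit{AD})$ when $t_1$ is an $\iota$-term.
\item $(=I^{nG})$ discharges $a=t$ for a fresh $a$, not $t=t$. So in $\KNFia$ you would first derive $t_1=t_1$ by $(=E)$ inside its minor deduction, or simply invoke Lemma~\ref{Ider}.
\end{itemize}
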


\begin{lemma}\label{notsym}
If $t_1$ and $t_2$ are atomic terms, the following rule ($\not=$\textit{sym}) $\frac{t_1\not =t_2}{t_2\not =t_1}$ is derivable.    
\end{lemma}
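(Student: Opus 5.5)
We cannot see Figures 1 and 2, but the paper tells us which rules are available. There is the identity introduction rule $(=I^n)$ and the rule $(\neq I)$ for introducing inequalities. The rule $(\neq D)$ extracts denotation from inequalities between atomic terms, presumably giving $\exists! t_1$ and $\exists! t_2$ from $t_1\neq t_2$. Finally, there are the ${\bf K}_3$-style negation rules, including Peirce's rule $(P)$ and EFQ. The plan is to mimic the classical derivation of symmetry of inequality, using Lemma~\ref{sym}, but to do it carefully in the three-valued setting, where reductio ad absurdum is not available.

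The route is as follows. Assume $t_1\neq t_2$.
\begin{enumerate}
\item Apply $(\neq D)$ to obtain $\exists! t_1$ and $\exists! t_2$. This is where atomicity is needed, since $(\neq D)$ is stated only for atomic terms; for $\iota$-terms the rule $(\neg\iota E_1)$ is different.
\item Try to conclude $t_2\neq t_1$ directly by $(\neq I)$. My guess is that $(\neq I)$ has a form like: from $\exists! t$ and $\exists! u$, together with a derivation of $\bot$-like contradiction (in $\KNFi$, of an arbitrary formula via EFQ) from the assumption $t=u$, infer $t\neq u$. Its premises are then obtained as follows.
\begin{enumerate}
\item Assume $[t_2=t_1]$.
\item Apply (\emph{sym}) from Lemma~\ref{sym} to get $t_1=t_2$.
\item Together with the open assumption $t_1\neq t_2$, EFQ yields anything required.
\item Discharge $t_2=t_1$ via $(\neq I)$, using the existence premises from step 1.
\end{enumerate}
\end{enumerate}

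If $(\neq I)$ instead takes the form ``from $\exists! t$, $\exists! u$ and the failure of identity, infer $t\neq u$'', I would use a substitution-based argument. Replacement of identicals (the $(=E)$ rule) applied to $t_1=t_2$ lets us rewrite inside $\neg t_1 = t_2$. The trick is to take the formula $\neg\, t_1 = x$ and rewrite along the identity to get $\neg\, t_2=t_2$, which contradicts reflexivity. Reflexivity is available from $(=I^n)$ given $\exists! t_2$, which step 1 supplies.

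The main obstacle is knowing the exact shape of $(\neq I)$ and $(=E)$. In particular, $(=E)$ in Kürbis' system replaces inside atomic formulas only, so substitution into the negated formula $\neg t_1=x$ may not be directly licensed. If it is not, I would instead use $(P)$ or a case split on the ${\bf K}_3$ negation rules to bridge the gap. The existence premises supplied by $(\neq D)$ are exactly what makes $(=I^n)$ and $(\neq I)$ applicable, and their availability is the reason the lemma is restricted to atomic terms.
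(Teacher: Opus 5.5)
\textbf{There is a genuine gap.} The gap is in step 2. $\KNFi$ has no rule that discharges an assumption $t_2=t_1$ and concludes $t_2\neq t_1$ once a contradiction has been reached. Your guessed form of $(\neq I)$ is reductio with existence side-premises, and you note yourself that reductio is not available in the $\mathbf{K}_3$ setting. The actual rule is a direct, Leibniz-style introduction:
\[
\dfrac{\A^x_{t_1}\quad \neg\A^x_{t_2}}{t_1\neq t_2},
\]
with $\A$ atomic, and with $t_2$ atomic when the principal symbol of $\A$ is $=$. Your fallback has the same defect. Rewriting inside $\neg t_1=x$ needs a hypothetical identity between $t_1$ and $t_2$, which would still have to be discharged into a negation. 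The rewriting step itself would also need $(=E')$ rather than $(=E)$. Finally, ``use $(P)$ or a case split'' is not a derivation: $(P)$ only concludes a formula already derived under the extra hypothesis, so it cannot manufacture a negation from a contradiction.

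You do have the right ingredients: $(\neq D)$, reflexivity via $(=I^n)$, and the contrast between $t_2=t_2$ and the premise. You use them as the two halves of a contradiction, but they should be the two premises of $(\neq I)$. Take $\A := x=t_2$. Then $\A^x_{t_2}$ is $t_2=t_2$ and $\neg\A^x_{t_1}$ is $t_1\neq t_2$. The derivation runs in three steps:
\begin{enumerate}
\item From $t_1\neq t_2$, obtain $\exists!t_2$ by $(\neq D)$.
\item From $\exists!t_2$, obtain $t_2=t_2$ by $(=I^n)$.
\item From $t_2=t_2$ and $t_1\neq t_2$, obtain $t_2\neq t_1$ by $(\neq I)$.
\end{enumerate}
This is exactly the paper's combination of $(\neq D)$, $(=I^n)$ and $(\neq I)$. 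Atomicity is used twice, not only in $(\neq D)$ as you suggest. Step 1 needs $t_2$ atomic. Step 3 needs $t_1$ atomic, because $t_1$ occupies the negated position of an $=$-instance of $(\neq I)$.
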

\begin{proof}
Use the rules $(\not =D)$, $(=I^n)$, and $(\neq I)$.
%
%
\end{proof}
 
\begin{prop}\label{Lambert}
    $\CNFi$ and $\KNFi$ prove the following formulas:
\begin{enumerate}[$(1)$]\itemsep=0pt
\begin{multicols}{2}
 \item $\forall z\bigl(\iota x A x = z \leftrightarrow
\forall y(Ay \leftrightarrow y=z)\bigr)$
    \item $\forall z\bigl(\iota x A x \not= z \leftrightarrow
\exists y \neg(Ay \leftrightarrow y=z)\bigr)$ 
\item $\exists!(\iota xAx)\leftrightarrow \exists z(\iota xAx=z)$
\item $
\neg\exists!(\iota xAx)\leftrightarrow \neg\exists z(\iota xAx=z)
$
\end{multicols}
\end{enumerate}    
\end{prop}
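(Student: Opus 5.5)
We construct, for each of the four formulas, an explicit derivation in each system. The figures with the rules are not reproduced above, so the plan relies on the rule names mentioned in the text. These are Tennant's rules $(\iota I)$ and $(\iota E_1)$--$(\iota E_3)$ for positive descriptive identity, and the rules for $\iota xF\neq t$ and $\neg\exists!(\iota xF)$ (in particular $(\neg\iota E_1)$ and its companions). The paper says the negative rules are motivated precisely by principles (2) and (4). So we expect each biconditional to be essentially a two-line consequence of the corresponding introduction and elimination rules, packaged with $(\to I)$, $(\wedge I)$, $(\forall I)$ and $(\exists I/E)$.

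For (1) we fix a fresh parameter $a$ and prove both directions. Left to right: assume $\iota xAx=a$, take a fresh $b$ with $\exists!b$, and use Tennant's elimination rules. These give $Ab\to b=a$ (from $Ab$ infer $b=a$) and $b=a\to Ab$ (from $b=a$, with $Aa$ obtained by $(\iota E)$, then $=E$ and \emph{sym}, Lemma~\ref{sym}). Then apply $(\forall I)$. Right to left: from $\forall y(Ay\leftrightarrow y=a)$ obtain $Aa$ via $a=a$ by $(=I^n)$ using $\exists!a$, and obtain uniqueness. Then $(\iota I)$ yields $\iota xAx=a$. Finally discharge and apply $(\forall I)$ using $\exists!a$. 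In $\CNFi$ everything is as in K\"urbis; in $\KNFi$ the same derivation goes through because it uses no negation rules.

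For (3), left to right: from $\exists!(\iota xAx)$ use $(=I^n)$ to get $\iota xAx=\iota xAx$ and then $(\exists I)$. However, $(\exists I)$ needs an atomic witness, so instead we extract a parameter: in $\KNFi$/$\CNFi$ the rules for $\exists!$ of descriptions should give $\exists z(\iota xAx=z)$ or $\exists!$-elimination directly. Otherwise we use Lambert-style reasoning via (1) together with $(\iota E_3)$. Right to left: from $\iota xAx=a$ use $(\iota E_3)$/$(AD)$ to get $\exists!(\iota xAx)$.

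For (2) and (4) in $\KNFi$, we use the negative rules directly. For (2): from $\iota xAx\neq a$, $(\neg\iota E_1)$ and its companion elimination produce $\exists y\neg(Ay\leftrightarrow y=a)$. Conversely, the negative descriptive introduction rule yields $\iota xAx\neq a$. For (4), the rules for $\neg\exists!(\iota xF)$ give both directions, using $(\neg\exists I/E)$ as primitive $\KNFi$ rules for negated quantifiers. In $\CNFi$, where $\neg$ is $\to\bot$, (4) follows from (3) by contraposition. For (2) in $\CNFi$ we combine (1) with classical reasoning: $\neg\forall\leftrightarrow\exists\neg$ needs $(\bot E_C)$ and the existence assumption $\exists!a$ for the bound instance.

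The main obstacle is the $\CNFi$ direction of (2) from right to left, and more generally the interaction of negation with non-denoting terms. In classical negative free logic, $\iota xAx\neq a$ holds also when $a$ fails to denote, whereas $\exists y$ requires denoting witnesses; the quantifier $\forall z$ supplies $\exists!a$, which resolves this. Similarly, in $\KNFi$ we must check that Lemma~\ref{notsym} and $(\neq D)$ supply the existence facts needed by the negative $\iota$-rules. We would first try to derive each case by the direct rule pairs, and fall back on classical case analysis on $\exists!(\iota xAx)$ in $\CNFi$.
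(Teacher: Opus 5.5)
Your plan for (4), and for (2) in $\KNFi$, matches the paper's route: the $\neg\exists!$-rules paired with $(\neg\exists I)$ and $(\neg\exists E)$, and the negative $\iota$-rules used directly. Your argument for (3), left to right, however, goes wrong. You first write down exactly the paper's derivation: $(=I^n)$ gives $\iota xAx=\iota xAx$ from $\exists!(\iota xAx)$, and $(\exists I)$ with witness $\iota xAx$ and the same existence premise gives $\exists z(\iota xAx=z)$. You then discard it because ``$(\exists I)$ needs an atomic witness''. That restriction is imposed only in the auxiliary system $\KNFia$ set up for normalisation. In $\CNFi$ and $\KNFi$ the instantiating term of $(\exists I)$ is arbitrary, so the derivation you discarded is the correct one.

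The replacements you offer do not work. No rule extracts $\exists z(\iota xAx=z)$, or an identity $\iota xAx=a$ with atomic $a$, from $\exists!(\iota xAx)$; the only rules specific to $\exists!$ of a description concern $\neg\exists!(\iota xF)$. Both $(\iota E_3)$ and the instance of (1) at a parameter $a$ need $\iota xAx=a$, or $\forall y(Ay\leftrightarrow y=a)$, as input. That is precisely what is missing, so the ``Lambert-style'' route is circular. Were an atomic witness really required, as in $\KNFia$, the remedy is $(=I^{nG})$: under $[b=\iota xAx]$ obtain $\exists!b$ by $(AD)$ and $\iota xAx=b$ by (\textit{sym}), apply $(\exists I)$ with witness $b$, and discharge against the premise $\exists!(\iota xAx)$. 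In the converse direction, $(\iota E_3)$ applied to $\iota xAx=a$ yields $\exists!a$, not $\exists!(\iota xAx)$. It is $(AD)$ on the left-hand term that gives the latter; the paper uses (\textit{sym}) and $(=E)$ on the assumption $\exists!a$ instead.

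There are also smaller slips. In (1), $(=E)$ is restricted to atomic formulas, so ``obtain $Aa$, then $(=E)$ and (\textit{sym})'' fails for complex $A$. Instead, $(\iota E_1)$ with $u=b$ gives $Ab$ directly from $\iota xAx=a$ and $b=a$. In (2) for $\KNFi$, $(\neg\iota E_1)$ only yields $\exists!a$. The work is done by the gel-rule $(\neg\iota E_2)$, whose two cases are closed by ($\neq$\textit{sym}) or (\textit{sym}), $(\neg{\to}I)$, $(\neg\wedge I_1)$ or $(\neg\wedge I_2)$, and $(\exists I)$. The converse needs $(\exists E)$, a case split by $(\neg\wedge E)$, the rules $(\neg{\to}E_1)$ and $(\neg{\to}E_2)$, and both $(\neg\iota I_1)$ and $(\neg\iota I_2)$, with $\exists!a$ supplied by $(\forall I)$. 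These primitive negation rules are indispensable. With primitive $\neg$ the step from $A\leftrightarrow B$ to $\neg A\leftrightarrow\neg B$ is not available, so (2) cannot be read off from (1) in $\KNFi$.

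In $\CNFi$, by contrast, your route for (2) is valid and genuinely different from the paper's. You instantiate (1) at $a$ using $\exists!a$, contrapose, and apply the classical free-logic equivalence $\neg\forall yB\leftrightarrow\exists y\neg B$. The paper inlines these steps: it feeds the $(\bot E_C)$-arguments straight into the minor premises of $(\iota I)$, and conversely rebuilds $Ab\leftrightarrow b=a$ from $\iota xAx=a$ by $(\iota E_1)$ and $(\iota E_2)$. Your version is more modular but presupposes a proof of (1), which the paper does not supply.
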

\begin{proof}
As an example, we prove (2), (3), and (4). Recall that in $\CNFi$, $\neg A$ is $A\to \bot$. 

(2) Let us prove $\forall z\bigl(\iota x A x \not= z \leftrightarrow
\exists y \neg(Ay \leftrightarrow y = z)\bigr)$ in $\CNFi$. 
Let $\D_4$ be the following deduction:
{\small\begin{prooftree}
    \AxiomC{$[\NBb]^4$}
\AxiomC{$[b=a]^3$}
\RL{$(AD)$}
\UnaryInfC{$\Ex b$}
\RL{$(\exists I)$}
\BinaryInfC{$\ExNB$}

\AxiomC{$[\neg\ExNB]^2$}
\LL{$({\to} E)$}
\BinaryInfC{$\bot$}
\RL{$(\bot E_C)^4$}
\UnaryInfC{$\Bb$}
\RL{$(\wedge E_2)$}
\UnaryInfC{$b=a\to Ab$}

\AxiomC{$[b=a]^3$}
\LL{$({\to} E)$}
\BinaryInfC{$Ab$}
\end{prooftree}}

Let $\D_5$ be the following deduction:
{\small\begin{prooftree}
\setlength{\topsep}{2pt}
\renewcommand{\proofSkipAmount}{\vskip0pt}
   \AxiomC{$[\NBb]^5$}
\AxiomC{$[\Ex b]^7$}
\RightLabel{$(\exists I)$}
\BinaryInfC{$\ExNB$}
\AxiomC{$[\neg\ExNB]^2$}
\LL{$(\to E)$}
\BinaryInfC{$\bot$}
\RightLabel{$(\bot E_C)^5$}
\UnaryInfC{$\Bb$}
\RightLabel{$(\wedge E_1)$}
\UnaryInfC{$Ab\to b=a$}
\AxiomC{$[Ab]^6$}
\LL{$({\to} E)$}
\BinaryInfC{$b=a$} 
\end{prooftree}}

Let $\D_6$ be the following deduction:
{\small\begin{prooftree}
\setlength{\topsep}{2pt}
\renewcommand{\proofSkipAmount}{\vskip0pt}
    \AxiomC{$[\Ioa]^9$}
\AxiomC{$[Ab]^{12}$}
\AxiomC{$[\Ex b]^{11}$}
\RightLabel{$(\iota E_2)$}
\TrinaryInfC{$b=a$}
\RightLabel{$({\to} I)^{12}$}
\UnaryInfC{$Ab\to b=a$}
\AxiomC{$[\Ioa]^9$}
\AxiomC{$[b=a]^{13}$}
\RightLabel{$(\iota E_1)$}
\BinaryInfC{$Ab$}
\RightLabel{$({\to} I)^{13}$}
\UnaryInfC{$b=a\to Ab$}
\RightLabel{$(\wedge I)$}
\BinaryInfC{$\Bb$}
\end{prooftree}}

Finally, we get the following deduction:
{\footnotesize \begin{prooftree}
\AxiomC{$[\Ex a]^0$}
\AXC{$\D_4$}
\AXC{$\D_5$}
\RightLabel{$(\iota I)^{3,6,7}$}
\TrinaryInfC{$\Ioa$}
\AxiomC{$[\NIoa]^1$}
\LL{$({\to} E)$}
\BinaryInfC{$\bot$}
\RightLabel{$(\bot E_C)^2$}
\UnaryInfC{$\ExNB$}
\RightLabel{$({\to} I)^1$}
\UnaryInfC{$\NIoa\to\ExNB$}
\AxiomC{$[\ExNB]^8$}
\AXC{$\D_6$}
\AxiomC{$[\NBb]^{10}$}
\RL{$({\to} E)$}
\BinaryInfC{$\bot$}
\RightLabel{$(\exists E)^{10,11}$}
\BinaryInfC{$\bot$}
\RightLabel{$(\bot E_C)^9$}
\UnaryInfC{$\NIoa$}
\RightLabel{$(\to I)^8$}
\UnaryInfC{$\ExNB\to\NIoa$}
\RightLabel{$(\wedge I)$, $\leftrightarrow\mathrm{def}$}
\BinaryInfC{$\NIoa\leftrightarrow\ExNB$}
\RightLabel{$(\forall I)^0$}
\UnaryInfC{$
\forall z\bigl(\iota xAx\neq z
\leftrightarrow
\exists y\,\neg(Ay\leftrightarrow y=z)\bigr)
$}
\end{prooftree}}

Let us prove
$\forall z\bigl(\iota x A x \not= z \leftrightarrow
\exists y \neg(Ay \leftrightarrow y = z)\bigr)$ in $\KNFi$.
Let $d$ be a fresh parameter. Let $\D_7$ stand for the following deduction:
{\footnotesize
\begin{prooftree}
\AxiomC{$[\iota xAx\neq d]^1$}
\AxiomC{$[Aa]^3$}
\AxiomC{$[d\neq a]^4$}
\RightLabel{$(\neq sym)$}
\UnaryInfC{$a\neq d$}
\RightLabel{$(\neg{\to}I)$}
\BinaryInfC{$\neg(Aa\to a=d)$}
\RightLabel{$(\neg{\wedge}I_1)$}
\UnaryInfC{$\neg(Aa\leftrightarrow a=d)$}
\AxiomC{$[\exists!a]^2$}
\LL{$(\exists I)$}
\BinaryInfC{$\exists y\,\neg(Ay\leftrightarrow y=d)$}
\AxiomC{$[d=b]^7$}
\RightLabel{$(sym)$}
\UnaryInfC{$b=d$}
\AxiomC{$[\neg Ab]^6$}
\RightLabel{$(\neg{\to}I)$}
\BinaryInfC{$\neg(b=d\to Ab)$}
\RightLabel{$(\neg{\wedge}I_2)$}
\UnaryInfC{$\neg(Ab\leftrightarrow b=d)$}
\AxiomC{$[\exists!b]^5$}
\LL{$(\exists I)$}
\BinaryInfC{$\exists y\,\neg(Ay\leftrightarrow y=d)$}
\RightLabel{$(\neg\iota E_2)^{2,\ldots,7}$}
\TrinaryInfC{$\exists y\,\neg(Ay\leftrightarrow y=d)$}
\RightLabel{$(\to I)^1$}
\UnaryInfC{$\iota xAx\neq d\to\exists y\,\neg(Ay\leftrightarrow y=d)$}
\end{prooftree}
}

Let $\D_8$ be the following deduction:
{\small\begin{prooftree}
\setlength{\topsep}{2pt}
\renewcommand{\proofSkipAmount}{\vskip0pt}
\AxiomC{$[\exists!d]^0$}
\AxiomC{$[\exists!c]^9$}
\AxiomC{$[\neg(Ac\to c=d)]^{11}$}
\RightLabel{$(\neg{\to}E_1)$}
\UnaryInfC{$Ac$}
\AxiomC{$[\neg(Ac\to c=d)]^{11}$}
\RightLabel{$(\neg{\to}E_2)$}
\UnaryInfC{$c\neq d$}
\RightLabel{$(\neq sym)$}
\UnaryInfC{$d\neq c$}
\RightLabel{$(\neg\iota I_1)$}
\QuaternaryInfC{$\iota xAx\neq d$}
\end{prooftree}}

Let $\D_9$ be the following deduction:
{\small\begin{prooftree}
\setlength{\topsep}{2pt}
\renewcommand{\proofSkipAmount}{\vskip0pt}
\AxiomC{$[\exists!d]^0$}
\AxiomC{$[\exists!c]^9$}
\AxiomC{$[\neg(c=d\to Ac)]^{12}$}
\RightLabel{$(\neg{\to}E_2)$}
\UnaryInfC{$\neg Ac$}
\AxiomC{$[\neg(c=d\to Ac)]^{12}$}
\RightLabel{$(\neg{\to}E_1)$}
\UnaryInfC{$c=d$}
\RightLabel{$(sym)$}
\UnaryInfC{$d=c$}
\RightLabel{$(\neg\iota I_2)$}
\QuaternaryInfC{$\iota xAx\neq d$}
\end{prooftree}}

Finally, we obtain the following deduction:
{\small \begin{prooftree}
\setlength{\topsep}{2pt}
\renewcommand{\proofSkipAmount}{\vskip0pt}
\AXC{$\D_7$}
\AxiomC{$[\exists y\,\neg(Ay\leftrightarrow y=d)]^8$}
\AxiomC{$[\neg(Ac\leftrightarrow c=d)]^{10}$}
\AXC{$\D_8$}
\AXC{$\D_9$}
\RightLabel{$(\neg{\wedge}E)^{11,12}$}
\TrinaryInfC{$\iota xAx\neq d$}
\RightLabel{$(\exists E)^{9,10}$}
\BinaryInfC{$\iota xAx\neq d$}
\RightLabel{$(\to I)^8$}
\UnaryInfC{$\exists y\,\neg(Ay\leftrightarrow y=d)\to\iota xAx\neq d$}
\RightLabel{$(\wedge I)$, $\leftrightarrow\mathrm{def}$}
\BIC{$\iota xAx\neq d\leftrightarrow\exists y\,\neg(Ay\leftrightarrow y=d)$}
\RightLabel{$(\forall I)^0$}
\UnaryInfC{$\forall z\bigl(\iota xAx\neq z\leftrightarrow\exists y\,\neg(Ay\leftrightarrow y=z)\bigr)$}
\end{prooftree}}

(3) Let us prove $
\exists!(\iota xAx)\leftrightarrow \exists z(\iota xAx=z)
$ in $\CNFi$ and $\KNFi$ (the proof is the same in both logics). 
{\small\begin{prooftree}
\setlength{\topsep}{2pt}
\renewcommand{\proofSkipAmount}{\vskip0pt}
\AxiomC{$[\Ex({\io})]^1$}
\RightLabel{$(=I^n)$}
\UnaryInfC{$\io=\io$}

\AxiomC{$[\Ex({\io})]^1$}
\RightLabel{$(\exists I)$}
\BinaryInfC{$\exists z(\io=z)$}

\RightLabel{$(\to I)^1$}
\UnaryInfC{$\Ex({\io})\to \exists z(\io=z)$}

\AxiomC{$[\exists z(\io=z)]^2$}

\AxiomC{$[\io=a]^4$}
\LL{$(sym)$}
\UnaryInfC{$a=\io$}

\AxiomC{$[\Ex a]^3$}

\LL{$(=E)$}
\BinaryInfC{$\Ex({\io})$}

\RightLabel{$(\exists E)^{3,4}$}
\BinaryInfC{$\Ex({\io})$}

\RightLabel{$(\to I)^2$}
\UnaryInfC{$\exists z(\io=z)\to \Ex({\io})$}

\RightLabel{$(\wedge I)$}
\BinaryInfC{$
(\Ex({\io})\to \exists z(\io=z))
\wedge
(\exists z(\io=z)\to \Ex({\io}))
$}

\RightLabel{$\leftrightarrow$ def}
\UnaryInfC{$
\exists!(\iota xAx)\leftrightarrow \exists z(\iota xAx=z)
$}
\end{prooftree}}

(4) Let us prove $\neg\exists!(\iota xAx)\leftrightarrow
\neg\exists z(\iota xAx=z)$ in $\CNFi$. The two implications are derived as follows:

\begin{center}
\begin{adjustbox}{max width=\textwidth}
\begin{tabular}{c@{\qquad}c}
\AxiomC{$[\exists z(\io=z)]^2$}
\AxiomC{$[\io=a]^4$}
\LL{$(sym)$}
\UnaryInfC{$a=\io$}
\AxiomC{$[\Ex a]^3$}
\LL{$(=E)$}
\BinaryInfC{$\Ex(\io)$}
\AxiomC{$[\neg\Ex(\io)]^1$}
\LL{$(\to E)$}
\BinaryInfC{$\bot$}
\RightLabel{$(\exists E)^{3,4}$}
\BinaryInfC{$\bot$}
\RightLabel{$(\to I)^2$}
\UnaryInfC{$\neg\exists z(\io=z)$}
\RightLabel{$(\to I)^1$}
\UnaryInfC{$\neg\Ex(\io)\to\neg\exists z(\io=z)$}
\DisplayProof

&

\AxiomC{$[\Ex(\io)]^6$}
\LL{$(=I^n)$}
\UnaryInfC{$\io=\io$}
\AxiomC{$[\Ex(\io)]^6$}
\LL{$(\exists I)$}
\BinaryInfC{$\exists z(\io=z)$}
\AxiomC{$[\neg\exists z(\io=z)]^5$}
\LL{$(\to E)$}
\BinaryInfC{$\bot$}
\LL{$(\to I)^6$}
\UnaryInfC{$\neg\Ex(\io)$}
\LL{$(\to I)^5$}
\UnaryInfC{$\neg\exists z(\io=z)\to\neg\Ex(\io)$}
\DisplayProof
\end{tabular}
\end{adjustbox}
\end{center}

Let us prove $
\neg\exists!(\iota xAx)\leftrightarrow \neg\exists z(\iota xAx=z)
$ in $\KNFi$. 

{\small\begin{prooftree}
\AxiomC{$[\neg\Ex({\io})]^1$}
\AxiomC{$[\Ex a]^2$}
\RightLabel{$(\neg\exists!\iota E)$}
\BinaryInfC{$\io\neq a$}
\RightLabel{$(\neg\exists I)^2$}
\UnaryInfC{$\neg\exists z(\io=z)$}
\RightLabel{$(\to I)^1$}
\UnaryInfC{$\neg\Ex({\io})\to \neg\exists z(\io=z)$}
\AxiomC{$[\neg\exists z(\io=z)]^3$}
\AxiomC{$[\Ex a]^4$}
\RightLabel{$(\neg\exists E)$}
\BinaryInfC{$\io\neq a$}
\RightLabel{$(\neg\exists!\iota I)^4$}
\UnaryInfC{$\neg\Ex({\io})$}
\RightLabel{$(\to I)^3$}
\UnaryInfC{$\neg\exists z(\io=z)\to \neg\Ex({\io})$}
\RightLabel{$(\wedge I)$, $\leftrightarrow$ def}
\BIC{$\neg\exists!(\iota xAx)\leftrightarrow \neg\exists z(\iota xAx=z)$}
\end{prooftree}}
\end{proof}
\begin{prop}\label{NegRules}
 All the rules from Figure \ref{fig:ND2}, except ($\not=D$), ($\mathit{AD}'$), ($=E'$), ($\neg\iota E_1$), and ($\neg\iota E_2$), are derivable in $\CNFi$.   
\end{prop}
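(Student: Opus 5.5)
The plan is to go rule by rule through Figure~\ref{fig:ND2}, leaving out the five excluded rules. For each remaining rule I will build a $\CNFi$ deduction of its conclusion from its premisses, with $\neg A$ read as $A\to\bot$. Each discharged assumption of a rule becomes an assumption that is later closed by $(\to I)$, $(\exists E)$ or $(\bot E_C)$. The rules fall into three groups: the negative propositional rules of the ${\bf K}_3$ basis, the negative quantifier and identity rules, and the negative $\iota$-rules. I will handle them in that order.

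For the propositional group, namely $(\neg{\wedge}I_i)$, $(\neg{\wedge}E)$, $(\neg{\vee}I)$, $(\neg{\vee}E_i)$, $(\neg{\to}I)$, $(\neg{\to}E_i)$, $(\neg\neg I)$, $(\neg\neg E)$, EFQ and Peirce's rule $(P)$, the derivations are the standard classical ones.
\begin{itemize}
\item An introduction such as $(\neg{\to}I)$ goes as follows: from $A$, $\neg B$ and $[A\to B]$, apply $({\to}E)$ twice to get $\bot$, then close with $(\to I)$.
\item An elimination such as $(\neg{\to}E_1)$ goes as follows: assume $[\neg A]$, derive $A\to B$ by $(\bot E)$ and $(\to I)$, obtain $\bot$ against $\neg(A\to B)$, and close with $(\bot E_C)$.
\item $(\neg{\wedge}E)$ is a case analysis. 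Assume $[\neg C]$, and derive $\neg\neg A$ and $\neg\neg B$ by feeding the two minor deductions. Then obtain $A\wedge B$ by $(\bot E_C)$ twice, which gives $\bot$. Finally close with $(\bot E_C)$.
\item EFQ is an instance of $(\bot E)$, and $(P)$ is the usual classical derivation via $(\bot E_C)$.
\end{itemize}

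For the quantifier and identity group, the rules are $(\neg\forall I)$, $(\neg\forall E)$, $(\neg\exists I)$, $(\neg\exists E)$ and $(\neq I)$. Here I will use the free quantifier rules, which carry the existence premisses $\exists!a$.
\begin{itemize}
\item $(\neg\exists I)$: from $[\exists!a]$ and a derivation of $\neg Aa$, obtain $\neg\exists xAx$ by $(\exists E)$ followed by $(\to I)$.
\item $(\neg\exists E)$ is a direct $(\exists I)$ argument.
\item $(\neg\forall E)$ uses $(\bot E_C)$ on $\neg\exists x\neg Ax$ together with $(\forall I)$.
\item $(\neq I)$ is immediate: derive $\bot$ from $t_1=t_2$, then apply $(\to I)$.
\end{itemize}

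The $\iota$ group, $(\neg\iota I_1)$, $(\neg\iota I_2)$, $(\neg\exists!\iota I)$ and $(\neg\exists!\iota E)$, is where I expect the main obstacle, and I will reuse the material of Proposition~\ref{Lambert}.
\begin{itemize}
\item For $(\neg\iota I_1)$, with premisses $\exists!t$, $\exists!c$, $Ac$ and $t\neq c$: assume $[\iota xAx=t]$. Apply $(\iota E_2)$ to $Ac$ and $\exists!c$ to get $c=t$. Then $(sym)$ (Lemma~\ref{sym}) gives $t=c$, which contradicts $t\neq c$, and I close with $(\to I)$.
\item $(\neg\iota I_2)$ is handled the same way: from $t=c$ and $\iota xAx=t$, use $(=E)$ and $(\iota E_1)$ to get $Ac$, which contradicts $\neg Ac$.
\item $(\neg\exists!\iota E)$: from $\neg\exists!(\iota xAx)$ and $\exists!a$, assume $[\iota xAx=a]$. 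Then $(sym)$ and $(=E)$ give $\exists!(\iota xAx)$, hence $\bot$. This mirrors part (4) of Proposition~\ref{Lambert}.
\item $(\neg\exists!\iota I)$: assume $[\exists!(\iota xAx)]$. Instantiate the premiss derivation at $a:=\iota xAx$ and use $(=I^n)$ for $\iota xAx=\iota xAx$, which gives $\bot$.
\end{itemize}

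The delicate point is the last rule. Substituting the complex term $\iota xAx$ for the eigenparameter needs care: if the rule's premiss is stated only for parameters, I will instead route the argument through $\exists z(\iota xAx=z)$ exactly as in Proposition~\ref{Lambert}(3),(4).

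The other care point is that the premisses and side conditions (atomic terms, existence premisses) must match the exact formulation of the rules in Figure~\ref{fig:ND2}.
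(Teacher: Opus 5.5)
Your rule-by-rule derivations are correct; the paper states this proposition without proof, so there is nothing to compare against, and the direct approach is evidently the intended one. Your argument for $(\neg\exists!\iota E)$ and your fallback for $(\neg\exists!\iota I)$ (via $(=I^n)$, $(\exists I)$ and an $(\exists E)$ on $\exists z(\iota xF=z)$ with eigenparameter $a$) match the $\CNFi$ derivations in Proposition~\ref{Lambert}(4); the direct substitution of $\iota xF$ for $a$ also works, since the rules of $\CNFi$ are not restricted to atomic terms. The only slip is that you list $(\neg\forall I)$ but never derive it; it is immediate: assume $\forall xA$, apply $(\forall E)$ with $\exists!t$ to get $A^x_t$, derive $\bot$ with $\neg A^x_t$, and close by $({\to}I)$.
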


\begin{figure}[t]
\begin{center}
\small
	($\wedge I$) $ \dfrac{A\quad B}{A \wedge B} $ \quad
	($\wedge E_{1} $) $ \dfrac{A \wedge B}{A} $ \quad
	($\wedge E_{2} $) $ \dfrac{A \wedge B}{B} $ \quad
	($\vee I_{1}$) $ \dfrac{A}{A \vee B} $ \quad
	($\vee I_{2}$) $ \dfrac{B}{A \vee B} $ \quad    
(${\to} E$) $ \dfrac{A\to B\quad A}{B} $ \quad
    ($\bot E$) $\dfrac{\bot}{A}$ \quad
\end{center}
\begin{center}\small
($\vee E^{i,j}$) $ \dfrac{\begin{matrix}
		& [A]^i & [B]^j \\
 & \D_1 & \D_2\\
		A \vee B &  C &  C 
		\end{matrix}}{C} $\quad 
($\to I^i$) $ \dfrac{\begin{matrix}
		 [A]^i  \\
         \D\\
		 B  
		\end{matrix}}{A\to B} $\quad    
($\forall I^i$) $ \dfrac{\begin{matrix}
		 [\exists!a]^i  \\
         \D\\
		 A^x_a  
		\end{matrix}}{\forall x A} $\quad 
($\exists E^{i,j}$) $ \dfrac{\begin{matrix}
		& [A^x_a]^i \; [\exists!a]^j \\
 & \D\\
\exists x A &  C  
		\end{matrix}}{C} $\quad 
        ($\bot E_C^i$) $ \dfrac{\begin{matrix}
		 [\neg A]^i  \\
         \D\\
		 \bot  
		\end{matrix}}{A} $\quad 
\end{center}
\begin{center}\small
($\iota I^{i,j,k}$) 
$ \dfrac{
\begin{matrix}
        & [a=t]^i & [F^x_a]^j\;[\exists!a]^k \\
        & \D_1     & \D_2 \\
\exists!t & F^x_a & a=t
\end{matrix}
}{
\iota xF=t
} $ \quad
($\iota E_{1}$) 
$ \dfrac{\iota xF=t\quad u=t}{F^x_u} $
\quad
($\iota E_{2}$) 
$ \dfrac{\iota xF=t\quad F^x_u\quad \exists!u}{u=t} $
\end{center}
\begin{center}\small
($\iota E_{3}$) 
$ \dfrac{\iota xF=t}{\exists!t} $\quad
($\forall E$) $ \dfrac{\forall x A\quad \exists!t}{A^x_t} $ \quad    ($\exists I$) $ \dfrac{A^x_t\quad \exists!t}{\exists x A} $ \quad  
($=I^n$) $ \dfrac{\exists!t}{t=t} $ \quad 
($= E$) $ \dfrac{t_1=t_2\quad \A^x_{t_1}}{\A^x_{t_2}} $ \quad  
(\textit{AD}) $ \dfrac{Rt_1\ldots t_n}{\exists!t_i} $ \quad 
\end{center}
where $a$ is a fresh parameter, $\A$ is an atomic formula, and $R$ is a predicate symbol, including $=$.  
    \caption{The rules of $\CNFi$ and $\INFi$. }
    \label{fig:ND1}
\end{figure}

\begin{figure}[t]
\begin{center}
{\small ($\mathit{EFQ}$) $\dfrac{\neg A\quad A}{B}$\;
($\neg\neg I$) $\dfrac{A}{\neg\neg A}$\;
($\neg\neg E$) $\dfrac{\neg\neg A}{A}$\;
($\neg{\to}I$) $\dfrac{A\quad \neg B}{\neg(A\to B)}$\;
($\neg{\to}E_1$) $\dfrac{\neg(A\to B)}{A}$\;
($\neg{\to}E_2$) $\dfrac{\neg(A\to B)}{\neg B}$\;}
\end{center}
\begin{center}\small
	(${\neg\vee} I$) $ \dfrac{\neg A\quad\neg B}{\neg(A \vee B)} $ \quad
	(${\neg\vee} E_{1} $) $ \dfrac{\neg(A \vee B)}{\neg A} $ \quad
	(${\neg\vee} E_{2} $) $ \dfrac{\neg(A \vee B)}{\neg B} $ \quad
	(${\neg\wedge} I_{1}$) $ \dfrac{\neg A}{\neg(A \wedge B)} $ \quad
	(${\neg\wedge} I_{2}$) $ \dfrac{\neg B}{\neg(A \wedge B)} $ \quad    
\end{center}
\begin{center}  \small
(${\neg\wedge} E^{i,j}$) $ \dfrac{\begin{matrix}
		& [\neg A]^i & [\neg B]^j \\
 & \D_1 & \D_2\\
		\neg(A \wedge B) &  C &  C 
		\end{matrix}}{C} $\quad   
        ($P^i$) $ \dfrac{\begin{matrix}
		 [A\to B]^i  \\
         \D\\
		 A  
		\end{matrix}}{A} $\quad  
($\neg\exists I^i$) $ \dfrac{\begin{matrix}
		 [\exists!a]^i  \\
         \D\\
		 \neg A^x_a  
		\end{matrix}}{\neg\exists x A} $\quad 
($\neg\forall E^{i,j}$) $ \dfrac{\begin{matrix}
		& [\neg A^x_a]^i \; [\exists!a]^j \\
 & \D\\
\neg\forall x A &  C  
		\end{matrix}}{C} $\quad         
\end{center}
\begin{center} \small
($\neg\exists E$) $ \dfrac{\neg\exists x A\quad \exists!t}{\neg A^x_t} $ \quad 
($\neg\forall I$) $ \dfrac{\neg A^x_t\quad \exists!t}{\neg\forall x A} $ \quad     
($\neg\iota I_1$) $\dfrac{\exists!t\quad \exists!u\quad F^x_u\quad t\not=u}{\iota x F\not=t}$\quad 
($\neg\iota I_2$) $\dfrac{\exists!t\quad \exists!u\quad \neg F^x_u\quad t=u}{\iota x F\not=t}$\quad 
\end{center}
\begin{center}\small
($\not=D$) $\dfrac{t_1\not=t_2}{\exists!t_i}$\;
($\mathit{AD}'$) $ \dfrac{\neg Rt_1\ldots t_n}{\exists!t_i} $ \; 
($= E'$) $ \dfrac{t_1=t_2\quad \neg\A^x_{t_1}}{\neg\A^x_{t_2}} $ \;
($\not= I$) $ \dfrac{\A^x_{t_1}\quad \neg \A^x_{t_2}}{t_1\not=t_2} $ \;   
($\neg\exists!\iota E$) $ \dfrac{  \neg\exists!(\iota xF) \quad\exists!t}{\iota xF\neq t} $
\end{center}
\begin{center}\small
    ($\neg\iota E_{1}$) 
$ \dfrac{\iota xF\not=t}{\exists!t} $\;
  ($\neg\iota E_{2}^{i_1,\ldots,i_6}$)   $ \dfrac{
\begin{matrix}
& [\exists!a]^{i_1}[F^x_a]^{i_2}[t\not=a]^{i_3} & [\exists!b]^{i_4}[\neg F^x_b]^{i_5}[t=b]^{i_6} \\
       & \D_1     & \D_2 \\
\iota xF\not=t &  C & C
\end{matrix}
}{C} $ \;
($\neg\exists!\iota I^i$)
$ \dfrac{\begin{matrix}
        [\exists!a]^i \\
        \D \\
        \iota xF\neq a
\end{matrix}}{
        \neg\exists!(\iota xF)
} $
\end{center}
where $a$ and $b$ are fresh parameters, $R$ is an ordinary predicate
letter, $\A$ is an atomic formula. In $(\neq I)$, the principal
symbol of $\A$ is either an ordinary predicate letter or $=$; if it is $=$, then $t_2$ is an atomic term. In
$(\neq D)$, $t_i$ is an atomic term.
    \caption{The specific rules of $\KNFi$.}
    \label{fig:ND2}
\end{figure}

\section{Normalisation}\label{Norm}
For the normalisation proof, we combine Zimmermann's \cite{Zimmermann} strategy for proving normalisation with K\"urbis' \cite{KurbisNormNeg} normalisation proof for $\INFi$. Zimmermann's strategy was developed for classical first-order logic and relies essentially on Peirce's rule $(P)$. It is designed to overcome the well-known difficulties concerning disjunction and the existential quantifier identified by Prawitz \cite{Prawitz}. This strategy has already been successfully applied to many-valued logics in \cite{Petrukhin19,KurbisPetrukhin21}.

\begin{definition}
The degree $d(A)$ of a formula $A$ and the degree $d(t)$ of a term $t$
are defined by simultaneous induction as follows.
\begin{enumerate}\itemsep=0pt
\item If $t$ is an atomic term, then $d(t)=0$. If $t$ is a complex term $\iota xF$, then $d(\iota xF)=d(F)+1$.

\item If $A$ is a prime formula $Rt_1\ldots t_n$, where $R$ is an ordinary predicate letter, then $d(A)=0$. If $A$ is a prime formula of the form $\exists!t$ or $t_1=t_2$, then $d(A)=1$.

\item If $A$ is the negation of a prime formula, then $d(A)$ is defined as in the previous clause. Thus, 
$d(\neg Rt_1\ldots t_n)=0$ for an ordinary predicate letter $R$, while $d(\neg\exists!t)=1$ and $d(t_1\neq t_2)=1$.

\item If $A$ is an atomic formula $Rt_1\ldots t_n$, where $R$ is an ordinary predicate letter, then
$d(A)=d(t_1)+\ldots+d(t_n)$. If $A$ is an atomic formula of the form $\exists!t$ or $t_1=t_2$, then $d(\exists!t)=d(t)+1$ and $d(t_1=t_2)=d(t_1)+d(t_2)+1$.

\item If $A$ is the negation of an atomic formula, then $d(A)$ is defined as in the previous clause, except for formulas of the form
$\neg\exists!(\iota xF)$. Thus, $d(\neg Rt_1\ldots t_n)=d(t_1)+\ldots+d(t_n)$ for an ordinary predicate letter $R$, $d(\neg\exists!t)=d(t)+1$, and $d(t_1\neq t_2)=d(t_1)+d(t_2)+1$.

\item Formulas of the form $\neg\exists!(\iota xF)$ are assigned one
additional degree:
$d(\neg\exists!(\iota xF))=d(\iota xF)+2$.

\item If $A$ is $\neg B$ and $B$ is not atomic, then $d(A)=d(B)+1$.

\item If $A$ is $B\wedge C$, $B\vee C$, or $B\to C$, then
$d(A)=d(B)+d(C)+1$.

\item If $A$ is $\forall xB$ or $\exists xB$, then $d(A)=d(B)+2$.
\end{enumerate}
\end{definition}

Notice that Clause~7 applies to iterated negations: for example,
$\neg\neg Rt$ is not the negation of an atomic formula, even when $\neg Rt$ is one. The special clause for $\neg\exists!(\iota xF)$ is needed for the reduction involving $(\neg\exists!\iota I)$ followed by $(\neg\exists!\iota E)$. Since $t$ in $(\neg\exists!\iota E)$ is restricted to atomic terms, we have $d(\iota xF\neq t)=d(\iota xF)+1$, whereas $d(\neg\exists!(\iota xF))=d(\iota xF)+2$. Hence, the reduction replaces a maximal formula by one of strictly lower degree. 
\begin{definition}\cite[Definition 3]{KurbisNormNeg}
The \emph{major premise of an elimination rule} is the formula that displays the
connective or $\iota$ in the general statement of the rule, here always the leftmost premise. All
others are \emph{minor premises}.    
\end{definition}
\begin{definition}
The rules $(\vee E), (\exists E), (\neg\wedge E), (\neg\forall E), (\neg\iota E_2)$ are called \emph{general elimination rules}, or \emph{gel-rules}. Their conclusions are inherited from minor deductions.    
\end{definition}
 Vacuous discharge applications of gel-rules are eliminated whenever they are created by a reduction. If a minor deduction of a gel-rule does not depend on the assumptions discharged by that rule, the application is replaced by that minor deduction. These empty conversions are not listed separately below.
 \begin{lemma}\cite[Lemma 4]{KurbisNormNeg}\label{InGder}
The rule $(=I^{nG})$ displayed below is derivable given $(=I^n)$, $(\exists I)$, and $(\exists E)$, where the parameter $a$ does not occur in $t$, $C$, nor any open assumptions of $\D$ other than those of the assumption class of $a=t$:
\begin{center}
\AxiomC{$\exists!t$}
\AxiomC{$[a=t]^i$}
\noLine
\UnaryInfC{$\D$}
\noLine
\UnaryInfC{$C$}
\RightLabel{$(=I^{nG})^i$}
\BinaryInfC{$C$}
\DisplayProof
\end{center}
\end{lemma}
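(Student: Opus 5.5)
The derivation builds an existential formula from the given premise and then discharges the hypothesis $a=t$ through an application of $(\exists E)$. Starting from the premise $\exists!t$, apply $(=I^n)$ to obtain $t=t$. The formula $t=t$ can be read as $(x=t)^x_t$, where $x$ is a bound variable not occurring in $t$. Applying $(\exists I)$ to $t=t$ and $\exists!t$ then yields $\exists x(x=t)$. This formula will serve as the major premise of an application of $(\exists E)$.

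For the minor deduction, take $\D$ itself, with assumption class $[a=t]^i$, so that it ends in $C$. Since $(x=t)^x_a$ is exactly $a=t$, the assumption class $[a=t]^i$ of $\D$ is precisely the class that $(\exists E)$ discharges as $[A^x_a]$. The second discharged class, $[\exists!a]^j$, is taken to be empty, which is allowed because vacuous discharge is permitted. The conclusion of the $(\exists E)$ application is then $C$, and its open assumptions are $\exists!t$ together with those of $\D$ other than $[a=t]^i$. Schematically: $\exists!t \vdash t=t$ by $(=I^n)$; $\exists x(x=t)$ by $(\exists I)$; then $(\exists E)^{i,j}$ with minor premise $C$ from $\D$ gives $C$.

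The only step needing care is the eigenvariable condition of $(\exists E)$. The parameter $a$ must not occur in the major premise $\exists x(x=t)$, which holds since $a$ does not occur in $t$. It must not occur in the conclusion $C$, which is given. Nor may it occur in any open assumption of $\D$ other than those in the discharged classes, which is exactly the stated hypothesis. So the side conditions of the lemma are precisely what makes this application of $(\exists E)$ legitimate, and the derivation is complete. No real obstacle is expected; the only point to watch is that $x$ is chosen so that the substitution $(x=t)^x_a$ gives $a=t$.
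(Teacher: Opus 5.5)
Your proof is correct and is essentially the standard derivation from K\"urbis's Lemma 4, which the paper cites without reproducing: derive $t=t$ by $(=I^n)$, then $\exists x(x=t)$ by $(\exists I)$, then apply $(\exists E)$ with $\D$ as the minor deduction, discharging $a=t$ and vacuously discharging $\exists!a$. As you note, the lemma's side conditions on $a$ are exactly the eigenvariable conditions that this application of $(\exists E)$ needs.
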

\begin{definition}
 By \emph{introduction rules} we mean the rules explicitly introducing the principal connective or operator; in particular, $(AD)$, $(AD')$, $(\neq D)$, $(EFQ)$, and $(P)$ are not counted as introduction rules. The rule $(=I^{nG})$ is called a \emph{general} introduction rule.
\end{definition}
 
\begin{definition}
A \emph{segment} is a sequence of formula occurrences
$A_1,\ldots,A_n$ such that $A_1$ is not the conclusion of a gel-rule nor of $(=I^{nG})$, $A_n$ is not a minor premise of a gel-rule nor the premise $C$ of an application of $(=I^{nG})$, and, for every $i<n$, $A_i$ is either a minor premise of a gel-rule or the premise $C$ of an application of $(=I^{nG})$, and $A_{i+1}$ is the conclusion of that application.  The \emph{length} of a segment is the number of formula occurrences in it. The \emph{degree} of a segment is the degree of the formula occurring on it. A segment is \emph{maximal} if its last formula is the major premise of an elimination rule and, if its length is $1$, its sole formula is the conclusion of an introduction rule.  A \emph{maximal formula} is a formula occurrence that is the conclusion of an introduction rule and the major premise of an elimination rule.
\end{definition}

We assume the usual pure variable condition for rules with restrictions on parameters: the parameter introduced by such a rule occurs only for the purposes of that application of the rule. 
The condition ensures that no variable clashes arise in the transformations used in the normalisation proof. We write $\D^u_t$ for the result of replacing every occurrence of the term
$u$ in the deduction $\D$ by the term $t$.

As noticed in \cite{KurbisNormNeg},  $\iota$-terms create a difficulty for the normalisation. Since the degree of a formula must take into account not only logical constants and quantifiers, but also the complexity of terms, replacing a parameter by a complex $\iota$-term may increase the degree of formulas occurring in a deduction. As a result, the standard reductions for maximal
formulas of the forms $\forall xA$ and $\exists xA$ (in the case of $\KNFi$, of the forms $\neg\forall xA$ and $\neg\exists xA$ as well) may introduce new maximal formulas of higher degree than the ones they eliminate.  The difficulty can be overcome by using an observation due to Indrzejczak \cite{IndrzejczakCutFree}:
the rules $(\forall E)$ and $(\exists I)$ (in the case of $\KNFi$, $(\neg\exists E)$ and $(\neg\forall I)$ as well) may be restricted to atomic instantiating terms, using the  general introduction rule (in the sense of Negri and von Plato \cite{NegriPlato01} and Milne \cite{Milne10}) for identity $(=I^{nG})$. 

\begin{lemma}\label{lem:efq}
(\textit{EFQ}) may be restricted to prime and negative prime conclusions.
\end{lemma}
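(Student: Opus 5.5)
We argue by induction on the degree (equivalently, the structural complexity) of the conclusion $B$ of an application of $(\mathit{EFQ})$ with premises $\neg A$ and $A$. The claim is that any such application with a non-prime, non-negative-prime conclusion $B$ can be replaced by a deduction of $B$ from $\neg A$ and $A$ in which $(\mathit{EFQ})$ is used only with conclusions of lower complexity. Iterating this step down to prime and negative prime formulas gives the lemma. The guiding idea is the familiar one for $(\bot E)$ in intuitionistic logic: derive the immediate components (or their negations) by $(\mathit{EFQ})$, and then rebuild $B$ with the appropriate introduction rule, or with its negative counterpart from Figure~\ref{fig:ND2}.

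The cases follow the shape of $B$. Positive connectives are handled as follows:
\begin{itemize}
\item $B=C\wedge D$: infer $C$ and $D$ by $(\mathit{EFQ})$ and apply $(\wedge I)$.
\item $B=C\vee D$: infer $C$ and apply $(\vee I_1)$.
\item $B=C\to D$: infer $D$ and apply $(\to I)$ with vacuous discharge.
\item $B=\forall xC$: infer $C^x_a$ for a fresh $a$ and apply $(\forall I)$ with vacuous discharge of $\exists!a$.
\item $B=\exists xC$: this case cannot be handled by $(\exists I)$, since that rule needs $\exists!t$. Instead, first infer $\neg\exists xC$ by $(\mathit{EFQ})$ (one connective lower is not available here, so see below). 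Alternatively, and more simply, obtain $\exists xC$ from $\neg\neg\exists xC$ by $(\neg\neg E)$, where $\neg\neg\exists xC$ comes from $(\mathit{EFQ})$ as the negation of $\neg\exists xC$.
\end{itemize}
This last move does not lower the degree, so we must do better. The plan is to measure complexity by the number of logical symbols outside atomic parts, and to handle negated formulas directly. For $B=\neg C$ with $C$ non-atomic:
\begin{itemize}
\item $\neg(C\wedge D)$: via $(\neg\wedge I_1)$ from $\neg C$.
\item $\neg(C\vee D)$: via $(\neg\vee I)$.
\item $\neg(C\to D)$: via $(\neg{\to}I)$ from $C$ and $\neg D$.
\item $\neg\neg C$: via $(\neg\neg I)$.
\item $\neg\exists xC$: via $(\neg\exists I)$ with vacuous discharge.
\item $\neg\forall xC$: this is problematic, like $\exists$.
\end{itemize}

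For $\exists xC$ and $\neg\forall xC$, no premise-free introduction is available, because $(\exists I)$ and $(\neg\forall I)$ require $\exists!t$. Here we plan to use Peirce's rule $(P)$, in the manner of Zimmermann. Assume $[\exists xC\to D]$ for a suitable $D$; we want to reach $\exists xC$ by $(P)$, and this needs a derivation of $\exists xC$ from that assumption. That is circular, so the realistic route is to notice that $(\mathit{EFQ})$ with an arbitrary conclusion lets us take $D$ to be a prime formula $p$. Concretely: derive the prime $p$ by $(\mathit{EFQ})$ from $\neg A$ and $A$; derive $\neg p$ by $(\mathit{EFQ})$, which is negative prime; then apply $(\mathit{EFQ})$ again to $\neg p$ and $p$ to get $\exists xC$. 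Of course this merely reproduces the problem.

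The honest fallback is to reduce the premises rather than the conclusion. Here the induction is on the complexity of $A$. If $A$ is itself composite, replace the $(\mathit{EFQ})$ on $\neg A, A$ by one on the components:
\begin{itemize}
\item $A=A_1\wedge A_2$: from $\neg A$, use $(\neg\wedge E)$ to obtain either $\neg A_1$ or $\neg A_2$, pair each with the corresponding conjunct obtained by $(\wedge E)$, and apply $(\mathit{EFQ})$ in each branch.
\item $A=\neg A'$: use $(\neg\neg E)$ on $\neg\neg A'$ and pair the result with $\neg A'$.
\item $A=\exists xA'$: use $(\exists E)$ on $A$ and $(\neg\exists E)$ with the discharged $\exists!a$.
\item The remaining cases are dual, using the corresponding rules in Figure~\ref{fig:ND2}.
\end{itemize}
Once $A$ is atomic (or the negation of an atomic formula), any conclusion $B$ can be obtained. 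First derive the prime $p$ and negative prime $\neg p$ needed for the first reduction. Then the conclusion-directed decomposition handles every connective except $\exists$ and $\neg\forall$, and for these we use $(\neg\neg E)$ after obtaining $\neg\neg\exists xC$ from $\neg\exists xC$, which is available by the $(\neg\exists I)$ case. This is a well-founded reduction if we order by the pair (quantifier-existential depth, degree).

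Finally, if $A$ or $B$ is atomic but not prime, that is, it contains $\iota$-terms, the atomic case must be reduced to prime formulas. This needs rules such as $(=E)$, $(AD)$ and $(\iota E_3)$, and I expect this step to be the main obstacle. If it cannot be done cleanly, I would settle for using a fixed prime contradiction throughout.
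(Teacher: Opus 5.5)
Your conclusion-directed induction (rebuild a complex conclusion by an introduction or negative introduction rule from $(\mathit{EFQ})$-conclusions of lower degree) is the right strategy. It is also the paper's route, by analogy with K\"urbis' restriction of $(\bot E)$ to prime conclusions. But you abandon it at the one point where it needs an idea, and the detours that follow do not work.

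\textbf{The missing observation.} For a parameter $a$, $\exists!a$ is itself a prime formula. So it can be obtained from $\neg A$ and $A$ by an application of $(\mathit{EFQ})$ that already obeys the restriction. This settles both cases you got stuck on:
\begin{itemize}
\item For $B=\exists xC$, take a fresh $a$, derive $C^x_a$ (lower degree) and $\exists!a$ (prime) by $(\mathit{EFQ})$, and apply $(\exists I)$.
\item For $B=\neg\forall xC$, derive $\neg C^x_a$ and $\exists!a$ by $(\mathit{EFQ})$, and apply $(\neg\forall I)$.
\end{itemize}

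\textbf{Why your substitutes fail.}
\begin{itemize}
\item The Peirce route is circular, as you yourself note.
\item Nothing yields $\neg\neg\exists xC$ from $\neg\exists xC$. The introduction rule $(\neg\neg I)$ needs $\exists xC$ itself. Obtaining $\neg\neg\exists xC$ by $(\mathit{EFQ})$ moves to a formula of higher degree, so your proposed ordering is not well-founded.
\item Reducing the premise $A$ to atomic form is beside the point. It leaves you with exactly the original task of deriving an arbitrary $B$ from an atomic contradiction.
\item A ``fixed prime contradiction'' likewise restricts nothing about the conclusion.
\end{itemize}

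\textbf{The $\iota$-case you leave open.} The same observation closes the case of atomic and negative atomic conclusions containing $\iota$-terms.
\begin{itemize}
\item For fresh parameters $a,b$, derive $\exists!a$ and $b=a$ (prime) and $F^x_b$ (lower degree) by $(\mathit{EFQ})$. Apply $(\iota I)$ with vacuous discharges to get $\iota xF=a$, and then $a=\iota xF$ by $(\mathit{sym})$.
\item $(=E)$ then transports the prime formulas $Ra$ and $\exists!a$ to $R(\iota xF)$ and $\exists!(\iota xF)$.
\item $(=E')$ transports the negative prime $\neg Ra$ to $\neg R(\iota xF)$. Iterate for several or nested $\iota$-terms.
\item $\iota xF\neq t$ comes from $(\neg\iota I_1)$, with premises $\exists!t$, $\exists!b$, $F^x_b$, $t\neq b$, all supplied by $(\mathit{EFQ})$ at lower degree.
\item $\neg\exists!(\iota xF)$ comes from $\iota xF\neq a$ by $(\neg\exists!\iota I)$ with vacuous discharge. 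This strictly lowers the degree because of the extra clause $d(\neg\exists!(\iota xF))=d(\iota xF)+2$.
\end{itemize}
With these cases added, your original conclusion-directed induction goes through, with no appeal to $(P)$ or to the shape of $A$.
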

\begin{proof}
    Similarly to \cite[Lemma 2]{KurbisNormNeg}, where ($\bot E$) is proven to be restricted to prime conclusions. 
\end{proof}

\begin{lemma}\cite[Lemma 5]{KurbisNormNeg}\label{Ider}
    Given $(= E)$ and $(= I^{nG} )$, $(= I^n)$ is derivable.
\end{lemma}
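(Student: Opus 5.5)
The plan is to take the premise $\exists!t$ of $(=I^n)$ and use it as the major premise of an application of $(=I^{nG})$, with a fresh parameter $a$ (not occurring in $t$). The target $C$ will be $t=t$, so the task is to derive $t=t$ from the assumption class $[a=t]^i$. The main tool for this is $(=E)$, using the atomic formula $a=t$ itself as the formula $\A$ whose instance gets rewritten.

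First, from the assumption $a=t$, apply $(=E)$ with identity premise $a=t$ and $\A^x_{a}$ taken to be $x=t$ instantiated at $a$, i.e.\ the formula $a=t$. Here $\A$ is $x=t$, which is atomic, and the parameter $a$ does not occur in $t$, so the substitution is well defined. The conclusion is $\A^x_t$, namely $t=t$. Both premises of this $(=E)$ are occurrences of the same assumption $[a=t]^i$, so the whole derivation $\D$ of $t=t$ depends only on that assumption class. Then $(=I^{nG})^i$ closes the assumptions and yields $t=t$ from $\exists!t$.

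The remaining work is to check the side conditions of $(=I^{nG})$ from Lemma~\ref{InGder}. The parameter $a$ must not occur in $t$, which is ensured by choosing $a$ fresh. It must not occur in $C$; this holds because $C$ is $t=t$. And $\D$ must have no open assumptions containing $a$ other than those in the class of $a=t$; this holds because $\D$ has no other assumptions at all.

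The only point needing care is that the instance of $(=E)$ must be legitimate, i.e.\ that $x$ is substituted only at the left occurrence. This is handled by choosing $\A$ as $x=t$, with $t$ not containing $x$ free. If $t$ is an $\iota$-term, its bound variable is distinct from the free $x$, by the parameter/bound-variable convention. Nothing else is an obstacle; the derivation works uniformly for atomic and complex $t$.
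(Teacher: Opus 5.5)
Your derivation is correct and is essentially the standard proof behind the cited result (K\"urbis' Lemma~5), which the paper invokes without reproducing: apply $(=E)$ to two occurrences of the assumption $[a=t]^i$, with $\A$ taken as $x=t$, to obtain $t=t$, then discharge $[a=t]^i$ by $(=I^{nG})^i$ using the premise $\exists!t$. One small quibble: only the left occurrence is rewritten because $A^x_t$ replaces free occurrences of $x$ only (and you may pick $x$ not occurring in $t$ anyway), not because a bound variable of an $\iota$-term is automatically distinct from $x$.
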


\begin{lemma}\cite[Lemma 7]{KurbisNormNeg}\label{restr:qua}
Given $(= I^{nG} )$, $(\forall E)$ and $(\exists I)$ may be restricted to atomic terms.     
\end{lemma}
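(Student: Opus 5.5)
The strategy is to show that every application of $(\forall E)$ or $(\exists I)$ whose instantiating term $t$ is complex (an $\iota$-term) can be replaced by a derivation in which the quantifier rule is applied only to a fresh parameter. The bridge between the parameter and $t$ is provided by $(=I^{nG})$ and $(=E)$. We then apply this replacement to each offending application, proceeding from the top of the deduction downwards. The number of applications with complex instantiating terms strictly decreases, while no new ones are created.

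\emph{Case $(\forall E)$.} Consider an application with premises $\forall xA$ and $\exists!t$ and conclusion $A^x_t$, where $t$ is complex. Choose a fresh parameter $a$ and assume $[a=t]^i$. From $a=t$, $(AD)$ yields $\exists!a$. Applying $(\forall E)$ to $\forall xA$ and $\exists!a$ gives $A^x_a$, and this application now uses an atomic term. Next we have to transport $A^x_a$ to $A^x_t$ along $a=t$. Finally, $(=I^{nG})^i$ applied to the premise $\exists!t$ and the subdeduction of $A^x_t$ from $[a=t]^i$ yields $A^x_t$ with the assumption discharged. The side condition holds because $a$ is fresh and occurs neither in $t$, nor in the conclusion $A^x_t$, nor in any other open assumption.

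\emph{Case $(\exists I)$.} Consider an application with premises $A^x_t$ and $\exists!t$ and conclusion $\exists xA$. Again assume $[a=t]^i$. By (\textit{sym}) (Lemma~\ref{sym}) we obtain $t=a$, and transporting $A^x_t$ along it gives $A^x_a$. From $a=t$, $(AD)$ gives $\exists!a$, so $(\exists I)$ with the atomic term $a$ yields $\exists xA$. Closing with $(=I^{nG})^i$ and the premise $\exists!t$ gives $\exists xA$.

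\emph{Main obstacle: the transport step.} $(=E)$ only licenses replacement inside atomic formulas $\A$, whereas $A$ is arbitrary. So we first prove an auxiliary claim by induction on the complexity of $A$: from $t_1=t_2$ and $A^x_{t_1}$ one can derive $A^x_{t_2}$. The auxiliary rule must not contain $(\forall E)$ or $(\exists I)$ with complex terms, or the reduction would loop.
\begin{itemize}
\item The atomic case is $(=E)$ itself.
\item The connective cases are routine: decompose the formula, apply the induction hypothesis, and recompose. For $\to$ we also use (\textit{sym}) to go backwards. In the classical setting, negation is $\to\bot$ and so falls under this case.
\item In the quantifier cases we instantiate only with fresh parameters $b$ from $[\exists!b]$. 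This keeps every $(\forall E)$ and $(\exists I)$ atomic, since the bound variable is instantiated by a parameter while $t_1,t_2$ merely occur inside the matrix.
\end{itemize}
Alternatively, following K\"urbis, we can note that the claim for general $A$ is a standard derived rule of these systems, and check that its derivation only uses atomic instantiations.

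Iterating the replacement over all non-atomic instantiations then completes the proof.
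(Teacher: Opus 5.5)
Your proposal is correct and follows essentially the same route as the cited lemma, with the same shape as the paper's proof of the analogous Lemma~\ref{restr:negqua}: discharge $[a=t]^i$ by $(=I^{nG})$, obtain $\exists!a$ by $(AD)$, apply the quantifier rule to the parameter $a$, and transport along $a=t$ (or $t=a$ via (\textit{sym})). Your explicit derivation of transport for arbitrary formulas, using only parameter instantiations, is something that analogous proof leaves implicit. The one thing to add is that in $\KNFi$, where $\neg$ is primitive rather than defined as $A\to\bot$, that induction must also cover negated formulas: use $(=E')$ for negated atomic formulas, and the $\neg\neg$, $\neg\wedge$, $\neg\vee$, $\neg{\to}$, $\neg\forall$, $\neg\exists$ rules otherwise, again instantiating only with fresh parameters, which goes through routinely.
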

\begin{lemma}\label{restr:negqua}
Given $(= I^{nG} )$, $(\neg\forall I)$ and $(\neg\exists E)$ may be restricted to atomic terms. 
\end{lemma}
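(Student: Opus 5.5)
We will mirror the proof of Lemma~\ref{restr:qua} (K\"urbis' Lemma 7), replacing the positive quantifier rules by their negative counterparts. The task is to show that any application of $(\neg\forall I)$ or $(\neg\exists E)$ with a complex instantiating term $t=\iota yG$ can be simulated by applications of the same rule with an atomic (fresh parameter) instantiating term, together with $(=I^{nG})$ and $(=E')$.

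\emph{Case $(\neg\exists E)$.} Suppose $\neg\exists xA$ and $\exists!t$ are given and we want $\neg A^x_t$. Take a fresh parameter $a$ and apply $(\neg\exists E)$ with the atomic term $a$ to get $\neg A^x_a$. This requires $\exists!a$, which we obtain from the assumption $[a=t]^i$ by $(AD)$. Now transfer the negation from $a$ to $t$. If $A$ is atomic, we use $(=E')$ with $a=t$ and $\neg A^x_a$ to obtain $\neg A^x_t$. Finally, $(=I^{nG})^i$ with premises $\exists!t$ and the minor deduction of $\neg A^x_t$ from $[a=t]^i$ discharges the assumption. The side condition of Lemma~\ref{InGder} holds because $a$ is fresh and occurs neither in $t$ nor in $\neg A^x_t$.

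\emph{Case $(\neg\forall I)$.} Suppose $\neg A^x_t$ and $\exists!t$ are given. Under the assumption $[a=t]^i$, derive $t=a$ by Lemma~\ref{sym}. Then obtain $\neg A^x_a$ from $\neg A^x_t$ and get $\exists!a$ by $(AD)$. An application of $(\neg\forall I)$ with the atomic term $a$ yields $\neg\forall xA$, and $(=I^{nG})^i$ then closes $a=t$.

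The main obstacle is that $(=E)$ and $(=E')$ are restricted to atomic formulas $\A$. So when $A$ is complex, we cannot replace $t$ by $a$ inside $\neg A^x_t$ in one step. We would first prove an auxiliary replacement lemma: from $t_1=t_2$ and $B^x_{t_1}$ one derives $B^x_{t_2}$, and from $t_1=t_2$ and $\neg B^x_{t_1}$ one derives $\neg B^x_{t_2}$, for arbitrary $B$. The proof goes by simultaneous induction on $B$. Atomic and negated atomic cases use $(=E)$ and $(=E')$. The connective cases use the positive rules and the Figure~\ref{fig:ND2} negative rules. For example, the $\neg(B\wedge C)$ case uses $(\neg\wedge E)$ with $(\neg\wedge I_i)$; the $\neg\neg B$ case uses $(\neg\neg E)$ and $(\neg\neg I)$; the $\neg(B\to C)$ case uses $(\neg{\to}E_i)$ and $(\neg{\to}I)$; and the $B\to C$ case uses $(\to I)$ with the induction hypothesis applied to the symmetric identity from Lemma~\ref{sym}. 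The quantifier cases themselves use instantiation only at fresh parameters, so they are unaffected by the restriction.

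Terms occurring inside $\iota$-subterms of $B$ are also handled by $(=E)$ and $(=E')$, since these act on atomic formulas of arbitrary term structure. Note that the replacement lemma is exactly the step where $(=E')$ is indispensable in $\KNFi$, as in the neutral setting negated atomic formulas cannot be reduced to implication into $\bot$. Once the replacement lemma is in place, both cases above go through uniformly for arbitrary $A$.
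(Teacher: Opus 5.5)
Your proof is correct and follows the paper's construction exactly: a fresh parameter $a$, the assumption $[a=t]^i$ giving $\exists!a$ by $(AD)$, the rule applied at $a$, transfer between $a$ and $t$ by $(=E')$ (through $(\mathit{sym})$ in the $(\neg\forall I)$ case), and discharge by $(=I^{nG})^i$. Your replacement lemma extends $(=E)$ and $(=E')$ to arbitrary formulas, with its quantifier cases instantiated only at fresh parameters, and so makes explicit a step the paper's proof takes for granted when it applies $(=E')$ (officially restricted to atomic formulas) to $\neg A^x_a$ with $A$ possibly complex.
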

\begin{proof}
 Any application of $(\neg\exists E)$ and $(\neg\forall I)$
where $t$ is complex can be replaced by the following: 
\begin{center}
\begin{adjustbox}{max width=\textwidth}
\begin{tabular}{c@{\qquad}c}
\AxiomC{$\D_1$}
\noLine
\UnaryInfC{$\exists!t$}
\AxiomC{$\D_2$}
\noLine
\UnaryInfC{$\neg\exists xA$}
\AxiomC{$[a=t]^i$}
\RightLabel{$(AD)$}
\UnaryInfC{$\exists!a$}
\RightLabel{$(\neg\exists E)$}
\BinaryInfC{$\neg A^x_a$}
\AxiomC{$[a=t]^i$}
\LL{$(=E')$}
\BinaryInfC{$\neg A^x_t$}
\RightLabel{$(=I^{nG})^i$}
\BinaryInfC{$\neg A^x_t$}
\DisplayProof

&

\AxiomC{$\D_3$}
\noLine
\UnaryInfC{$\exists!t$}
\AxiomC{$[a=t]^i$}
\LL{$(sym)$}
\UnaryInfC{$t=a$}
\AxiomC{$\D_4$}
\noLine
\UnaryInfC{$\neg A^x_t$}
\LL{$(=E')$}
\BinaryInfC{$\neg A^x_a$}
\AxiomC{$[a=t]^i$}
\RightLabel{$(AD)$}
\UnaryInfC{$\exists!a$}
\RightLabel{$(\neg\forall I)$}
\BinaryInfC{$\neg\forall xA$}
\RightLabel{$(=I^{nG})^i$}
\BinaryInfC{$\neg\forall xA$}
\DisplayProof
\end{tabular}
\end{adjustbox}
\end{center}
where $a$ is a fresh parameter.
\end{proof}

The same problem arises for the rules for $\iota$. If the term substituted for a parameter is complex, the resulting deductions may contain maximal formulas of higher degree than the formula removed by the corresponding reduction. The problem is solved by extending Indrzejczak's \cite{IndrzejczakCutFree} observation to the rules for $\iota$: given $(=I^{nG})$, the terms occurring in $\iota$ rules can be restricted to atomic~terms.

\begin{lemma}\cite[Lemma 8]{KurbisNormNeg}\label{restr:iota}
Given $(=I^{nG})$ and $(=E)$, $t$ and $u$ in
$(\iota I)$, $(\iota E_1)$ and $(\iota E_2)$ can be restricted to
atomic terms.
\end{lemma}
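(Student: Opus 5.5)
The plan is to follow the pattern of Lemma~\ref{restr:negqua}: every application of an $\iota$-rule in which $t$ or $u$ is complex is replaced by a derivation that applies the same rule only to fresh parameters. The bridge is an application of $(=I^{nG})$ that discharges an assumption $a=t$ (resp.\ $b=u$). Identity is transported with $(=E)$, and $(AD)$ provides $\exists!a$ from $a=t$. Where a complex term occurs both as $t$ and $u$, we first restrict $t$ and then $u$, so two nested applications of $(=I^{nG})$ may be needed.

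For $(\iota E_1)$ with premises $\iota xF=t$ and $u=t$, the replacement runs as follows. From $\D_1$ we get $\exists!t$ via $(\iota E_3)$; note that $(\iota E_3)$ has no term in the relevant position, so it needs no restriction. Assume $[a=t]^i$, and by $(sym)$ and $(=E)$ rewrite $\iota xF=t$ to $\iota xF=a$ and $u=t$ to $u=a$. If $u$ is complex, we do the same with $\exists!u$: this is obtained from $u=t$ by $(AD)$, and we introduce $[b=u]^j$ to get $b=a$. Then $(\iota E_1)$ on atomic terms yields $F^x_b$. We finish with $(=E)$, using $b=u$, to obtain $F^x_u$, and close with $(=I^{nG})^j$ and $(=I^{nG})^i$.

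There is a catch here: $(=E)$ applies only to atomic $\A$, and $F^x_b$ need not be atomic. I expect this to be the main obstacle. What I would try first is to avoid rewriting $F$ altogether. Instead, use $(=E)$ on the atomic formula $\iota xF=t$ to replace $u$-side information inside the identity, or apply $(\iota E_1)$ with $u$ kept but $t$ made atomic. A cleaner route is to show, as K\"urbis does, that transport of $F^x_b$ to $F^x_u$ is derivable for arbitrary $F$ by induction on $F$ (a substitutivity lemma from atomic $(=E)$). In $\CNFi$ that suffices. For $\KNFi$ one must also transport negated subformulas, which uses $(=E')$ for the negative-atomic base case.

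For $(\iota E_2)$ with premises $\iota xF=t$, $F^x_u$, $\exists!u$, the argument is symmetric. Open $[a=t]$ and $[b=u]$, transport $F^x_u$ to $F^x_b$ by the substitutivity lemma, and note that $\exists!b$ comes from $(AD)$ on $b=u$. Apply $(\iota E_2)$ to get $b=a$, rewrite back to $u=t$ with $(=E)$ and $(sym)$, and discharge with $(=I^{nG})$ twice, using $\exists!u$ and $\exists!t$ from $(\iota E_3)$.

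For $(\iota I)$ with premises $\exists!t$, $\D_1\colon F^x_a$ (from $[a=t]$) and $\D_2\colon a=t$ (from $[F^x_a],[\exists!a]$), take a fresh $c$ and work under $[c=t]^k$ discharged by $(=I^{nG})$ with $\exists!t$. Premise $\exists!c$ comes from $(AD)$. In $\D_1$ we replace the assumption $a=t$ by $a=c$ together with $c=t$ composed through $(=E)$. In $\D_2$ the conclusion $a=t$ is converted to $a=c$. Then $(\iota I)$ on $c$ yields $\iota xF=c$, and $(=E)$ with $c=t$ gives $\iota xF=t$. The freshness and pure-variable conditions guarantee that the eigenparameter conditions survive.
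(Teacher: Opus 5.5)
Your proposal is correct and takes essentially the same approach the paper relies on (via K\"urbis' Lemma~8, and mirrored in the paper's own proofs of Lemmas~\ref{restr:negqua} and~\ref{restr:negiota}). You use fresh parameters under $(=I^{nG})$, with $\exists!t$ obtained by $(\iota E_3)$ and $\exists!u$ obtained by $(AD)$, and you transport via $(sym)$ and $(=E)$. Your explicit substitutivity lemma (a simultaneous induction on $F$ for $F$ and $\neg F$, with $(=E')$ as the negative-atomic base and only fresh-parameter instantiations in the quantifier cases) spells out the use of $(=E)$ on non-atomic $F$ that the paper leaves tacit, for instance in its replacement for complex $u$ in $(\neg\iota I_1)$.
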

\begin{lemma}\label{restr:negiota}
 Given $(=I^{nG})$, $(=E)$, $(=E')$, and $(\neq I)$,
the terms $t$ and $u$ in $(\neg\iota I_1)$, $(\neg\iota I_2)$,
$(\neg\iota E_2)$, and $(\neg\exists!\iota E)$ can be restricted to
atomic terms.
\end{lemma}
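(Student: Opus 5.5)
The argument follows the pattern of Lemma~\ref{restr:negqua} and of K\"urbis' Lemma~8 (our Lemma~\ref{restr:iota}). Each application of $(\neg\iota I_1)$, $(\neg\iota I_2)$, $(\neg\iota E_2)$ or $(\neg\exists!\iota E)$ in which $t$ or $u$ is complex is replaced by a deduction that uses the same rule on fresh parameters. The required inputs are the premises $\exists!t$ and $\exists!u$, which are present or recoverable in every case, together with $(=I^{nG})$. We introduce $[a=t]^i$ (and $[b=u]^j$ where needed), derive $\exists!a$ from it by $(AD)$, and transfer atomic formulas and negated atomic formulas between $t$ and $a$ by $(=E)$, $(=E')$ and $(\textit{sym})$. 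The complex terms are treated one at a time, so applying the construction repeatedly removes every complex occurrence.

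\emph{Introduction rules.} Take $(\neg\iota I_1)$ with premises $\exists!t,\exists!u,F^x_u,t\neq u$. Assume $[a=t]^i$ and $[b=u]^j$. From these we obtain $\exists!a$ and $\exists!b$ by $(AD)$. We obtain $F^x_b$ from $F^x_u$ using $u=b$ and $(=E)$. Here there is a catch: $(=E)$ only applies to atomic $\A$, and $F$ may be complex. One way around this is to carry the premise through unchanged and replace only inside atomic positions. The cleaner way, which I would use, is to observe that $u$ need not be changed at all for $F^x_u$ if the only restriction needed downstream is on $t$. If $F$ is complex, I would still replace $u$ by routing through $(\neq I)$ as described next.

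For $t\neq u$ we transfer to $a\neq b$ via $(=E')$, since $t\neq u$ is negated atomic. This yields $\iota xF\neq a$ by the restricted rule. We then need $\iota xF\neq t$. This follows from $\iota xF\neq a$ and $a=t$ by $(=E')$, since $\iota xF\neq z$ is a negated atomic formula in $z$. Finally we close $[a=t]^i$ and $[b=u]^j$ by $(=I^{nG})$ using $\exists!t$ and $\exists!u$. The rule $(\neg\iota I_2)$ is handled the same way: its premises $\neg F^x_u$ and $t=u$ become $t=u\Rightarrow a=b$ by $(=E)$ and symmetry.

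\emph{Elimination rules.} For $(\neg\exists!\iota E)$, whose premises are $\neg\exists!(\iota xF)$ and $\exists!t$, we apply the rule with $a$ to get $\iota xF\neq a$. We then use $(=E')$ with $a=t$ and close by $(=I^{nG})$ over $\exists!t$.

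For $(\neg\iota E_2)$ with complex $t$, we first need $\exists!t$. This comes from the major premise $\iota xF\neq t$ by $(\neg\iota E_1)$, or else we argue the case separately. We assume $[a=t]^i$ and rewrite the major premise to $\iota xF\neq a$ by $(=E')$. We then apply $(\neg\iota E_2)$ with $a$, getting minor assumptions $t'\neq c$ and $a=d$ for fresh $c$ and $d$. These must be converted back into the original minor assumptions $t\neq c$ and $t=d$. The conversion uses $(=E')$ and $(=E)$ with $a=t$, because these are atomic and negated atomic in the term position. Substituting these derivations into the original minor deductions $\D_1,\D_2$ gives $C$ in both branches, and $(=I^{nG})$ discharges $a=t$.

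\emph{Main obstacle.} The delicate point is the case where the term to be transferred sits inside a non-atomic formula, namely $F^x_u$ or $\neg F^x_u$ with complex $u$. There $(=E)$ and $(=E')$ do not apply directly. My plan is to show, by an auxiliary induction on $F$, that $u=b,\ G^x_u\vdash G^x_b$ and $u=b,\ \neg G^x_u\vdash \neg G^x_b$. The base cases are the atomic ones. The inductive steps use the introduction and elimination rules of Figures~\ref{fig:ND1} and~\ref{fig:ND2}, including the negative ones. The quantifier steps use the already-restricted rules from Lemmas~\ref{restr:qua} and~\ref{restr:negqua}. The place $(\neq I)$ enters is the atomic case where the principal symbol is $=$ and an atomic term restriction applies. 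I expect this is where the hypothesis $(\neq I)$ is used. I also expect the paper might instead simply leave $u$ substituted in the premise and restrict only the displayed term positions.
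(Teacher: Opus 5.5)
Your proposal is correct and follows the paper's route. You introduce a fresh $a$ (and $b$) with $[a=t]^i$, discharged by $(=I^{nG})$ over $\exists!t$, which in the $(\neg\iota E_2)$ case is obtained via $(\neg\iota E_1)$; you get $\exists!a$ by $(AD)$ and transfer by $(=E)$, $(=E')$ and $(\textit{sym})$.

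Two of your choices are more careful than the paper, which applies $(=E)$ directly to a possibly non-atomic $F^x_u$ and handles $(\neg\iota E_2)$ by substituting $a$ for $t$ throughout the minor deductions. One is your simultaneous induction giving $u=b,\,G^x_u\vdash G^x_b$ and $u=b,\,\neg G^x_u\vdash\neg G^x_b$; the other is converting $a\neq c$, $a=d$ into $t\neq c$, $t=d$ and reusing $\D_1,\D_2$ unchanged. You can drop your stray remarks about leaving $u$ unchanged and about where $(\neq I)$ enters: $(=E)$ and $(=E')$ already cover the atomic and negated atomic cases, so your construction never needs $(\neq I)$.
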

\begin{proof}
Replace an application of $(\neg\exists!\iota E)$ in which $t$ is complex as follows, where $a$ is a fresh parameter:
\begin{center}
\AxiomC{$\D$}
\noLine
\UnaryInfC{$\exists!t$}
\AxiomC{$[a=t]^i$}
\AxiomC{$\E$}
\noLine
\UnaryInfC{$\neg\exists!(\iota xF)$}
\AxiomC{$[a=t]^i$}
\RightLabel{$(AD)$}
\UnaryInfC{$\exists!a$}
\RightLabel{$(\neg\exists!\iota E)$}
\BinaryInfC{$\iota xF\neq a$}
\RightLabel{$(=E')$}
\BinaryInfC{$\iota xF\neq t$}
\RightLabel{$(=I^{nG})^i$}
\BinaryInfC{$\iota xF\neq t$}
\DisplayProof
\end{center}
  
Replace an application of $(\neg\iota I_1)$ in which $t$ is
complex by the following, where $a$ is a fresh parameter:
\begin{center}
\AxiomC{$\D_t$}
\noLine
\UnaryInfC{$\exists!t$}
\AxiomC{$[a=t]^i$}
\AxiomC{$[a=t]^i$}
\RightLabel{$(AD)$}
\UnaryInfC{$\exists!a$}
\AxiomC{$\D_u$}
\noLine
\UnaryInfC{$\exists!u$}
\AxiomC{$\D$}
\noLine
\UnaryInfC{$F^x_u$}
\AxiomC{$\D'$}
\noLine
\UnaryInfC{$t\neq u$}
\AxiomC{$[a=t]^i$}
\RightLabel{$(sym)$}
\UnaryInfC{$t=a$}
\RightLabel{$(=E')$}
\BinaryInfC{$a\neq u$}
\RightLabel{$(\neg\iota I_1)$}
\QuaternaryInfC{$\iota xF\neq a$}
\RightLabel{$(=E')$}
\BinaryInfC{$\iota xF\neq t$}
\RightLabel{$(=I^{nG})^i$}
\BinaryInfC{$\iota xF\neq t$}
\DisplayProof
\end{center}

If $u$ is complex, replace it by a fresh parameter $b$ (if both $t$ and $u$ are complex, apply these two replacements successively):
\begin{center}
\begin{adjustbox}{max width=\textwidth}
\AxiomC{$\D_u$}
\noLine
\UnaryInfC{$\exists!u$}
\AxiomC{$\D_t$}
\noLine
\UnaryInfC{$\exists!t$}
\AxiomC{$[b=u]^i$}
\RightLabel{$(AD)$}
\UnaryInfC{$\exists!b$}
\AxiomC{$\D$}
\noLine
\UnaryInfC{$F^x_u$}
\AxiomC{$[b=u]^i$}
\RightLabel{$(sym)$}
\UnaryInfC{$u=b$}
\RightLabel{$(=E)$}
\BinaryInfC{$F^x_b$}
\AxiomC{$\D'$}
\noLine
\UnaryInfC{$t\neq u$}
\AxiomC{$[b=u]^i$}
\RightLabel{$(sym)$}
\UnaryInfC{$u=b$}
\RightLabel{$(=E')$}
\BinaryInfC{$t\neq b$}
\RightLabel{$(\neg\iota I_1)$}
\QuaternaryInfC{$\iota xF\neq t$}
\RightLabel{$(=I^{nG})^i$}
\BinaryInfC{$\iota xF\neq t$}
\DisplayProof
\end{adjustbox}
\end{center}

The case of $(\neg\iota I_2)$ is analogous. 
%
%
%
%
%
%
%
%
%
Next, consider $(\neg\iota E_2)$. If the term $t$ in the major premise $\iota xF\neq t$ is complex, introduce a fresh parameter $a$. The original application is replaced by the following:
\begin{center}
\begin{adjustbox}{max width=\textwidth}
\AxiomC{$\D$}
\noLine
\UnaryInfC{$\iota xF\neq t$}
\RightLabel{$(\neg\iota E_1)$}
\UnaryInfC{$\exists!t$}
\AxiomC{$[a=t]^i$}
\LL{$(sym)$}
\UnaryInfC{$t=a$}
\AxiomC{$\D$}
\noLine
\UnaryInfC{$\iota xF\neq t$}
\LL{$(=E')$}
\BinaryInfC{$\iota xF\neq a$}
\AxiomC{$[\exists!b]^j\;[F^x_b]^k\;[a\neq b]^\ell$}
\noLine
\UnaryInfC{$\D_1{}^t_a$}
\noLine
\UnaryInfC{$C$}
\AxiomC{$[\exists!c]^m\;[\neg F^x_c]^n\;[a=c]^p$}
\noLine
\UnaryInfC{$\D_2{}^t_a$}
\noLine
\UnaryInfC{$C$}
\LL{$(\neg\iota E_2)^{j,k,\ell,m,n,p}$}
\TrinaryInfC{$C$}
\RightLabel{$(=I^{nG})^i$}
\BinaryInfC{$C$}
\DisplayProof
\end{adjustbox}
\end{center}

$\D_1{}^t_a$ and $\D_2{}^t_a$ are obtained from the original
minor deductions by replacing $t$ by $a$, inserting applications of $(=E)$, $(=E')$, and $(\neq I)$ where required.
\end{proof}

\begin{definition}
Let $\KNFia$ be the system obtained from $\KNFi$ by replacing $(=I^n)$ with $(=I^{nG})$ and by restricting the instantiating terms in $(\forall E), (\exists I), (\neg\exists E), (\neg\forall I)$, and the relevant terms in $(\iota I), (\iota E_1), $ $(\iota E_2), $ $(\neg\iota I_1), (\neg\iota I_2), (\neg\iota E_2), (\neg\exists!\iota E)$, to atomic terms. 
\end{definition}

In what follows, we prove normalisation for $\KNFia$. By Lemmas \ref{InGder}, \ref{Ider}, \ref{restr:qua}, \ref{restr:negqua}, \ref{restr:iota}, and \ref{restr:negiota}, we get the following corollary.   
\begin{coro}
For each set of formulas $\Gamma\cup \{A\}$,   $\Gamma \vdash_{\KNFia}A$ iff $\Gamma \vdash_{\KNFi}A$.
\end{coro}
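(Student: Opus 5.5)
We prove the two directions separately, each by induction on the height of deductions, using the lemmas just established to translate rule applications one at a time.

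For the direction from $\KNFia$ to $\KNFi$, we check that every rule of $\KNFia$ is derivable in $\KNFi$.
\begin{itemize}
\item The restricted versions of $(\forall E)$, $(\exists I)$, $(\neg\exists E)$, $(\neg\forall I)$ and of the $\iota$-rules with atomic terms are special cases of the corresponding unrestricted rules of $\KNFi$.
\item The only genuinely new rule is $(=I^{nG})$. By Lemma~\ref{InGder} it is derivable from $(=I^n)$, $(\exists I)$ and $(\exists E)$, all of which belong to $\KNFi$.
\end{itemize}
So any $\KNFia$-deduction of $A$ from $\Gamma$ becomes a $\KNFi$-deduction with the same open assumptions. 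The discharged assumption class $a=t$ of $(=I^{nG})$ is closed by the $(\exists E)$ in the derivation of Lemma~\ref{InGder}, so no new open assumptions arise.

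For the converse direction, we transform a $\KNFi$-deduction into a $\KNFia$-deduction.
\begin{itemize}
\item First replace every application of $(=I^n)$ by its derivation from $(=E)$ and $(=I^{nG})$, as in Lemma~\ref{Ider}.
\item Next, work top-down, or by induction on the number of offending applications. Replace each application of $(\forall E)$ or $(\exists I)$ with a complex instantiating term using Lemma~\ref{restr:qua}, and each application of $(\neg\exists E)$ or $(\neg\forall I)$ using Lemma~\ref{restr:negqua}.
\item Replace each application of $(\iota I)$, $(\iota E_1)$, $(\iota E_2)$ with complex $t$ or $u$ using Lemma~\ref{restr:iota}, and each application of $(\neg\iota I_1)$, $(\neg\iota I_2)$, $(\neg\iota E_2)$, $(\neg\exists!\iota E)$ using Lemma~\ref{restr:negiota}.
\end{itemize}
The replacement deductions use only $(=I^{nG})$, $(AD)$, $(sym)$, $(=E)$, $(=E')$, $(\neq I)$, $(\neg\iota E_1)$ and the restricted rules. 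These are all available in $\KNFia$, since $(sym)$ and $(\neq sym)$ are obtained there through the $\KNFia$-derivation of $(=I^n)$.

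The main point to check is termination of the replacement procedure. The concern is that an inserted replacement might itself contain a new unrestricted application. I will argue that each replacement instantiates only with the fresh atomic parameter $a$ (or $b$). The only remaining occurrences of the complex term $t$ are in $(=E)$, $(=E')$, $(AD)$, $(\neg\iota E_1)$ and $(=I^{nG})$, which carry no atomicity restriction in $\KNFia$. Hence the number of unrestricted applications strictly decreases with each step.

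The delicate case is $(\neg\iota E_2)$. There the minor deductions $\D_1$, $\D_2$ are rewritten as $\D_1{}^t_a$, $\D_2{}^t_a$, with bridging applications of $(=E)$, $(=E')$ and $(\neq I)$. Here I will argue by a side induction on $\D_1$, $\D_2$. The discharged assumptions $t\neq b$ and $t=c$ are recovered from $a\neq b$ and $a=c$ via the available $a=t$, and these bridges introduce no new unrestricted rule applications. The pure variable condition ensures that the fresh parameters cause no clashes throughout the procedure.
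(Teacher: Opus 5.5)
Your proposal is correct and follows the paper's route. The paper reads the corollary off Lemma~\ref{InGder} for the direction from $\KNFia$ to $\KNFi$, and off Lemmas~\ref{Ider}, \ref{restr:qua}, \ref{restr:negqua}, \ref{restr:iota}, \ref{restr:negiota} for the converse, exactly as you do. Your extra details are sound elaborations of what the paper leaves implicit, with two refinements. First, always process a topmost offending application: the $(\neg\iota E_2)$ replacement duplicates the deduction of $\iota xF\neq t$, so otherwise the count of unrestricted applications need not decrease. Second, your bridge deriving $t\neq b$ and $t=c$ from $a\neq b$, $a=c$ and $a=t$ by $(=E')$ and $(=E)$ lets you keep $\D_1,\D_2$ unchanged, which makes the rewriting to $\D_1{}^t_a$ and the side induction unnecessary.
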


As pointed out in \cite[p. 260-261]{KurbisNormNeg}, K\"urbis \cite{KurbisIntuitionistic21,KurbisClassical22}
counts as maximal also those formulas which are discharged by general
introduction rules and subsequently occur as major premises of  elimination rules. For such formulas one could define a reduction procedure. In the case of an application of $(=I^{nG})$,
all maximal formulas generated by that application would have to be replaced simultaneously as follows:
\begin{center}
\begin{adjustbox}{max width=\textwidth}
\begin{tabular}{c@{\quad$\leadsto$\quad}c}
\AxiomC{$[a=t]^i$}
\AxiomC{$\mathfrak{E}$}
\noLine
\UnaryInfC{$A^x_a$}
\RightLabel{$(=E)$}
\BinaryInfC{$A^x_t$}
\noLine
\UnaryInfC{$\D$}
\noLine
\UnaryInfC{$C$}
\AxiomC{$\exists!t$}
\LL{$(=I^{nG})^i$}
\BinaryInfC{$C$}
\DisplayProof

&

\AxiomC{$\mathfrak{E}^a_t{}^*$}
\noLine
\UnaryInfC{$A^x_t$}
\noLine
\UnaryInfC{$\D^a_t{}^*$}
\noLine
\UnaryInfC{$C$}
\DisplayProof
\end{tabular}
\end{adjustbox}
\end{center}

If the assumption class of $a=t$ contains non-maximal formulas, then
$\D^a_t{}^*$ and $\mathfrak{E}^a_t{}^*$ are obtained by deleting the vacuous applications of $(=E)$ that arise after replacing $a$ by $t$. In the presence of $\iota$-terms, however, replacing a parameter by an arbitrary term may create maximal formulas of unknown degree. Since $t$ in $(=I^{nG})$ need not be atomic, the reduction above is not available in general. Following K\"urbis \cite[p.~261]{KurbisNormNeg}, we therefore do not count formulas discharged by $(=I^{nG})$ as maximal.

Let us describe the reduction procedures for maximal formulas. The usual permutation conversions for maximal segments generated by gel-rules are those of Prawitz and K\"{u}rbis and are included among our reduction procedures. The reduction procedures for $\wedge$, $\vee$, $\to$ are standard and described by Prawitz \cite{Prawitz}.   The reductions for $\forall$, $\exists$, and $\iota$ are presented by K\"{u}rbis \cite{KurbisNormNeg}. The reductions for $\neg\neg$, $\neg\wedge$, $\neg\vee$, and $\neg{\to}$ are given in \cite{KurbisPetrukhin21}. The reductions for $\neg\exists$, $\neg\forall$, negative $\iota$, and $\neg\exists!(\iota xF)$ are displayed below. 

\begin{center}
\begin{adjustbox}{max width=\textwidth}
\begin{tabular}{c@{\quad$\leadsto$\quad}c@{\qquad}c@{\quad$\leadsto$\quad}c}
\multicolumn{2}{c}{
$(\neg\exists I)$ followed by $(\neg\exists E)$}
&
\multicolumn{2}{c}{
$(\neg\forall I)$ followed by $(\neg\forall E)$}
\\[1.5ex]

\AxiomC{$[\exists!a]^i$}
\noLine
\UnaryInfC{$\D$}
\noLine
\UnaryInfC{$\neg A^x_a$}
\LL{$(\neg\exists I)^i$}
\UnaryInfC{$\neg\exists xA$}
\AxiomC{$\D'$}
\noLine
\UnaryInfC{$\exists!t$}
\LL{$(\neg\exists E)$}
\BinaryInfC{$\neg A^x_t$}
\DisplayProof

&

\AxiomC{$\D'$}
\noLine
\UnaryInfC{$[\exists!t]$}
\noLine
\UnaryInfC{$\D^a_t$}
\noLine
\UnaryInfC{$\neg A^x_t$}
\DisplayProof

&

\AxiomC{$\D'$}
\noLine
\UnaryInfC{$\neg A^x_t$}
\AxiomC{$\D''$}
\noLine
\UnaryInfC{$\exists!t$}
\LL{$(\neg\forall I)$}
\BinaryInfC{$\neg\forall xA$}
\AxiomC{$[\neg A^x_a]^i \; [\exists!a]^j$}
\noLine
\UnaryInfC{$\D$}
\noLine
\UnaryInfC{$C$}
\LL{$(\neg\forall E)^{i,j}$}
\BinaryInfC{$C$}
\DisplayProof

&

\AxiomC{$\D'\quad\; \D''$}
\noLine
\UnaryInfC{$[\neg A^x_t]\;[\exists!t]$}
\noLine
\UIC{$\D^a_t$}
\noLine
\UnaryInfC{$C$}
\DisplayProof
\end{tabular}
\end{adjustbox}
\end{center}

\begin{center}
\begin{adjustbox}{max width=\textwidth}
\begin{tabular}{c@{\quad$\leadsto$\quad}c@{\qquad}c@{\quad$\leadsto$\quad}c}
\multicolumn{2}{c}{
$(\neg\iota I_1)$ followed by $(\neg\iota E_1)$}
&
\multicolumn{2}{c}{
$(\neg\exists!\iota I)$ followed by $(\neg\exists!\iota E)$}
\\[1.5ex]

\AxiomC{$\D_1$}
\noLine
\UnaryInfC{$\exists!t$}
\AxiomC{$\D_2$}
\noLine
\UnaryInfC{$\exists!u$}
\AxiomC{$\D_3$}
\noLine
\UnaryInfC{$F^x_u$}
\AxiomC{$\D_4$}
\noLine
\UnaryInfC{$t\neq u$}
\LL{$(\neg\iota I_1)$}
\QuaternaryInfC{$\iota xF\neq t$}
\LL{$(\neg\iota E_1)$}
\UnaryInfC{$\exists!t$}
\DisplayProof

&

\AxiomC{$\D_1$}
\noLine
\UnaryInfC{$\exists!t$}
\DisplayProof

&

\AxiomC{$[\exists!a]^i$}
\noLine
\UnaryInfC{$\D$}
\noLine
\UnaryInfC{$\iota xF\neq a$}
\LL{$(\neg\exists!\iota I)^i$}
\UnaryInfC{$\neg\exists!(\iota xF)$}

\AxiomC{$\D'$}
\noLine
\UnaryInfC{$\exists!t$}
\LL{$(\neg\exists!\iota E)$}
\BinaryInfC{$\iota xF\neq t$}
\DisplayProof

&

\AxiomC{$\D'$}
\noLine
\UnaryInfC{$[\exists!t]$}
\noLine
\UnaryInfC{$\D^a_t$}
\noLine
\UnaryInfC{$\iota xF\neq t$}
\DisplayProof

\end{tabular}
\end{adjustbox}
\end{center}

($\neg \iota I_2$) followed by ($\neg \iota E_1$). Similarly to the previous one.
%
%
%
%
%
%
%
%
%
%
($\neg \iota I_1$) followed by ($\neg \iota E_2$)
\begin{center}
\begin{adjustbox}{max width=\textwidth}
\begin{tabular}{c@{\quad$\leadsto$\quad}c}
\AxiomC{$\D_1$}
\noLine
\UnaryInfC{$\exists!t$}
\AxiomC{$\D_2$}
\noLine
\UnaryInfC{$\exists!u$}
\AxiomC{$\D_3$}
\noLine
\UnaryInfC{$F^x_u$}
\AxiomC{$\D_4$}
\noLine
\UnaryInfC{$t\neq u$}
\LeftLabel{$(\neg\iota I_1)$}
\QuaternaryInfC{$\iota xF\neq t$}
\AxiomC{$[\exists!a]^i\,[F^x_a]^j\,[t\neq a]^k$}
\noLine
\UnaryInfC{$\D$}
\noLine
\UnaryInfC{$C$}
\AxiomC{$[\exists!b]^\ell\,[\neg F^x_b]^m\,[t=b]^n$}
\noLine
\UnaryInfC{$\D'$}
\noLine
\UnaryInfC{$C$}
\LeftLabel{$(\neg\iota E_2)^{i,\ldots,n}$}
\TrinaryInfC{$C$}
\DisplayProof

&

\AxiomC{$\D_2$}
\noLine
\UnaryInfC{$[\exists!u]$}
\AxiomC{$\D_3$}
\noLine
\UnaryInfC{$[F^x_u]$}
\AxiomC{$\D_4$}
\noLine
\UnaryInfC{$[t\neq u]$}
\noLine
\TrinaryInfC{$\D^a_u$}
\noLine
\UnaryInfC{$C$}
\DisplayProof
\end{tabular}
\end{adjustbox}
\end{center}

($\neg \iota I_2$) followed by ($\neg \iota E_2$). Similarly to the previous case.
%
%
%
($\not = I$) followed by ($\neg \iota E_2$).
\begin{center}
\begin{adjustbox}{max width=\textwidth}
\begin{tabular}{c@{\quad$\leadsto$\quad}c}
\AxiomC{$\D_1$}
\noLine
\UnaryInfC{$A^x_{\iota xF}$}
\AxiomC{$\D_2$}
\noLine
\UnaryInfC{$\neg A^x_t$}
\LL{$(\neq I)$}
\BinaryInfC{$\iota xF\neq t$}
\AxiomC{$[\exists!a]^i\;[F^x_a]^j\;[t\neq a]^k$}
\noLine
\UnaryInfC{$\D_3$}
\noLine
\UnaryInfC{$C$}
\AxiomC{$[\exists!b]^\ell\;[\neg F^x_b]^m\;[t=b]^n$}
\noLine
\UnaryInfC{$\D_4$}
\noLine
\UnaryInfC{$C$}
\LL{$(\neg\iota E_2)^{i,\ldots,n}$}
\TrinaryInfC{$C$}
\DisplayProof

&

\AxiomC{$\D_1$}
\noLine
\UnaryInfC{$A^x_{\iota xF}$}
\LL{$(AD)$}
\UnaryInfC{$\exists!(\iota xF)$}
\AxiomC{$[a=\iota xF]^r$}
\noLine
\UnaryInfC{$\D_3^*$}
\noLine
\UnaryInfC{$C$}
\LL{$(=I^{nG})^r$}
\BinaryInfC{$C$}
\DisplayProof
\end{tabular}
\end{adjustbox}
\end{center}

$\D_3^*$ is obtained from $\D_3$ by replacing the assumptions
$[\exists!a]^i$, $[F^x_a]^j$, and $[t\neq a]^k$ by the deductions:
\begin{center}
\begin{adjustbox}{max width=\textwidth}
\begin{tabular}{c@{\qquad}c@{\qquad}c}
\AxiomC{$[a=\iota xF]^r$}
\RightLabel{$(AD)$}
\UnaryInfC{$\exists!a$}
\DisplayProof

&

\AxiomC{$[a=\iota xF]^r$}
\RightLabel{$(sym)$}
\UnaryInfC{$\iota xF=a$}
\AxiomC{$[a=\iota xF]^r$}
\RightLabel{$(AD)$}
\UnaryInfC{$\exists!a$}
\RightLabel{$(=I^n)$}
\UnaryInfC{$a=a$}
\RightLabel{$(\iota E_1)$}
\BinaryInfC{$F^x_a$}
\DisplayProof

&

\AxiomC{$\D_5$}
\noLine
\UnaryInfC{$a\neq t$}
\RightLabel{$(\not=D)$}
\UnaryInfC{$\exists!t$}
\RightLabel{$(=I^n)$}
\UnaryInfC{$t=t$}
\AxiomC{$\D_5$}
\noLine
\UnaryInfC{$a\neq t$}
\RightLabel{$(\neq I)$}
\BinaryInfC{$t\neq a$}
\DisplayProof
\end{tabular}
\end{adjustbox}
\end{center}

where $\D_5$ is as follows:
\begin{center}
\AxiomC{$\D_1$}
\noLine
\UnaryInfC{$A^x_{\iota xF}$}
\AxiomC{$[a=\iota xF]^r$}
\RightLabel{$(sym)$}
\UnaryInfC{$\iota xF=a$}
\RightLabel{$(=E)$}
\BinaryInfC{$A^x_a$}
\AxiomC{$\D_2$}
\noLine
\UnaryInfC{$\neg A^x_t$}
\RightLabel{$(\neq I)$}
\BinaryInfC{$a\neq t$}
\DisplayProof
\end{center}

We also need special reductions involving the rule (\textit{EFQ}).
The corresponding reductions for (\textit{EFQ}) and the propositional
negation rules are given in \cite{KurbisPetrukhin21}. As an example,
let us consider the case of (\textit{EFQ}) following
$(\neg\forall I)$. The cases involving $(\neg\exists I)$, $(\neq I)$,
$(\neg\iota I_1)$, $(\neg\iota I_2)$, and
$(\neg\exists!\iota I)$ are treated similarly.
\begin{center}
\begin{adjustbox}{max width=\textwidth}
\begin{tabular}{c@{\qquad$\leadsto$\qquad}c}
\AxiomC{$\D_1$}
\noLine
\UnaryInfC{$\neg A^x_t$}
\AxiomC{$\D_2$}
\noLine
\UnaryInfC{$\exists!t$}
\LL{$(\neg\forall I)$}
\BinaryInfC{$\neg\forall xA$}
\AxiomC{$\D_3$}
\noLine
\UnaryInfC{$\forall xA$}
\LL{(\textit{EFQ})}
\BinaryInfC{$C$}
\DisplayProof

&

\AxiomC{$\D_1$}
\noLine
\UnaryInfC{$\neg A^x_t$}
\AxiomC{$\D_3$}
\noLine
\UnaryInfC{$\forall xA$}
\AxiomC{$\D_2$}
\noLine
\UnaryInfC{$\exists!t$}
\LL{$(\forall E)$}
\BinaryInfC{$A^x_t$}
\LL{(\textit{EFQ})}
\BinaryInfC{$C$}
\DisplayProof
\end{tabular}
\end{adjustbox}
\end{center}

\begin{definition}
A \emph{path} in a deduction is a sequence of formula occurrences
$A_1,\ldots,A_n$ such that $A_1$ is an assumption, $A_n$ is the conclusion of the deduction, and, for each $i<n$, $A_i$ is a premise immediately above $A_{i+1}$. For applications of $(=I^{nG})$, paths continue from the formula $C$ immediately above the rule to the conclusion $C$ of that application. They also continue, when
needed, to assumptions discharged by that application. 
A \emph{track} is an initial part of a path which stops either at the first minor premise of an elimination rule or at the conclusion of the deduction; in the presence of general elimination rules, a track leads as usual from the major premise through the cancelled assumptions downwards to the conclusion.
\end{definition}

\begin{definition}\cite[p. 565]{Zimmermann}
A \emph{trace} is a part of a track, where the first formula occurrence is a conclusion of an application of Peirce's rule $(P)$,  and every formula occurrence is a conclusion of an arbitrary elimination rule. A trace is non-trivial if its length is greater than $1$. A deduction is $P$-normal if it contains no non-trivial traces of applications of $(P)$.
\end{definition}

Partial implication contractions reduce the length of traces. They do not remove the application of $(P)$ altogether, but replace it by a new application of $(P)$ whose trace is shorter. The cases of $(P)$ followed by $(\vee E)$, $(\to E)$, $(\wedge E)$ are covered in \cite{Zimmermann}. The cases of $(P)$ followed by $(\neg\neg E)$, $(\neg{\vee} E_1)$, $(\neg{\vee} E_2)$, $(\neg{\to} E_1)$, $(\neg{\to} E_2)$, $(\neg{\wedge} E)$ are described in \cite{KurbisPetrukhin21}. 
Let us describe the remaining cases. 

$(P)$ followed by $(\forall E)$
\begin{center}
\begin{adjustbox}{max width=\textwidth}
\begin{tabular}{c@{\quad$\leadsto$\quad}c}
\AxiomC{$[\forall xA\to C]^i$}
\noLine
\UnaryInfC{$\D$}
\noLine
\UnaryInfC{$\forall xA$}
\LL{$(P)^i$}
\UnaryInfC{$\forall xA$}
\AxiomC{$\D'$}
\noLine
\UnaryInfC{$\exists!t$}
\LL{$(\forall E)$}
\BinaryInfC{$A^x_t$}
\DisplayProof

&

\AxiomC{$[A^x_t\to C]^j$}
\AxiomC{$[\forall xA]^k$}
\AxiomC{$\D'$}
\noLine
\UnaryInfC{$\exists!t$}
\LL{$(\forall E)$}
\BinaryInfC{$A^x_t$}
\RL{$(\to E)$}
\BinaryInfC{$C$}
\LL{$(\to I)^k$}
\UnaryInfC{$\forall xA\to C$}
\noLine
\UnaryInfC{$\D$}
\noLine
\UnaryInfC{$\forall xA$}
\AxiomC{$\D'$}
\noLine
\UnaryInfC{$\exists!t$}
\LL{$(\forall E)$}
\BinaryInfC{$A^x_t$}
\LL{$(P)^j$}
\UnaryInfC{$A^x_t$}
\DisplayProof
\end{tabular}
\end{adjustbox}
\end{center}

$(P)$ followed by $(\exists E)$
\begin{center}
\begin{adjustbox}{max width=\textwidth}
\begin{tabular}{c@{\quad$\leadsto$\quad}c}
\AxiomC{$[\exists xA\to C]^i$}
\noLine
\UnaryInfC{$\D$}
\noLine
\UnaryInfC{$\exists xA$}
\LL{$(P)^i$}
\UnaryInfC{$\exists xA$}
\AxiomC{$[A^x_a]^k\, [\exists!a]^\ell$}
\noLine
\UnaryInfC{$\D'$}
\noLine
\UnaryInfC{$B$}
\LL{$(\exists E)^{k,\ell}$}
\BinaryInfC{$B$}
\DisplayProof

&

\AxiomC{$[B\to C]^j$}
\AxiomC{$[\exists xA]^m$}
\AxiomC{$[A^x_a]^k\, [\exists!a]^\ell$}
\noLine
\UnaryInfC{$\D'$}
\noLine
\UnaryInfC{$B$}
\LL{$(\exists E)^{k,\ell}$}
\BinaryInfC{$B$}
\RL{$(\to E)$}
\BinaryInfC{$C$}
\LL{$(\to I)^m$}
\UnaryInfC{$\exists xA\to C$}
\noLine
\UnaryInfC{$\D$}
\noLine
\UnaryInfC{$\exists xA$}
\AxiomC{$[A^x_a]^k\, [\exists!a]^\ell$}
\noLine
\UnaryInfC{$\D'$}
\noLine
\UnaryInfC{$B$}
\LL{$(\exists E)^{k,\ell}$}
\BinaryInfC{$B$}
\LL{$(P)^j$}
\UnaryInfC{$B$}
\DisplayProof
\end{tabular}
\end{adjustbox}
\end{center}

$(P)$ followed by $(\neg\forall E)$. 
\begin{center}
\begin{adjustbox}{max width=\textwidth}
\begin{tabular}{c@{\quad$\leadsto$\quad}c}

\AxiomC{$[\neg\forall xA\to C]^i$}
\noLine
\UnaryInfC{$\D$}
\noLine
\UnaryInfC{$\neg\forall xA$}
\LL{$(P)^i$}
\UnaryInfC{$\neg\forall xA$}

\AxiomC{$[\neg A^x_a]^k\, [\exists!a]^\ell$}
\noLine
\UnaryInfC{$\D'$}
\noLine
\UnaryInfC{$B$}

\LL{$(\neg\forall E)^{k,\ell}$}
\BinaryInfC{$B$}
\DisplayProof

&

\AxiomC{$[B\to C]^j$}

\AxiomC{$[\neg\forall xA]^m$}
\AxiomC{$[\neg A^x_a]^k\, [\exists!a]^\ell$}
\noLine
\UnaryInfC{$\D'$}
\noLine
\UnaryInfC{$B$}
\RL{$(\neg\forall E)^{k,\ell}$}
\BinaryInfC{$B$}

\RL{$(\to E)$}
\BinaryInfC{$C$}
\RL{$(\to I)^m$}
\UnaryInfC{$\neg\forall xA\to C$}

\noLine
\UnaryInfC{$\D$}
\noLine
\UnaryInfC{$\neg\forall xA$}

\AxiomC{$[\neg A^x_a]^k\, [\exists!a]^\ell$}
\noLine
\UnaryInfC{$\D'$}
\noLine
\UnaryInfC{$B$}

\LL{$(\neg\forall E)^{k,\ell}$}
\BinaryInfC{$B$}

\LL{$(P)^j$}
\UnaryInfC{$B$}
\DisplayProof

\end{tabular}
\end{adjustbox}
\end{center}

$(P)$ followed by $(\neg\exists E)$
\begin{center}
\begin{adjustbox}{max width=\textwidth}
\begin{tabular}{c@{\quad$\leadsto$\quad}c}
\AxiomC{$[\neg\exists xA\to C]^i$}
\noLine
\UnaryInfC{$\D$}
\noLine
\UnaryInfC{$\neg\exists xA$}
\LL{$(P)^i$}
\UnaryInfC{$\neg\exists xA$}

\AxiomC{$\D'$}
\noLine
\UnaryInfC{$\exists!t$}

\LL{$(\neg\exists E)$}
\BinaryInfC{$\neg A^x_t$}
\DisplayProof

&

\AxiomC{$[\neg A^x_t\to C]^j$}

\AxiomC{$[\neg\exists xA]^k$}
\AxiomC{$\D'$}
\noLine
\UnaryInfC{$\exists!t$}
\LL{$(\neg\exists E)$}
\BinaryInfC{$\neg A^x_t$}

\RL{$(\to E)$}
\BinaryInfC{$C$}
\LL{$(\to I)^k$}
\UnaryInfC{$\neg\exists xA\to C$}

\noLine
\UnaryInfC{$\D$}
\noLine
\UnaryInfC{$\neg\exists xA$}

\AxiomC{$\D'$}
\noLine
\UnaryInfC{$\exists!t$}

\LL{$(\neg\exists E)$}
\BinaryInfC{$\neg A^x_t$}

\LL{$(P)^j$}
\UnaryInfC{$\neg A^x_t$}
\DisplayProof

\end{tabular}
\end{adjustbox}
\end{center}

 $(P)$ followed by $(\iota E_1)$
\begin{center}
\begin{adjustbox}{max width=\textwidth}
\begin{tabular}{c@{\quad$\leadsto$\quad}c}

\AxiomC{$[\iota xF=t\to C]^i$}
\noLine
\UnaryInfC{$\D$}
\noLine
\UnaryInfC{$\iota xF=t$}
\LL{$(P)^i$}
\UnaryInfC{$\iota xF=t$}

\AxiomC{$\D'$}
\noLine
\UnaryInfC{$u=t$}

\LL{$(\iota E_1)$}
\BinaryInfC{$F^x_u$}
\DisplayProof

&

\AxiomC{$[F^x_u\to C]^j$}

\AxiomC{$[\iota xF=t]^k$}
\AxiomC{$\D'$}
\noLine
\UnaryInfC{$u=t$}
\LL{$(\iota E_1)$}
\BinaryInfC{$F^x_u$}

\RL{$(\to E)$}
\BinaryInfC{$C$}
\LL{$(\to I)^k$}
\UnaryInfC{$\iota xF=t\to C$}

\noLine
\UnaryInfC{$\D$}
\noLine
\UnaryInfC{$\iota xF=t$}

\AxiomC{$\D'$}
\noLine
\UnaryInfC{$u=t$}

\LL{$(\iota E_1)$}
\BinaryInfC{$F^x_u$}

\LL{$(P)^j$}
\UnaryInfC{$F^x_u$}
\DisplayProof

\end{tabular}
\end{adjustbox}
\end{center}

$(P)$ followed by $(\iota E_2)$. 
%
%
%
%
%
%
%
%
%
%
%
%
%
%
$(P)$ followed by $(\iota E_3)$. 
%
%
%
%
%
%
%
%
%
%
$(P)$ followed by $(\neg\iota E_1)$. Similarly to the previous cases. 
$(P)$ followed by $(\neg\iota E_2)$
\begin{center}
    \AxiomC{$[\iota xF\neq t\to C]^r$}
\noLine
\UnaryInfC{$\D_1$}
\noLine
\UnaryInfC{$\iota xF\neq t$}
\RL{$(P)^r$}
\UnaryInfC{$\iota xF\neq t$}

\AxiomC{$[\exists!a]^i[F^x_a]^j[t\neq a]^k$}
\noLine
\UnaryInfC{$\D_2$}
\noLine
\UnaryInfC{$B$}

\AxiomC{$[\exists!b]^\ell[\neg F^x_b]^m[t=b]^n$}
\noLine
\UnaryInfC{$\D_3$}
\noLine
\UnaryInfC{$B$}

\LL{$(\neg\iota E_2)^{i,j,k,\ell,m,n}$}
\TrinaryInfC{$B$}
\DisplayProof
\quad$\leadsto$\quad
\end{center}

\begin{center}
\begin{adjustbox}{max width=\textwidth}
    \AxiomC{$[B\to C]^s$}
\AxiomC{$[\iota xF\neq t]^q$}
\AxiomC{$[\exists!a]^i[F^x_a]^j[t\neq a]^k$}
\noLine
\UnaryInfC{$\D_2$}
\noLine
\UnaryInfC{$B$}
\AxiomC{$[\exists!b]^\ell[\neg F^x_b]^m[t=b]^n$}
\noLine
\UnaryInfC{$\D_3$}
\noLine
\UnaryInfC{$B$}
\LL{$(\neg\iota E_2)^{i,j,k,\ell,m,n}$}
\TrinaryInfC{$B$}
\RL{$(\to E)$}
\BinaryInfC{$C$}
\RL{$(\to I)^q$}
\UnaryInfC{$\iota xF\neq t\to C$}
\noLine
\UnaryInfC{$\D_1$}
\noLine
\UnaryInfC{$\iota xF\neq t$}
\AxiomC{$[\exists!a]^i[F^x_a]^j[t\neq a]^k$}
\noLine
\UnaryInfC{$\D_2$}
\noLine
\UnaryInfC{$B$}
\AxiomC{$[\exists!b]^\ell[\neg F^x_b]^m[t=b]^n$}
\noLine
\UnaryInfC{$\D_3$}
\noLine
\UnaryInfC{$B$}
\LL{$(\neg\iota E_2)^{i,j,k,\ell,m,n}$}
\TrinaryInfC{$B$}
\LL{$(P)^s$}
\UnaryInfC{$B$}
\DisplayProof
\end{adjustbox}
\end{center}

$(P)$ followed by $(\neg\exists!\iota E)$
\begin{center}
\begin{adjustbox}{max width=\textwidth}
\begin{tabular}{c@{\quad$\leadsto$\quad}c}

\AxiomC{$[\neg\exists!(\iota xF)\to C]^i$}
\noLine
\UnaryInfC{$\D$}
\noLine
\UnaryInfC{$\neg\exists!(\iota xF)$}
\LL{$(P)^i$}
\UnaryInfC{$\neg\exists!(\iota xF)$}

\AxiomC{$\D'$}
\noLine
\UnaryInfC{$\exists!t$}

\LL{$(\neg\exists!\iota E)$}
\BinaryInfC{$\iota xF\neq t$}
\DisplayProof

&

\AxiomC{$[\iota xF\neq t\to C]^j$}

\AxiomC{$[\neg\exists!(\iota xF)]^k$}
\AxiomC{$\D'$}
\noLine
\UnaryInfC{$\exists!t$}
\LL{$(\neg\exists!\iota E)$}
\BinaryInfC{$\iota xF\neq t$}

\RL{$(\to E)$}
\BinaryInfC{$C$}
\LL{$(\to I)^k$}
\UnaryInfC{$\neg\exists!(\iota xF)\to C$}

\noLine
\UnaryInfC{$\D$}
\noLine
\UnaryInfC{$\neg\exists!(\iota xF)$}

\AxiomC{$\D'$}
\noLine
\UnaryInfC{$\exists!t$}

\LL{$(\neg\exists!\iota E)$}
\BinaryInfC{$\iota xF\neq t$}

\LL{$(P)^j$}
\UnaryInfC{$\iota xF\neq t$}
\DisplayProof

\end{tabular}
\end{adjustbox}
\end{center}

$(P)$ followed by $(=E)$. The case $(P)$ followed by $(=E')$ is treated similarly.
\begin{center}
\begin{adjustbox}{max width=\textwidth}
\begin{tabular}{c@{\quad$\leadsto$\quad}c}

\AxiomC{$[t_1=t_2\to C]^i$}
\noLine
\UnaryInfC{$\D$}
\noLine
\UnaryInfC{$t_1=t_2$}
\LL{$(P)^i$}
\UnaryInfC{$t_1=t_2$}

\AxiomC{$\D'$}
\noLine
\UnaryInfC{$\mathcal A^x_{t_1}$}

\LL{$(=E)$}
\BinaryInfC{$\mathcal A^x_{t_2}$}
\DisplayProof

&

\AxiomC{$[\mathcal A^x_{t_2}\to C]^j$}

\AxiomC{$[t_1=t_2]^k$}

\AxiomC{$\D'$}
\noLine
\UnaryInfC{$\mathcal A^x_{t_1}$}

\LL{$(=E)$}
\BinaryInfC{$\mathcal A^x_{t_2}$}

\RL{$(\to E)$}
\BinaryInfC{$C$}

\LL{$(\to I)^k$}
\UnaryInfC{$t_1=t_2\to C$}

\noLine
\UnaryInfC{$\D$}
\noLine
\UnaryInfC{$t_1=t_2$}

\AxiomC{$\D'$}
\noLine
\UnaryInfC{$\mathcal A^x_{t_1}$}

\LL{$(=E)$}
\BinaryInfC{$\mathcal A^x_{t_2}$}

\LL{$(P)^j$}
\UnaryInfC{$\mathcal A^x_{t_2}$}
\DisplayProof

\end{tabular}
\end{adjustbox}
\end{center}

%
%
%
%
%
%
%
%
%
%
%
%
%
%
%

\begin{lemma}\label{lem:Pnormal}
Every deduction in $\KNFia$ can be transformed into a $P$-normal deduction.
\end{lemma}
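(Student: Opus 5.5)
We follow Zimmermann's strategy for classical first-order logic \cite{Zimmermann}, as it was adapted to many-valued logics in \cite{KurbisPetrukhin21}. The idea is to apply partial implication contractions repeatedly, each time shortening a non-trivial trace, until none remain. The contractions for $(P)$ followed by a propositional elimination rule are those of \cite{Zimmermann,KurbisPetrukhin21}. The remaining ones are those displayed above: $(P)$ followed by $(\forall E)$, $(\exists E)$, $(\neg\forall E)$, $(\neg\exists E)$, $(\iota E_1)$, $(\iota E_2)$, $(\iota E_3)$, $(\neg\iota E_1)$, $(\neg\iota E_2)$, $(\neg\exists!\iota E)$, $(=E)$ and $(=E')$. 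In each of them the pattern is the same. The application of $(P)$ with conclusion $A$, followed by an elimination with conclusion $B$, is replaced by an application of $(P)$ with conclusion $B$, discharging $B\to C$. Inside $\D$, the discharged assumptions $A\to C$ are replaced by a derivation of $A\to C$ from $B\to C$ via $(\to I)$, the elimination rule and $(\to E)$.

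\textbf{Step 1: each contraction preserves the system.} We check that every contraction yields a correct deduction of $\KNFia$ with the same conclusion and with no new open assumptions. The contraction introduces no new terms, so the atomic-term restrictions of $\KNFia$ on $(\forall E)$, $(\neg\exists E)$, the $\iota$ rules and so on are kept automatically. For the gel-rules $(\exists E)$, $(\neg\forall E)$ and $(\neg\iota E_2)$ the minor deductions are duplicated. We therefore rename their eigenparameters $a,b$ in each copy, so that the pure variable condition holds.

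\textbf{Step 2: a terminating measure.} A single contraction shortens the treated trace by one, but it can create new traces. The new $(P)^j$ has conclusion $B$, and $B$ may itself be the major premise of a further elimination. Moreover, the subdeduction $\D$ (together with any traces inside it) is kept, though no longer below the old $(P)$. Following Zimmermann, we pick an application of $(P)$ with a non-trivial trace that is \emph{topmost}, i.e.\ such that no application of $(P)$ with a non-trivial trace lies above it in the same thread. We then contract it repeatedly along its trace until the trace becomes trivial. The new $(P)^j$ sits on the same track and its trace is exactly one shorter, so the rest of the original trace is processed in order. The copies of the minor deductions ($\D'$, or $\D_2,\D_3$) and the inserted pieces above $\D$ introduce no new $(P)$ with a non-trivial trace in the relevant position, since by the topmost choice they contained none. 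Finally, the inserted $(\to E)$ whose major premise is the assumption $B\to C$ starts a new trace only if it is the conclusion of $(P)$, which it is not.

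The induction is then on the number of applications of $(P)$ with non-trivial traces, together with the sum of their trace lengths. For this we use the usual ordering: first the maximal trace length, then the number of traces of that length.

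\textbf{Main obstacle.} The hardest step is to verify that the measure really decreases in the quantifier and $\iota$ cases with general eliminations, i.e.\ $(\exists E)$, $(\neg\forall E)$ and $(\neg\iota E_2)$. There the minor deductions are duplicated, and the track structure through discharged assumptions is more delicate. First, we will check that duplication copies only subdeductions free of non-trivial traces, which is guaranteed by the topmost choice. Second, we will check that the $(\to I)^m$ inserted above $\D$ yields $\neg\forall xA\to C$ (respectively $\iota xF\neq t\to C$), which is an introduction and not a trace element. Hence the new assumption $[\neg\forall xA]^m$, or $[\iota xF\neq t]^q$, starts only a track through the copy. That track contains no $(P)$ conclusion, so no trace is created there.
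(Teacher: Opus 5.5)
Your route is the paper's: Zimmermann's partial implication contractions, with the displayed cases for the quantifier, $\iota$ and identity rules. Step~1 is fine. The paper itself only invokes Zimmermann's Lemma~5, so everything rests on your Step~2, and there the termination argument has a genuine gap.

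\textbf{The bottom of $\D$.} You claim that a contraction creates no non-trivial trace other than the shortened one of $(P)^j$. But you check only the copied minor deductions and the pieces inserted at the discharged assumptions $A\to C$, not the bottom of $\D$. After the contraction, the end-formula $A$ of $\D$, formerly the premise of $(P)^i$, is the major premise of the elimination. Suppose this $A$ is itself concluded by an application $(P)^h$. Your topmost choice permits this, because the trace of $(P)^h$ stops at the premise of $(P)^i$ and is trivial. Then $(P)^h$ acquires a non-trivial trace.

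Concretely, suppose $(P)^h$ and $(P)^i$ both conclude $A$, one elimination then yields $B$, which is not a major premise, and there are no other applications of $(P)$. Before the contraction the only non-trivial trace is that of $(P)^i$, of length~$2$. Afterwards the only one is that of $(P)^h$, again of length~$2$ and with the same conclusion $A$. So the number of non-trivial traces, the sum of their lengths, the maximal length and the number of traces of that length are all unchanged. The degree of the $(P)$-conclusion is unchanged too, so adding it would not help either. In general, contracting ``repeatedly along the trace'' merely transfers the whole trace, with the same length and conclusion, to $(P)^h$.

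Your measure needs a component recording that the elimination has moved above one more application of $(P)$. For instance, assign to each application of $(P)$ the number of eliminations reached from its conclusion by passing only through premises of $(P)$ and major premises of eliminations, and take the sum. A contraction lowers this sum by exactly one, provided the duplicated material contributes nothing.

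\textbf{Duplicated minor deductions.} That proviso is the second gap. Your topmost condition concerns applications of $(P)$ above the chosen one in the same thread. But what gets copied are the minor deductions $\D'$ (resp.\ $\D_2,\D_3$) of the eliminations on the trace, and these do not lie above $(P)^i$. So ``by the topmost choice they contained none'' does not follow. Your choice is not by length, while your measure looks first at the maximal length. Hence copying a $\D'$ that contains a longer trace can even increase the measure.

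Choose instead an application of $(P)$ whose conclusion is the major premise of an elimination $R$ satisfying the following condition: no application of $(P)$ inside the minor deductions of $R$ has its conclusion, possibly via further applications of $(P)$, as the major premise of an elimination. Such a choice exists, since otherwise you can descend into a strictly smaller minor deduction.

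Finally, you state two different measures: the number of non-trivial traces with the sum of their lengths, and the maximal length with the number of traces of that length. Neither decreases without the repairs above.
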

\begin{proof}
As in \cite[Lemma~5]{Zimmermann}, an induction on the length of traces and the degree of their $(P)$-conclusions shows that successive partial implication contractions eliminate all non-trivial traces.
\end{proof}

\begin{definition}\cite[Definition~19]{KurbisNormNeg}
A $\exists!$-\emph{segment} is a segment all formulas of which are
occurrences of the formula $\exists!t$. An $\exists!t$-segment is
\emph{maximal} if its first formula is concluded by $(AD)$ or
$(\iota E_3)$ and its last formula is the $\exists!t$-premise of an
application of $(=I^{nG})$ with conclusion $t=t$.
\end{definition}

\begin{definition}\cite[Definition~10]{KurbisNormNeg}
An $(=E)$-\emph{segment} is a sequence of segments
$\sigma_1,\ldots,\sigma_n$ such that $\sigma_1$ is the minor premise
but not the conclusion of an application of $(=E)$, $\sigma_n$ is the
conclusion but not the minor premise of an application of $(=E)$, and, if $1<i<n$, then $\sigma_i$ is both the conclusion of an application of $(=E)$ and the minor premise of an application of $(=E)$. An $(=E)$-segment is \emph{maximal} if its last segment is a premise of $(AD)$ or $(\iota E_3)$, or if all formulas on it are of the form $\exists!t$.
\end{definition}

\begin{definition}
An $(=E')$-segment is defined analogously to an $(=E)$-segment, with
$(=E')$ in place of $(=E)$. It is maximal if its last segment is a
premise of $(AD')$, $(\neq D)$, or $(\neg\iota E_1)$.
\end{definition}

\begin{definition}\cite[Definitions~9 and~19]{KurbisNormNeg}
A maximal $=$-\emph{segment} is either (i)
 a segment of formulas $t=t$ such that one of its formulas is
the minor premise of an application of $(=I^{nG})$ and its last formula is a premise of $(AD)$ or $(\iota E_3)$; or (ii) a segment of formulas $a=t$ such that its first formula is
discharged by an application of $(=I^{nG})$ and its last formula is a
premise of $(AD)$ or $(\iota E_3)$ with conclusion $\exists!t$.
\end{definition}

\begin{definition}
A maximal $\neq$-\emph{segment} is a segment of formulas of the form
$t_1\neq t_2$ satisfying one of the following conditions: (i) its first formula is concluded by $(\neq I)$ and its last
formula is a premise of $(\neq D)$; (ii) its first formula is concluded by $(\neg\iota I_1)$,
$(\neg\iota I_2)$, or $(\neg\exists!\iota E)$ and its last formula is
a premise of $(\neq D)$ or $(\neg\iota E_1)$.
\end{definition}

\begin{lemma}\label{lem:specialsegments}
Any deduction in $\KNFia$ can be transformed into one without maximal
$\exists!$-, $=$-, $\neq$-, $(=E)$-, or $(=E')$-segments.
\end{lemma}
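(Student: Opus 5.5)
Following K\"urbis' treatment of the corresponding segments for $\INFi$ \cite[Lemmas~9--11 and 19]{KurbisNormNeg}, we give a local reduction for each kind of maximal segment. The positive cases ($\exists!$-, $=$-, and $(=E)$-segments) are handled exactly as in \cite{KurbisNormNeg}; what has to be added is the treatment of maximal $\neq$- and $(=E')$-segments. We then check that the new reductions create no new maximal special segments of the kinds already removed, or that any they create are shorter.

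The induction runs on the pair $\langle$maximal length of a maximal special segment, number of such segments of that length$\rangle$, taken lexicographically. At each step we choose a rightmost-topmost segment of maximal length, so that no other maximal special segment lies above or inside the deductions we will copy or permute.

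For a maximal $(=E)$-segment ending in $(AD)$ or $(\iota E_3)$, we use K\"urbis' procedure. We permute $(AD)$ upward past the applications of $(=E)$. If the term $t_i$ extracted is untouched by the replacement, $(AD)$ is applied directly to the minor premise. If it is replaced, we obtain $\exists!t_i$ from the identity premise $t=t_i$ by $(AD)$.

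For a maximal $(=E')$-segment ending in $(AD')$, $(\neq D)$, or $(\neg\iota E_1)$, we proceed dually. Let $t_1=t_2$ be the identity premise. If the extracted term is not $t_2$, we apply the extraction rule one step higher, to $\neg\A^x_{t_1}$. If it is $t_2$, we conclude $\exists!t_2$ from $t_1=t_2$ by $(AD)$. Iterating shortens the segment until it disappears.

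For maximal $\exists!$-segments ending in an application of $(=I^{nG})$ with conclusion $t=t$, and for maximal $=$-segments of types (i) and (ii), we copy K\"urbis' reductions. In case (ii), for instance, $\exists!t$ is obtained from $a=t$ by $(AD)$, and the whole segment is replaced by the $\exists!t$-premise of the $(=I^{nG})$ application, which discharges $a=t$. Permutations past gel-rules shorten the segments in the usual way.

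For maximal $\neq$-segments, we first permute the final $(\neq D)$ or $(\neg\iota E_1)$ upward over the minor premises of gel-rules and $(=I^{nG})$, until the segment has length one. Then we reduce locally:
\begin{itemize}
\item $(\neq I)$ from $\A^x_{t_1}$ and $\neg\A^x_{t_2}$, followed by $(\neq D)$ extracting $t_i$: apply $(AD)$ to $\A^x_{t_1}$ when $i=1$, and $(AD')$ (or $(\neq D)$, if $\A$ is an inequality) to $\neg\A^x_{t_2}$ when $i=2$. This works because $t_i$ occurs there.
\item $(\neg\iota I_1)$ or $(\neg\iota I_2)$ followed by $(\neg\iota E_1)$: replace by the premise $\exists!t$, as in the displayed reduction.
\item $(\neg\exists!\iota E)$ followed by $(\neg\iota E_1)$: replace by its minor premise $\exists!t$.
\item $(\neg\iota I_j)$ or $(\neg\exists!\iota E)$ followed by $(\neq D)$: the term extracted is atomic, so it must be $t$, and again the premise $\exists!t$ is used.
\end{itemize}

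The main obstacle is the interaction between the classes. Reducing an $(=E')$-segment in the case $t_2$ introduces $(AD)$ applied to $t_1=t_2$, and this may extend an $=$-segment or an $(=E)$-segment. Likewise, the $\neq$ reductions produce $\exists!t$ occurrences that may feed into $(=I^{nG})$ with conclusion $t=t$ and so form new $\exists!$-segments.

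To control this, I would order the eliminations as follows. First remove $(=E')$- and $\neq$-segments, since these only produce $\exists!$ or $=$ material. Then remove $(=E)$- and $=$-segments. Finally remove $\exists!$-segments, whose reduction simply drops an inference, so no new segments are created.

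The remaining point is to confirm that no reduction in a later phase recreates a segment of an earlier kind. I expect this to follow from inspection, since the later reductions only introduce $(AD)$, $(\iota E_3)$, or deletions, never $(=E')$ or $\neq$-introductions.
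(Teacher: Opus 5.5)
Your local reductions match the paper's: you bypass a $\neq$-segment by $(AD)$, $(AD')$ or $(\neq D)$, use the $\exists!t$-premise already present in $(\neg\iota I_j)$ or $(\neg\exists!\iota E)$, treat $(=E')$-segments like $(=E)$-segments, and defer the positive cases to K\"urbis. The termination argument, however, has a genuine gap.

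The measure $\langle$maximal length, number of segments of that length$\rangle$ need not decrease. A bypass can turn an existing, previously harmless chain into a new maximal special segment of arbitrary length. For instance, in your $(\neq I)$ case with $i=2$ you apply $(AD')$ or $(\neq D)$ to $\neg\A^x_{t_2}$. If that formula ends a long $(=E')$-chain, or a long run of gel-rule minor premises starting from another $(\neq I)$, you create a maximal $(=E')$- or $\neq$-segment possibly longer than any present before. This happens inside your first phase. The same occurs when you apply $(AD)$ to an identity premise $t_1=t_2$ that itself ends a chain of $(=E)$-applications.

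The type~(ii) $=$-segment reduction is a second problem. It copies the deduction of the $\exists!t$-premise of $(=I^{nG})$, once for each $(AD)$ applied to the discharged class. That deduction is a side branch, not above the segment. So choosing a topmost-rightmost segment does not make it free of special segments, and copying can multiply them. Your phase ordering addresses only re-creation across classes, and even that is left to inspection; it does not give termination within a phase.

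The missing idea is the paper's size-based measure. It orders $\langle c(\D),s(\D)\rangle$ lexicographically, where $c(\D)$ counts maximal $=$-segments of type~(ii) and $s(\D)$ is the number of rule applications. The paper observes that every reduction other than type~(ii) strictly decreases $s(\D)$, because the bypasses discard the side deductions. So it does not matter how long the newly created segments are. The type~(ii) reduction is the only one that may enlarge the deduction. For it, one picks a segment with none above it and first removes the special segments above it, so no type~(ii) segment is duplicated and $c(\D)$ drops.

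Two smaller points:
\begin{itemize}
\item In the $(=E')$ case, ``the extracted term is not $t_2$'' does not imply that it occurs unchanged in $\neg\A^x_{t_1}$. It may be a complex $\iota$-term containing the replaced occurrence, and then your upward step yields the wrong $\exists!$-formula.
\item In the $(\neq I)$ case, using $(\neq D)$ on $\neg\A^x_{t_2}$ needs the side condition that $t_2$ is atomic when the principal symbol of $\A$ is $=$. It is not enough that $t_2$ occurs in the premise.
\end{itemize}
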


\begin{proof}
For maximal $\exists!$-, $=$-, and $(=E)$-segments we use K\"urbis'
procedures \cite[Lemmas~3 and~9]{KurbisNormNeg}; vacuous applications of $(=E)$ and $(=E')$ are deleted whenever they arise. The new negative cases are analogous. A maximal $\neq$-segment of type~(i) ending in $(\neq D)$ is bypassed: $\exists!t_1$ follows from $\A^x_{t_1}$ by
$(AD)$, while $\exists!t_2$ follows from $\neg\A^x_{t_2}$ by $(AD')$
if the principal symbol of $\A$ is an ordinary predicate letter, and by $(\neq D)$ if it is $=$, since in the latter case $t_2$ is atomic. In type~(ii), the required $\exists!t$ is
already a premise of $(\neg\iota I_1)$, $(\neg\iota I_2)$, or
$(\neg\exists!\iota E)$, so the segment and the final $(\neq D)$ or
$(\neg\iota E_1)$ are bypassed. Maximal $(=E')$-segments are treated as maximal $(=E)$-segments, proceeding directly to $(AD')$, $(\neq D)$, or $(\neg\iota E_1)$. For termination, let $c(\D)$ be the number of maximal $=$-segments of
type~(ii), and $s(\D)$ the number of rule applications, and order
$\mu(\D)=\langle c(\D),s(\D)\rangle$ lexicographically. Every reduction except that of a maximal $=$-segment of type~(ii) strictly decreases $s(\D)$ and creates no segment of type~(ii). In the exceptional case, choose such a segment with none above it and first remove all special maximal segments above it. Its reduction may increase $s(\D)$, but no segment of type~(ii) is duplicated and the selected one is removed, so $c(\D)$ strictly decreases. Hence $\mu(\D)$ strictly decreases and the procedure terminates.
\end{proof}
\begin{definition}\label{rank}
The \emph{rank} of a deduction $\D$ is the pair
$r(\D)=\langle d,l\rangle$, ordered lexicographically, where $d$ is
the greatest degree of a maximal formula or maximal segment
in $\D$, and $l$ is the sum of their lengths at degree $d$, maximal
formulas being counted as segments of length~$1$. If there are none,
$r(\D)=\langle 0,0\rangle$. A deduction is in \emph{normal form} if it is $P$-normal and contains no maximal formulas, maximal segments, or special maximal segments of Lemma~\ref{lem:specialsegments}.
\end{definition}

\begin{theorem}\label{thm:norm}
Deductions in $\KNFia$ can be brought into normal form.    
\end{theorem}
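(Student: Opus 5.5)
The proof proceeds in three phases, following Zimmermann and K\"urbis. First, by Lemma~\ref{lem:Pnormal}, I transform the given deduction into a $P$-normal one. Next, by Lemma~\ref{lem:specialsegments}, I remove all maximal $\exists!$-, $=$-, $\neq$-, $(=E)$- and $(=E')$-segments. Finally, I run a main induction on the rank $r(\D)=\langle d,l\rangle$ of Definition~\ref{rank}, ordered lexicographically. At each step I pick a maximal formula or maximal segment of highest degree $d$ such that no maximal formula or segment of degree $d$ stands above it, or above the minor premises of its elimination rule. This is Prawitz's standard choice of a rightmost-topmost critical segment. If the segment has length greater than $1$, I apply the permutation conversion for the relevant gel-rule ($\vee E$, $\exists E$, $\neg\wedge E$, $\neg\forall E$, $\neg\iota E_2$) or for $(=I^{nG})$. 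This shortens the segment by one, so $l$ decreases without increasing $d$. The choice of the topmost segment guarantees that duplicated minor deductions contain no segments of degree $d$. If the segment has length $1$, it is a maximal formula, and I apply the corresponding detour reduction listed above, or an (\textit{EFQ}) reduction from \cite{KurbisPetrukhin21}.

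The core of the argument is checking, case by case, that each detour reduction only creates maximal formulas or segments of degree strictly less than $d$. For the propositional cases ($\wedge,\vee,\to,\neg\neg,\neg\wedge,\neg\vee,\neg{\to}$) the new maximal formulas are proper subformulas, or negations of subformulas, of the eliminated one. These have lower degree by the degree clauses; the clause assigning atomic negations their atomic degree matters here. For the quantifier cases $\forall,\exists,\neg\forall,\neg\exists$ and the $\iota$ cases, the substitution $\D^a_t$ only ever instantiates an atomic term $t$, since $\KNFia$ restricts these rules to atomic terms (Lemmas~\ref{restr:qua}, \ref{restr:negqua}, \ref{restr:iota}, \ref{restr:negiota}). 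Thus $d(A^x_t)=d(A^x_a)<d(\forall xA)$, and substitution of $t$ for $a$ changes no degree in $\D$. The negative description cases are then handled as follows:
\begin{itemize}
\item $(\neg\iota I_{1,2})$ followed by $(\neg\iota E_1)$ just discards material.
\item $(\neg\iota I_1)$ followed by $(\neg\iota E_2)$ plugs atomic formulas $\exists!u$, $F^x_u$ and $t\neq u$ into $\D$. Their degrees are bounded by $d(F)<d(\iota xF\neq t)$.
\item $(\neg\exists!\iota I)$ followed by $(\neg\exists!\iota E)$ creates at worst maximal formulas $\iota xF\neq t$ with $t$ atomic. These have degree $d(\iota xF)+1<d(\iota xF)+2=d(\neg\exists!(\iota xF))$, which is precisely the reason for the special degree clause.
\end{itemize}
After each reduction I may reapply Lemma~\ref{lem:specialsegments} and Lemma~\ref{lem:Pnormal}. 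This is legitimate provided neither procedure creates maximal formulas of degree $\ge d$. For the special-segment procedure this holds because it only bypasses rules. For partial implication contractions I would argue that they introduce only $\to I$/$\to E$ pairs on formulas $X\to C$ whose $(P)$-discharge prevents them from counting as maximal, as in Zimmermann.

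I expect the main obstacle to be the case $(\neq I)$ followed by $(\neg\iota E_2)$. Its reduct introduces $(=I^{nG})$ with a complex term $\iota xF$, applications of $(\iota E_1)$, $(=E)$ and $(\neq I)$, and possibly new segments through the $(=I^{nG})$ discharge. The first thing I would try is to show that every formula newly made maximal there is of the form $a=\iota xF$, $\iota xF=a$, $F^x_a$, $a\neq t$ or $A^x_a$. All of these have degree at most $d(\iota xF)+d(t)+1=d(\iota xF\neq t)$. Moreover, the ones of that exact degree, namely equalities and inequalities, are either not maximal by convention, being discharged by $(=I^{nG})$, or form special segments, which are removed by Lemma~\ref{lem:specialsegments} without raising the rank. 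The reduct also uses $(=I^n)$, which is not primitive in $\KNFia$. It must be replaced via Lemma~\ref{InGder} with atomic terms $a$ and $t$, which keeps the degrees low. If the degree bookkeeping fails here, I would fall back on counting this case through $l$, with an auxiliary measure on $(\neq I)$-introduced major premises. Termination then follows from well-foundedness of the lexicographic order on $\langle d,l\rangle$, and the final deduction is $P$-normal and free of all kinds of maximal segments, i.e.\ in normal form.
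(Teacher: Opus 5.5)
Your overall architecture is the paper's own. You $P$-normalise by Lemma~\ref{lem:Pnormal} and remove special segments by Lemma~\ref{lem:specialsegments}. You then induct on $\langle d,l\rangle$, reducing a rightmost critical occurrence with nothing of degree $d$ above it. The atomic-term restrictions and the extra degree of $\neg\exists!(\iota xF)$ block degree growth. Your case check goes beyond what the paper writes, and your $(\neq I)$/$(\neg\iota E_2)$ analysis is essentially sound. There $a=\iota xF$ is discharged by $(=I^{nG})$. Both $\iota xF=a$ (via $(sym)$) and $A^x_a$ are conclusions of $(=E)$, so neither can be maximal. The remaining new formulas have degree below $d(F)+2$. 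There are two slips. In $\KNFia$ the rule $(=I^n)$ is recovered by Lemma~\ref{Ider}, not Lemma~\ref{InGder}. Your bound $d(A^x_a)\le d(\iota xF\neq t)$ fails when $\A$ contains other $\iota$-terms; this is harmless only because $A^x_a$ is never maximal.

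The step that fails is your justification for re-running Lemma~\ref{lem:Pnormal} inside the main loop. In a partial implication contraction the old $(P)^i$ disappears. Its assumptions $[X\to C]^i$ (e.g.\ $\forall xA\to C$) are replaced by derivations ending in $(\to I)^k$. The new $(P)^j$ discharges only $Y\to C$ (e.g.\ $A^x_t\to C$). So no $(P)$-discharge protects $X\to C$. Wherever the old assumption was the major premise of $(\to E)$ in $\D$, which is the standard use of a Peirce assumption, it is now a genuine maximal formula of degree $d(X)+d(C)+1$. Since $C$ is arbitrary, this can exceed $d$, and as you state it the rank can go up.

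You have to contract these new $\to$-maximal formulas at once. The minor deduction $\E$ of $X$ is followed by the elimination that produced $Y$, and then by $(\to E)$ with $[Y\to C]^j$. This creates at most a maximal formula or a new trace on $X$. After a detour reduction $d(X)<d$, because the $(P)$-derivation was plugged into a position of lower degree. Zimmermann's inner induction on trace length and on the degree of $(P)$-conclusions then keeps all auxiliary material below $d$. This is what the paper's bare claim that the auxiliary reductions ``affect only material of lower degree'' has to amount to.

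Your justification for Lemma~\ref{lem:specialsegments}, that it ``only bypasses rules'', is also inaccurate. Reducing a type~(ii) $=$-segment can enlarge the deduction, which is why that lemma needs the measure $\langle c(\D),s(\D)\rangle$. So its harmlessness must come from the choice of the critical occurrence, not from bypassing.
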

\begin{proof}
First apply Lemma~\ref{lem:Pnormal} and then 
Lemma~\ref{lem:specialsegments}. Suppose that the resulting deduction is not normal. Choose a rightmost maximal formula or maximal segment of highest degree such that none of the same degree stands above it, and apply the corresponding reduction or permutation. As usual \cite{Prawitz,KurbisNormNeg}, and because the relevant instantiating terms are restricted to atomic terms, no maximal formula or maximal segment of higher degree is introduced. Any non-trivial $P$-trace created by the reduction is removed by
Lemma~\ref{lem:Pnormal}, and any newly created special segment is
removed by Lemma~\ref{lem:specialsegments}. By the choice of the reduced occurrence, these auxiliary reductions affect only material of lower degree and therefore do not increase the rank. Thus, the chosen maximal formula is removed or the chosen maximal segment is shortened, so $r(\D)$ strictly decreases. Repetition of this procedure terminates in a normal deduction.
\end{proof}

As in \cite{KurbisNormNeg}, substitution instances $A^x_t$ with atomic $t$ are counted as subformulas of $\forall xA$ and $\exists xA$. As in \cite{KurbisNormNeg}, we establish the subformula property for the $\iota$-free fragment of the system.
\begin{definition}
A deduction $\D$ of a conclusion $A$ from the undischarged assumptions $\Gamma$ has the \emph{Peirce negation free subformula property} if every formula occurring in $\D$ is either a subformula of $A$ or of a formula in $\Gamma$, a formula $B\to C$ discharged by $(P)$, a subformula of a formula $C$ occurring immediately below such a discharged formula, or the negation of one of these formulas, except possibly: formulas of the form $t_1=t_2$, $t_1\neq t_2$,
$Rt_1\ldots t_n$, or $\neg Rt_1\ldots t_n$ on $(=E)$- or
$(=E')$-segments; formulas of the form $\exists!t$ that are existence premises of $(\forall E)$, $(\exists I)$, $(\neg\exists E)$, or $(\neg\forall I)$, or premises of $(=I^n)$; or formulas of the form $\exists!a$ that are discharged by $(\forall I)$, $(\exists E)$, $(\neg\exists I)$, or $(\neg\forall E)$.
\end{definition}
\begin{theorem}
Every deduction in $\KNF$ can be transformed into a normal deduction with the Peirce negation free subformula property.
\end{theorem}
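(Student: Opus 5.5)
Every deduction in $\KNF$ is, by the Corollary, Theorem~\ref{thm:norm} and the restriction lemmas, also a deduction in the $\iota$-free part of $\KNFia$. In the $\iota$-free fragment all terms are atomic. The restriction of instantiating terms to atomic ones is therefore automatic, and $(=I^n)$ may be kept, since $(=I^{nG})$ is needed only to handle complex terms. So I will take a deduction in $\KNF$ and apply Theorem~\ref{thm:norm} to obtain a normal deduction. By normal I mean $P$-normal, with no maximal formulas or segments, and with none of the special segments of Lemma~\ref{lem:specialsegments}. None of the reductions for the $\iota$-free rules introduces $\iota$-terms, so the result stays in $\KNF$. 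It then remains to show that every normal deduction has the Peirce negation free subformula property. I will do this by the usual analysis of tracks, in the style of Prawitz, as adapted by Zimmermann for $(P)$ and by K\"urbis and Petrukhin for the negation rules.

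\textbf{Step 1: track structure.} Using the definition of track, I will show that each track in a normal deduction splits into three parts. First comes an elimination part, where each formula is the major premise of an elimination rule (including the negative ones $(\neg\neg E)$, $(\neg{\wedge}E)$, $(\neg\forall E)$, $(\neg\exists E)$, and so on) and the next formula is a "subformula or negated subformula" of it. Then comes a minimum part, possibly passing through $(EFQ)$, $(P)$, $(AD)$, $(AD')$, $(\neq D)$ or an $(=E)$/$(=E')$-segment. Finally comes an introduction part, going upward in complexity. Normality guarantees that no introduction is followed by an elimination along the track, including through gel-segments; this uses the permutation conversions. $P$-normality guarantees that a conclusion of $(P)$ is never the major premise of an elimination, so $(P)$ can occur only at the transition point or inside the introduction part. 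By Lemma~\ref{lem:efq}, $(EFQ)$ concludes only prime or negative prime formulas, so it too sits at the minimum.

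\textbf{Step 2: subformula claim, by induction on the order of tracks.} The track of order $0$ ends in the conclusion, and a track of order $n+1$ ends in a minor premise of an elimination on a track of order $n$. Along the elimination part every formula is a subformula, or the negation of a subformula, of the first formula of the track. That first formula is an open assumption, or an assumption discharged by an introduction or gel-rule lower down (handled by the induction hypothesis), or an assumption $B\to C$ discharged by $(P)$. The last case is exactly the clause in the definition: the formulas of the track then lie within $B\to C$ or within $C$. Here one should note that $C$ lies immediately below the discharged formula, via $(\to E)$ with the $(P)$-conclusion. Along the introduction part every formula is a subformula, or negated subformula, of the last formula. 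That last formula is the end formula or a minor premise, and we use the induction hypothesis. To justify the word "negation" in the definition I will check each negative rule. For example, $(\neg{\wedge}E)$ passes from $\neg(A\wedge B)$ to $\neg A$, and $(\neg\forall E)$ passes from $\neg\forall xA$ to $\neg A^x_a$ with $a$ atomic.

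\textbf{Step 3: exceptions.} Formulas not covered by Steps 1--2 arise only from rules that introduce foreign material. These are the existence premises $\exists!t$ of the quantifier rules and of $(=I^n)$; the discharged $\exists!a$; and the equality material on $(=E)$/$(=E')$-segments, together with its endpoints at $(AD)$, $(AD')$, $(\neq D)$ and $(\neq I)$. Here I use Lemma~\ref{lem:specialsegments} to say that such segments do not form maximal detours. The formulas on them are then atomic or negated atomic, and these are exactly the exceptions allowed.

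\textbf{Main obstacle.} I expect the hardest part to be the interaction of $(P)$ with the track analysis. A $(P)$-discharged $B\to C$ is a genuinely new formula. One must verify that $C$, the formula reached from it by $(\to E)$, and everything derived from $C$ below stay within the subformulas of $C$, or their negations, or else within the already covered formulas. Zimmermann's argument, extended to the new cases of $(P)$ followed by negative and quantifier eliminations, should show that after $P$-normalisation the $(P)$-conclusion feeds only into minor premises or introductions. I would first try to follow his proof verbatim, treating $\neg$-eliminations as ordinary eliminations.
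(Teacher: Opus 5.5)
\textbf{Gap: the (\textit{EFQ})-reductions on the premises of (\textit{EFQ}) are missing.} Your overall route is the paper's: normalise, then do a Prawitz--Zimmermann track analysis and account for the free-logical exceptions. But the paper's proof adds one ingredient for this theorem that you leave out. It applies the (\textit{EFQ})-reductions, which act on the \emph{premises} of (\textit{EFQ}), together with the normalisation procedure. It also argues, as in \cite[Theorem~21]{KurbisPetrukhin21}, that the combined process terminates: these reductions lower the degree of the contradictory formula and create only maximal formulas of lower degree.

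Lemma~\ref{lem:efq} restricts only the \emph{conclusion} of (\textit{EFQ}). The reductions used for Theorem~\ref{thm:norm} say nothing about premises $\neg A$, $A$ of (\textit{EFQ}) that are built up by introduction rules. So the claim your plan rests on, that every normal deduction already has the Peirce negation free subformula property, is false. Take the assumptions $Qa$, $\neg Qa$, $\neg Ra$. Infer $Qa\vee Ra$ by $(\vee I_1)$ and $\neg(Qa\vee Ra)$ by $(\neg{\vee}I)$, and conclude the prime formula $Sa$ by (\textit{EFQ}). This deduction has no $(P)$ and no gel-rules, and none of the reductions of Theorem~\ref{thm:norm} applies to it. Its (\textit{EFQ}) also has a prime conclusion. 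Yet $Qa\vee Ra$ is not a subformula of an assumption or of $Sa$, not the negation of one, and not among the listed exceptions.

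Your Step~1 fails at exactly this point. A track can run through an introduction part into a premise of (\textit{EFQ}). That fits neither your three-part shape nor your induction on the order of tracks, because nothing anchors such a premise.

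\textbf{What is needed to close it.} You need the reductions that remove a negative premise of (\textit{EFQ}) concluded by an introduction rule. For example, $(\neg\forall I)$ followed by (\textit{EFQ}) becomes $(\forall E)$ followed by (\textit{EFQ}) on $A^x_t$ and $\neg A^x_t$. The cases for $(\neg\exists I)$, $(\neq I)$ and the propositional negation rules are analogous. In the example above, the reduction yields $(\vee E)$ on $Qa\vee Ra$, with (\textit{EFQ}) applied to $Qa,\neg Qa$ and to $Ra,\neg Ra$ in the minor deductions. This creates a maximal formula $Qa\vee Ra$ of lower degree, which is then reduced in turn.

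You must then show that interleaving these reductions with the maximal-formula reductions, the permutations and the partial implication contractions terminates. Only then is the negative premise of each (\textit{EFQ}) no longer the conclusion of an introduction rule, so the path argument can anchor it. The positive premise $A$ is then covered as a subformula of $\neg A$. With that in place, your track induction can treat premises of (\textit{EFQ}) like minor premises.
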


\begin{proof}
First restrict the conclusions of (\textit{EFQ}) to prime and negative prime formulas by Lemma~\ref{lem:efq}, and apply the normalisation procedure together with the (\textit{EFQ})-reductions described above. As in \cite[Theorem~21]{KurbisPetrukhin21}, these reductions terminate, since they lower the degree of the contradictory formula and any maximal formulas they create have lower degree. The subformula property then follows from the standard path arguments of \cite[Theorem~21]{KurbisPetrukhin21} and \cite[Corollary~5]{KurbisNormNeg}. The additional negative free-logical rules introduce only the exceptional formulas listed in the definition: $(=E')$ yields negative atomic formulas on $(=E')$-segments, while $(AD')$ and $(\neq D)$ may yield the $\exists!$-premises required by identity and quantifier rules.
\end{proof}

\section{Conclusion}\label{Concl}
An important task for future research is to extend the three-valued
semantics of the propositional fragment of $\KNFi$
\cite{SlupeckiPComp,avronnaturallogics,Kleene38} to quantifiers, the existence predicate, identity, and definite descriptions. Its implication-free propositional fragment is the Strong Kleene logic $\mathbf{K}_3$ \cite{Kleene38}. We conjecture that the corresponding first-order semantics can be developed in the spirit of Antunes and Rodrigues' $\mathbf{FFDE}$ \cite{AntunesRodrigues25} and the author's
$\mathbf{PRL}$ \cite{Petrukhin2026PRL}. Another task is to find a
suitable subformula property for $\KNFia$. Finally, a further direction is to investigate automated proof search for $\KNFi$,
in the spirit of the methods developed for $\mathbf{K}_3$ and its truth-functional extensions in \cite{PetrukhinShanginK3}.

\paragraph{Acknowledgements.}  
The author thanks anonymous reviewers for their useful and detailed comments. This research is funded by the European Union (ERC, ExtenDD, project number: 101054714). Views and opinions expressed are however those of the author(s) only and do not necessarily reflect those of the European Union or the European Research Council. Neither the European Union nor the granting authority can be held responsible for them. 

\bibliographystyle{eptcs}
\bibliography{references}

\begin{thebibliography}{10}
\providecommand{\bibitemdeclare}[2]{}
\providecommand{\surnamestart}{}
\providecommand{\surnameend}{}
\providecommand{\urlprefix}{Available at }
\providecommand{\url}[1]{\texttt{#1}}
\providecommand{\href}[2]{\texttt{#2}}
\providecommand{\urlalt}[2]{\href{#1}{#2}}
\providecommand{\doi}[1]{doi:\urlalt{https://doi.org/#1}{#1}}
\providecommand{\eprint}[1]{arXiv:\urlalt{https://arxiv.org/abs/#1}{#1}}
\providecommand{\bibinfo}[2]{#2}

\bibitemdeclare{article}{AlmukdadNelson84}
\bibitem{AlmukdadNelson84}
\bibinfo{author}{A.~\surnamestart Almukdad\surnameend} \& \bibinfo{author}{D.~\surnamestart Nelson\surnameend} (\bibinfo{year}{1984}): \emph{\bibinfo{title}{Constructible falsity and inexact predicates}}.
\newblock {\slshape \bibinfo{journal}{The Journal of Symbolic Logic}} \bibinfo{volume}{49}(\bibinfo{number}{1}), pp. \bibinfo{pages}{231--233}, \doi{10.2307/2274105}.

\bibitemdeclare{article}{AndersonBelnap63}
\bibitem{AndersonBelnap63}
\bibinfo{author}{A.~R. \surnamestart Anderson\surnameend} \& \bibinfo{author}{N.~D. \surnamestart Belnap\surnameend} (\bibinfo{year}{1963}): \emph{\bibinfo{title}{First degree entailments}}.
\newblock {\slshape \bibinfo{journal}{Mathematische Annalen}} \bibinfo{volume}{149}, pp. \bibinfo{pages}{302--319}, \doi{10.1007/BF01471125}.

\bibitemdeclare{article}{AntunesRodrigues25}
\bibitem{AntunesRodrigues25}
\bibinfo{author}{H.~\surnamestart Antunes\surnameend} \& \bibinfo{author}{A.~\surnamestart Rodrigues\surnameend} (\bibinfo{year}{2025}): \emph{\bibinfo{title}{On Universally Free First-Order Extensions of {Belnap-Dunn's} Four-Valued Logic and {Nelson's} Paraconsistent Logic {$N4$}}}.
\newblock {\slshape \bibinfo{journal}{Journal of Philosophical Logic}} \bibinfo{volume}{54}(\bibinfo{number}{1}), pp. \bibinfo{pages}{169--195}, \doi{10.1007/s10992-025-09783-w}.

\bibitemdeclare{article}{avronnaturallogics}
\bibitem{avronnaturallogics}
\bibinfo{author}{A.~\surnamestart Avron\surnameend} (\bibinfo{year}{1991}): \emph{\bibinfo{title}{Natural 3-valued logics: characterization and proof theory}}.
\newblock {\slshape \bibinfo{journal}{Journal of Symbolic Logic}} \bibinfo{volume}{56}, pp. \bibinfo{pages}{276--294}, \doi{10.2307/2274919}.

\bibitemdeclare{incollection}{Belnap77Computer}
\bibitem{Belnap77Computer}
\bibinfo{author}{N.~D. \surnamestart Belnap\surnameend} (\bibinfo{year}{1977}): \emph{\bibinfo{title}{How a computer should think}}.
\newblock In \bibinfo{editor}{G.~\surnamestart Ryle\surnameend}, editor: {\slshape \bibinfo{booktitle}{Contemporary Aspects of Philosophy}}, \bibinfo{publisher}{Oriel Press}, \bibinfo{address}{Stocksfield}, pp. \bibinfo{pages}{30--56}.

\bibitemdeclare{incollection}{Belnap77Useful}
\bibitem{Belnap77Useful}
\bibinfo{author}{N.~D. \surnamestart Belnap\surnameend} (\bibinfo{year}{1977}): \emph{\bibinfo{title}{A useful four-valued logic}}.
\newblock In \bibinfo{editor}{J.~M. \surnamestart Dunn\surnameend} \& \bibinfo{editor}{G.~\surnamestart Epstein\surnameend}, editors: {\slshape \bibinfo{booktitle}{Modern Uses of Multiple-Valued Logic}}, \bibinfo{publisher}{Kluwer Academic Publishers}, pp. \bibinfo{pages}{5--37}, \doi{10.1007/978-94-010-1161-7_2}.

\bibitemdeclare{article}{Dunn76}
\bibitem{Dunn76}
\bibinfo{author}{J.~M. \surnamestart Dunn\surnameend} (\bibinfo{year}{1976}): \emph{\bibinfo{title}{Intuitive semantics for first-degree entailments and `coupled trees'}}.
\newblock {\slshape \bibinfo{journal}{Philosophical Studies}} \bibinfo{volume}{29}(\bibinfo{number}{3}), pp. \bibinfo{pages}{149--168}, \doi{10.1007/BF00373152}.

\bibitemdeclare{article}{vanFraassen66}
\bibitem{vanFraassen66}
\bibinfo{author}{B.~C. \surnamestart van Fraassen\surnameend} (\bibinfo{year}{1966}): \emph{\bibinfo{title}{Singular Terms, Truth-Value Gaps, and Free Logic}}.
\newblock {\slshape \bibinfo{journal}{The Journal of Philosophy}} \bibinfo{volume}{63}(\bibinfo{number}{17}), pp. \bibinfo{pages}{481--495}, \doi{10.2307/2024549}.

\bibitemdeclare{incollection}{vanFraassen69}
\bibitem{vanFraassen69}
\bibinfo{author}{B.~C. \surnamestart van Fraassen\surnameend} (\bibinfo{year}{1969}): \emph{\bibinfo{title}{Presuppositions, Supervaluations, and Free Logic}}.
\newblock In \bibinfo{editor}{Karel \surnamestart Lambert\surnameend}, editor: {\slshape \bibinfo{booktitle}{The Logical Way of Doing Things}}, \bibinfo{publisher}{Yale University Press}, \bibinfo{address}{New Haven}, pp. \bibinfo{pages}{67--91}.

\bibitemdeclare{article}{Hintikka59}
\bibitem{Hintikka59}
\bibinfo{author}{J.~\surnamestart Hintikka\surnameend} (\bibinfo{year}{1959}): \emph{\bibinfo{title}{Existential presuppositions and existential commitments}}.
\newblock {\slshape \bibinfo{journal}{Journal of Philosophy}} \bibinfo{volume}{56}(\bibinfo{number}{3}), pp. \bibinfo{pages}{125--137}, \doi{10.2307/2021988}.

\bibitemdeclare{inproceedings}{IndrzejczakModalDescriptions18}
\bibitem{IndrzejczakModalDescriptions18}
\bibinfo{author}{A.~\surnamestart Indrzejczak\surnameend} (\bibinfo{year}{2018}): \emph{\bibinfo{title}{Cut-free Modal Theory of Definite Descriptions}}.
\newblock In \bibinfo{editor}{G.~\surnamestart Bezhanishvili\surnameend}, \bibinfo{editor}{G.~\surnamestart D'Agostino\surnameend}, \bibinfo{editor}{G.~\surnamestart Metcalfe\surnameend} \& \bibinfo{editor}{T.~\surnamestart Studer\surnameend}, editors: {\slshape \bibinfo{booktitle}{Advances in Modal Logic}}, \bibinfo{volume}{12}, \bibinfo{publisher}{College Publications}, \bibinfo{address}{London}, pp. \bibinfo{pages}{387--406}.

\bibitemdeclare{article}{IndrzejczakFreeDefDesc20}
\bibitem{IndrzejczakFreeDefDesc20}
\bibinfo{author}{A.~\surnamestart Indrzejczak\surnameend} (\bibinfo{year}{2020}): \emph{\bibinfo{title}{Free Definite Description Theory -- Sequent Calculi and Cut Elimination}}.
\newblock {\slshape \bibinfo{journal}{Logic and Logical Philosophy}} \bibinfo{volume}{29}(\bibinfo{number}{4}), pp. \bibinfo{pages}{505--539}, \doi{10.12775/LLP.2020.020}.

\bibitemdeclare{article}{IndrzejczakCutFree}
\bibitem{IndrzejczakCutFree}
\bibinfo{author}{A.~\surnamestart Indrzejczak\surnameend} (\bibinfo{year}{2021}): \emph{\bibinfo{title}{Free logics are cut free}}.
\newblock {\slshape \bibinfo{journal}{Studia Logica}} \bibinfo{volume}{109}(\bibinfo{number}{4}), pp. \bibinfo{pages}{859--886}, \doi{10.1007/s11225-020-09929-8}.

\bibitemdeclare{inproceedings}{IndrzejczakNegativeFreeLogics25}
\bibitem{IndrzejczakNegativeFreeLogics25}
\bibinfo{author}{A.~\surnamestart Indrzejczak\surnameend} (\bibinfo{year}{2025}): \emph{\bibinfo{title}{Cut Elimination for Negative Free Logics with Definite Descriptions}}.
\newblock In \bibinfo{editor}{C.~\surnamestart Barrett\surnameend} \& \bibinfo{editor}{U.~\surnamestart Waldmann\surnameend}, editors: {\slshape \bibinfo{booktitle}{Automated Deduction -- {CADE} 30}}, {\slshape \bibinfo{series}{Lecture Notes in Computer Science}} \bibinfo{volume}{15943}, \bibinfo{publisher}{Springer}, pp. \bibinfo{pages}{341--360}, \doi{10.1007/978-3-031-99984-0_19}.

\bibitemdeclare{inproceedings}{IndrzejczakPetrukhin24Bisequent}
\bibitem{IndrzejczakPetrukhin24Bisequent}
\bibinfo{author}{A.~\surnamestart Indrzejczak\surnameend} \& \bibinfo{author}{Y.~\surnamestart Petrukhin\surnameend} (\bibinfo{year}{2024}): \emph{\bibinfo{title}{Bisequent Calculi for Neutral Free Logic with Definite Descriptions}}.
\newblock In \bibinfo{editor}{C.~\surnamestart Benzm{\"u}ller\surnameend}, \bibinfo{editor}{J.~\surnamestart Otten\surnameend} \& \bibinfo{editor}{R.~\surnamestart Ramanayake\surnameend}, editors: {\slshape \bibinfo{booktitle}{Automated Reasoning in Quantified Non-Classical Logics}}, {\slshape \bibinfo{series}{CEUR Workshop Proceedings}} \bibinfo{volume}{3875}, pp. \bibinfo{pages}{48--61}.

\bibitemdeclare{book}{KamideWansing15}
\bibitem{KamideWansing15}
\bibinfo{author}{N.~\surnamestart Kamide\surnameend} \& \bibinfo{author}{H.~\surnamestart Wansing\surnameend} (\bibinfo{year}{2015}): \emph{\bibinfo{title}{Proof Theory of {N4}-related Paraconsistent Logics}}.
\newblock {\slshape \bibinfo{series}{Studies in Logic}}~\bibinfo{volume}{54}, \bibinfo{publisher}{College Publications}, \bibinfo{address}{Milton Keynes}.

\bibitemdeclare{article}{Kleene38}
\bibitem{Kleene38}
\bibinfo{author}{S.~\surnamestart Kleene\surnameend} (\bibinfo{year}{1938}): \emph{\bibinfo{title}{On notation for ordinal numbers}}.
\newblock {\slshape \bibinfo{journal}{Journal of Symbolic Logic}} \bibinfo{volume}{3}, pp. \bibinfo{pages}{150--155}, \doi{10.2307/2267778}.

\bibitemdeclare{article}{KurbisBinary19}
\bibitem{KurbisBinary19}
\bibinfo{author}{N.~\surnamestart K{\"u}rbis\surnameend} (\bibinfo{year}{2019}): \emph{\bibinfo{title}{A binary quantifier for definite descriptions in intuitionist negative free logic: natural deduction and normalisation}}.
\newblock {\slshape \bibinfo{journal}{Bulletin of the Section of Logic}} \bibinfo{volume}{48}(\bibinfo{number}{2}), pp. \bibinfo{pages}{81--97}, \doi{10.18778/0138-0680.48.2.01}.

\bibitemdeclare{article}{KurbisBinary20}
\bibitem{KurbisBinary20}
\bibinfo{author}{N.~\surnamestart K{\"u}rbis\surnameend} (\bibinfo{year}{2019}): \emph{\bibinfo{title}{Two treatments of definite descriptions in intuitionist negative free logic}}.
\newblock {\slshape \bibinfo{journal}{Bulletin of the Section of Logic}} \bibinfo{volume}{48}(\bibinfo{number}{4}), pp. \bibinfo{pages}{299--317}, \doi{10.18778/0138-0680.48.4.04}.

\bibitemdeclare{article}{KurbisBinary21}
\bibitem{KurbisBinary21}
\bibinfo{author}{N.~\surnamestart K{\"u}rbis\surnameend} (\bibinfo{year}{2021}): \emph{\bibinfo{title}{Definite descriptions in intuitionist positive free logic}}.
\newblock {\slshape \bibinfo{journal}{Logic and Logical Philosophy}} \bibinfo{volume}{30}(\bibinfo{number}{2}), pp. \bibinfo{pages}{327--358}, \doi{10.12775/LLP.2020.024}.

\bibitemdeclare{article}{KurbisIntuitionistic21}
\bibitem{KurbisIntuitionistic21}
\bibinfo{author}{N.~\surnamestart K{\"u}rbis\surnameend} (\bibinfo{year}{2021}): \emph{\bibinfo{title}{Normalisation and subformula property for a system of intuitionistic logic with general introduction and elimination rules}}.
\newblock {\slshape \bibinfo{journal}{Synthese}} \bibinfo{volume}{199}(\bibinfo{number}{5--6}), pp. \bibinfo{pages}{14223--14248}, \doi{10.1007/s11229-021-03418-8}.

\bibitemdeclare{inproceedings}{KurbisProofSemantics21}
\bibitem{KurbisProofSemantics21}
\bibinfo{author}{N.~\surnamestart K{\"u}rbis\surnameend} (\bibinfo{year}{2021}): \emph{\bibinfo{title}{Proof-Theory and Semantics for a Theory of Definite Descriptions}}.
\newblock In \bibinfo{editor}{A.~\surnamestart Das\surnameend} \& \bibinfo{editor}{S.~\surnamestart Negri\surnameend}, editors: {\slshape \bibinfo{booktitle}{Automated Reasoning with Analytic Tableaux and Related Methods}}, {\slshape \bibinfo{series}{Lecture Notes in Computer Science}} \bibinfo{volume}{12842}, \bibinfo{publisher}{Springer}, \bibinfo{address}{Cham}, pp. \bibinfo{pages}{95--111}, \doi{10.1007/978-3-030-86059-2_6}.

\bibitemdeclare{article}{KurbisBinary22}
\bibitem{KurbisBinary22}
\bibinfo{author}{N.~\surnamestart K{\"u}rbis\surnameend} (\bibinfo{year}{2022}): \emph{\bibinfo{title}{A Binary Quantifier for Definite Descriptions for Cut Free Free Logics}}.
\newblock {\slshape \bibinfo{journal}{Studia Logica}} \bibinfo{volume}{110}(\bibinfo{number}{1}), pp. \bibinfo{pages}{219--239}, \doi{10.1007/s11225-021-09958-x}.

\bibitemdeclare{article}{KurbisClassical22}
\bibitem{KurbisClassical22}
\bibinfo{author}{N.~\surnamestart K{\"u}rbis\surnameend} (\bibinfo{year}{2022}): \emph{\bibinfo{title}{Normalisation and subformula property for a system of classical logic with {Tarski's} rule}}.
\newblock {\slshape \bibinfo{journal}{Archive for Mathematical Logic}} \bibinfo{volume}{61}, pp. \bibinfo{pages}{105--129}, \doi{10.1007/s00153-021-00775-6}.

\bibitemdeclare{article}{KurbisNormNeg}
\bibitem{KurbisNormNeg}
\bibinfo{author}{N.~\surnamestart K{\"u}rbis\surnameend} (\bibinfo{year}{2025}): \emph{\bibinfo{title}{Normalisation for Negative Free Logics without and with Definite Descriptions}}.
\newblock {\slshape \bibinfo{journal}{The Review of Symbolic Logic}} \bibinfo{volume}{18}(\bibinfo{number}{1}), pp. \bibinfo{pages}{240--272}, \doi{10.1017/S1755020324000157}.

\bibitemdeclare{article}{KurbisNormPos}
\bibitem{KurbisNormPos}
\bibinfo{author}{N.~\surnamestart K{\"u}rbis\surnameend} (\bibinfo{year}{2026}): \emph{\bibinfo{title}{Normalisation for Positive Free Logics without and with Definite Descriptions}}.
\newblock {\slshape \bibinfo{journal}{The Review of Symbolic Logic}}, pp. \bibinfo{pages}{1--25}, \doi{10.1017/S1755020326101166}.
\newblock \bibinfo{note}{In press}.

\bibitemdeclare{article}{KurbisPetrukhin21}
\bibitem{KurbisPetrukhin21}
\bibinfo{author}{N.~\surnamestart K{\"u}rbis\surnameend} \& \bibinfo{author}{Y.~\surnamestart Petrukhin\surnameend} (\bibinfo{year}{2021}): \emph{\bibinfo{title}{Normalisation for some quite interesting many-valued logics}}.
\newblock {\slshape \bibinfo{journal}{Logic and Logical Philosophy}} \bibinfo{volume}{30}(\bibinfo{number}{3}), pp. \bibinfo{pages}{493--534}, \doi{10.12775/LLP.2021.009}.

\bibitemdeclare{article}{Lambert2}
\bibitem{Lambert2}
\bibinfo{author}{K.~\surnamestart Lambert\surnameend} (\bibinfo{year}{1961}): \emph{\bibinfo{title}{Notes on ``{E!}'': {II}}}.
\newblock {\slshape \bibinfo{journal}{Philosophical Studies}} \bibinfo{volume}{12}(\bibinfo{number}{1/2}), pp. \bibinfo{pages}{1--5}, \doi{10.1007/BF00714650}.

\bibitemdeclare{article}{Lambert3}
\bibitem{Lambert3}
\bibinfo{author}{K.~\surnamestart Lambert\surnameend} (\bibinfo{year}{1962}): \emph{\bibinfo{title}{Notes on ``{E!}'' {III}: A theory of descriptions}}.
\newblock {\slshape \bibinfo{journal}{Philosophical Studies}} \bibinfo{volume}{13}(\bibinfo{number}{4}), pp. \bibinfo{pages}{51--59}, \doi{10.1007/BF00462042}.

\bibitemdeclare{incollection}{Lambert4}
\bibitem{Lambert4}
\bibinfo{author}{K.~\surnamestart Lambert\surnameend} (\bibinfo{year}{2004}): \emph{\bibinfo{title}{Existential import, `{E!}' and `the'}}.
\newblock In: {\slshape \bibinfo{booktitle}{Free Logic. Selected Essays}}, \bibinfo{publisher}{Cambridge University Press}, \bibinfo{address}{Cambridge}, pp. \bibinfo{pages}{16--32}, \doi{10.1017/CBO9781139165068.003}.

\bibitemdeclare{article}{Lehmann94}
\bibitem{Lehmann94}
\bibinfo{author}{S.~\surnamestart Lehmann\surnameend} (\bibinfo{year}{1994}): \emph{\bibinfo{title}{Strict Fregean Free Logic}}.
\newblock {\slshape \bibinfo{journal}{Journal of Philosophical Logic}} \bibinfo{volume}{23}(\bibinfo{number}{3}), pp. \bibinfo{pages}{307--336}, \doi{10.1007/BF01048484}.

\bibitemdeclare{incollection}{Lehmann02}
\bibitem{Lehmann02}
\bibinfo{author}{S.~\surnamestart Lehmann\surnameend} (\bibinfo{year}{2002}): \emph{\bibinfo{title}{More Free Logic}}.
\newblock In: {\slshape \bibinfo{booktitle}{Handbook of Philosophical Logic}}, \bibinfo{volume}{5}, \bibinfo{publisher}{Springer}, pp. \bibinfo{pages}{197--259}, \doi{10.1007/978-94-017-0458-8_4}.

\bibitemdeclare{article}{Milne10}
\bibitem{Milne10}
\bibinfo{author}{P.~\surnamestart Milne\surnameend} (\bibinfo{year}{2010}): \emph{\bibinfo{title}{Subformula and separation properties in natural deduction via small {Kripke} models}}.
\newblock {\slshape \bibinfo{journal}{Review of Symbolic Logic}} \bibinfo{volume}{3}(\bibinfo{number}{2}), pp. \bibinfo{pages}{175--227}, \doi{10.1017/S175502030999030X}.

\bibitemdeclare{book}{NegriPlato01}
\bibitem{NegriPlato01}
\bibinfo{author}{S.~\surnamestart Negri\surnameend} \& \bibinfo{author}{J.~\surnamestart von Plato\surnameend} (\bibinfo{year}{2001}): \emph{\bibinfo{title}{Structural Proof-Theory}}.
\newblock \bibinfo{publisher}{Cambridge University Press}, \bibinfo{address}{Cambridge}, \doi{10.1017/CBO9780511527340}.

\bibitemdeclare{article}{PavlovicGratzl23}
\bibitem{PavlovicGratzl23}
\bibinfo{author}{E.~\surnamestart Pavlovi{\'c}\surnameend} \& \bibinfo{author}{N.~\surnamestart Gratzl\surnameend} (\bibinfo{year}{2023}): \emph{\bibinfo{title}{Neutral Free Logic: Motivation, Proof Theory and Models}}.
\newblock {\slshape \bibinfo{journal}{Journal of Philosophical Logic}} \bibinfo{volume}{52}, pp. \bibinfo{pages}{519--554}, \doi{10.1007/s10992-022-09679-z}.

\bibitemdeclare{inproceedings}{PetrukhinNelsonianDescriptions}
\bibitem{PetrukhinNelsonianDescriptions}
\bibinfo{author}{Y.~\surnamestart Petrukhin\surnameend} (\bibinfo{year}{2024}): \emph{\bibinfo{title}{A binary quantifier for definite descriptions in {Nelsonian} free logic}}.
\newblock In \bibinfo{editor}{A.~\surnamestart Indrzejczak\surnameend} \& \bibinfo{editor}{M.~\surnamestart Zawidzki\surnameend}, editors: {\slshape \bibinfo{booktitle}{Proceedings Eleventh International Conference on Non-Classical Logics. Theory and Applications}}, {\slshape \bibinfo{series}{Electronic Proceedings in Theoretical Computer Science}} \bibinfo{volume}{415}, pp. \bibinfo{pages}{5--15}, \doi{10.4204/EPTCS.415.5}.

\bibitemdeclare{inproceedings}{PetrukhinJelia}
\bibitem{PetrukhinJelia}
\bibinfo{author}{Y.~\surnamestart Petrukhin\surnameend} (\bibinfo{year}{2026}): \emph{\bibinfo{title}{On a Second-Order Version of Russellian Theory of Definite Descriptions}}.
\newblock In \bibinfo{editor}{G.~\surnamestart Casini\surnameend}, \bibinfo{editor}{B.~\surnamestart Dundua\surnameend} \& \bibinfo{editor}{T.~\surnamestart Kutsia\surnameend}, editors: {\slshape \bibinfo{booktitle}{Logics in Artificial Intelligence. JELIA 2025}}, {\slshape \bibinfo{series}{Lecture Notes in Computer Science}} \bibinfo{volume}{16094}, \bibinfo{publisher}{Springer}, \bibinfo{address}{Cham}, pp. \bibinfo{pages}{313--326}, \doi{10.1007/978-3-032-04590-4_21}.

\bibitemdeclare{inproceedings}{Petrukhin2026PRL}
\bibitem{Petrukhin2026PRL}
\bibinfo{author}{Y.~\surnamestart Petrukhin\surnameend} (\bibinfo{year}{2026}): \emph{\bibinfo{title}{A Paradefinite Version of Russellian Theory of Definite Descriptions}}.
\newblock In: {\slshape \bibinfo{booktitle}{Proceedings of the 6th International Workshop on Automated Reasoning in Quantified Non-Classical Logics (ARQNL 2026)}}, \bibinfo{series}{CEUR Workshop Proceedings}.
\newblock \urlprefix\url{https://iltp.de/ARQNL-2026/download/proceedings_preli/4_ARQNL_2026_paper_2.pdf}.

\bibitemdeclare{inproceedings}{Petrukhin2026SO}
\bibitem{Petrukhin2026SO}
\bibinfo{author}{Y.~\surnamestart Petrukhin\surnameend} (\bibinfo{year}{2026}): \emph{\bibinfo{title}{Second-order Positive Free Logic with Second-order Definite Descriptions}}.
\newblock In: {\slshape \bibinfo{booktitle}{Proceedings of the 6th International Workshop on Automated Reasoning in Quantified Non-Classical Logics (ARQNL 2026)}}, \bibinfo{series}{CEUR Workshop Proceedings}.
\newblock \urlprefix\url{https://iltp.de/ARQNL-2026/download/proceedings_preli/6_ARQNL_2026_paper_1.pdf}.

\bibitemdeclare{article}{PetrukhinShanginK3}
\bibitem{PetrukhinShanginK3}
\bibinfo{author}{Y.~\surnamestart Petrukhin\surnameend} \& \bibinfo{author}{V.~\surnamestart Shangin\surnameend} (\bibinfo{year}{2019}): \emph{\bibinfo{title}{Automated Proof-searching for Strong Kleene Logic and its Binary Extensions via Correspondence Analysis}}.
\newblock {\slshape \bibinfo{journal}{Logic and Logical Philosophy}} \bibinfo{volume}{28}(\bibinfo{number}{2}), pp. \bibinfo{pages}{223--257}, \doi{10.12775/LLP.2018.009}.

\bibitemdeclare{article}{Petrukhin19}
\bibitem{Petrukhin19}
\bibinfo{author}{Y.~I. \surnamestart Petrukhin\surnameend} (\bibinfo{year}{2019}): \emph{\bibinfo{title}{Deduction normalization theorem for {Sette's} logic and its modifications}}.
\newblock {\slshape \bibinfo{journal}{Moscow University Mathematics Bulletin}} \bibinfo{volume}{74}(\bibinfo{number}{1}), pp. \bibinfo{pages}{25--31}, \doi{10.3103/S0027132219010054}.

\bibitemdeclare{incollection}{Popov89}
\bibitem{Popov89}
\bibinfo{author}{V.~M. \surnamestart Popov\surnameend} (\bibinfo{year}{1989}): \emph{\bibinfo{title}{Sequent formulations of paraconsistent logical systems}}.
\newblock In: {\slshape \bibinfo{booktitle}{Semantic and Syntactic Investigations of Non-Extensional Logics}}, \bibinfo{publisher}{Nauka}, pp. \bibinfo{pages}{285--289}.
\newblock \bibinfo{note}{In Russian}.

\bibitemdeclare{inproceedings}{Popov09}
\bibitem{Popov09}
\bibinfo{author}{V.~M. \surnamestart Popov\surnameend} (\bibinfo{year}{2009}): \emph{\bibinfo{title}{Between the logic par and the set of all formulas}}.
\newblock In: {\slshape \bibinfo{booktitle}{The Proceeding of the Sixth Smirnov Readings in Logic}}, \bibinfo{publisher}{Contemporary Notebooks}, pp. \bibinfo{pages}{93--95}.
\newblock \bibinfo{note}{In Russian}.

\bibitemdeclare{book}{Prawitz}
\bibitem{Prawitz}
\bibinfo{author}{D.~\surnamestart Prawitz\surnameend} (\bibinfo{year}{1965}): \emph{\bibinfo{title}{Natural Deduction}}.
\newblock \bibinfo{publisher}{Almqvist and Wiksell}, \bibinfo{address}{Stockholm, G{\"o}teborg, Uppsala}.

\bibitemdeclare{incollection}{Priest02}
\bibitem{Priest02}
\bibinfo{author}{G.~\surnamestart Priest\surnameend} (\bibinfo{year}{2002}): \emph{\bibinfo{title}{Paraconsistent logic}}.
\newblock In: {\slshape \bibinfo{booktitle}{Handbook of Philosophical Logic}}, \bibinfo{volume}{6}, \bibinfo{publisher}{Kluwer Academic Publishers}, pp. \bibinfo{pages}{287--393}, \doi{10.1007/978-94-017-0460-1_4}.

\bibitemdeclare{article}{SlupeckiPComp}
\bibitem{SlupeckiPComp}
\bibinfo{author}{E.~\surnamestart S{\l}upecki\surnameend}, \bibinfo{author}{J.~\surnamestart Bryll\surnameend} \& \bibinfo{author}{T.~\surnamestart Prucnal\surnameend} (\bibinfo{year}{1967}): \emph{\bibinfo{title}{Some remarks on the three-valued logic of J. {\L}ukasiewicz}}.
\newblock {\slshape \bibinfo{journal}{Studia Logica}} \bibinfo{volume}{21}, pp. \bibinfo{pages}{45--70}, \doi{10.1007/BF02123418}.

\bibitemdeclare{book}{Tennant78}
\bibitem{Tennant78}
\bibinfo{author}{N.~\surnamestart Tennant\surnameend} (\bibinfo{year}{1978}): \emph{\bibinfo{title}{Natural Logic}}.
\newblock \bibinfo{publisher}{Edinburgh University Press}, \bibinfo{address}{Edinburgh}.

\bibitemdeclare{book}{TroelstraSchwichtenberg}
\bibitem{TroelstraSchwichtenberg}
\bibinfo{author}{A.~S. \surnamestart Troelstra\surnameend} \& \bibinfo{author}{H.~\surnamestart Schwichtenberg\surnameend} (\bibinfo{year}{2000}): \emph{\bibinfo{title}{Basic Proof Theory}}, \bibinfo{edition}{second} edition.
\newblock {\slshape \bibinfo{series}{Cambridge Tracts in Theoretical Computer Science}}~\bibinfo{volume}{43}, \bibinfo{publisher}{Cambridge University Press}, \bibinfo{address}{Cambridge}, \doi{10.1017/CBO9781139168717}.

\bibitemdeclare{incollection}{Woodruff70}
\bibitem{Woodruff70}
\bibinfo{author}{P.~W. \surnamestart Woodruff\surnameend} (\bibinfo{year}{1970}): \emph{\bibinfo{title}{Logic and Truth Value Gaps}}.
\newblock In \bibinfo{editor}{Karel \surnamestart Lambert\surnameend}, editor: {\slshape \bibinfo{booktitle}{Philosophical Problems in Logic: Some Recent Developments}}, \bibinfo{publisher}{D. Reidel Publishing Company}, \bibinfo{address}{Dordrecht}, pp. \bibinfo{pages}{121--142}, \doi{10.1007/978-94-010-3272-8_6}.

\bibitemdeclare{article}{Woodruff84}
\bibitem{Woodruff84}
\bibinfo{author}{P.~W. \surnamestart Woodruff\surnameend} (\bibinfo{year}{1984}): \emph{\bibinfo{title}{On Supervaluations in Free Logic}}.
\newblock {\slshape \bibinfo{journal}{The Journal of Symbolic Logic}} \bibinfo{volume}{49}, pp. \bibinfo{pages}{943--950}, \doi{10.2307/2274148}.

\bibitemdeclare{article}{Zimmermann}
\bibitem{Zimmermann}
\bibinfo{author}{E.~\surnamestart Zimmermann\surnameend} (\bibinfo{year}{2002}): \emph{\bibinfo{title}{{Peirce's} rule in natural deduction}}.
\newblock {\slshape \bibinfo{journal}{Theoretical Computer Science}} \bibinfo{volume}{275}, pp. \bibinfo{pages}{561--574}, \doi{10.1016/S0304-3975(01)00296-1}.

\end{thebibliography}
\end{document}